\renewcommand{\Vec}[1]{\mbox{\boldmath $#1$}}
\newcommand{\dtv}{d_{\mathrm{TV}}}
\newcommand{\dkl}{D_{\mathrm{KL}}}
\newcommand{\rank}{\mathrm{rank}}
\newcommand{\conv}{\mathrm{conv}}
\newcommand{\Order}{\mathrm{O}}
\newcommand{\mys}{s}
\newtheorem{theorem}{Theorem}
\newtheorem{lemma}[theorem]{Lemma}
\newtheorem{proposition}[theorem]{Proposition}
\newtheorem{condition}{Condition}
\title{
 Sample Complexity of Identifying the Nonredundancy of Nontransitive Games in Dueling Bandits
 }
\author{%
  Shang Lu\\
  Graduate School of ISEE\\
  Kyushu University\\
  Fukuoka, Japan 819-0395 \\
  \texttt{lushang630@gmail.com} \\
   \And
  Shuji Kijima\thanks{This work is partly supported by JSPS KAKENHI Grant Numbers JP23K21645 and 
  JST ERATO Grant Number JPMJER2301, Japan.} \\
  Department of Data Science\\
  Shiga University\\
  Hikone, Japan 522-8522 \\
  \texttt{shuji-kijima@biwako.shiga-u.ac.jp} \\
}
\begin{document}

\maketitle

\begin{abstract}
 Dueling bandit is a variant of the Multi-armed bandit 
    to learn the binary relation by comparisons. 
 Most work on the dueling bandit 
   has targeted transitive relations,  
     that is, totally/partially ordered sets, 
   or assumed at least the existence of a champion 
     such as Condorcet winner and Copeland winner. 
 This work develops an analysis of 
   dueling bandits for \emph{non-transitive} relations.  
 Jan-ken (a.k.a.\ rock-paper-scissors) is a typical example of a non-transitive relation. 
 It is known that  
   a rational player chooses one of three items uniformly at random, 
  which is known to be Nash equilibrium in game theory.  
 Interestingly,  
   any variant of Jan-ken with four items (e.g., rock, paper, scissors, and well) 
   contains at least one useless item, which is never selected by a rational player. 
 This work investigates a dueling bandit problem 
   to identify whether all $n$ items are indispensable in a given win-lose relation. 
 Then, we provide upper and lower bounds of the sample complexity of the identification problem 
  in terms of the determinant of $A$ and a solution of $\Vec{x}^{\top} A = \Vec{0}^{\top}$ 
   where $A$ is an $n \times n$ pay-off matrix that every duel follows. 

\end{abstract}

\section{Introduction}\label{sec:intro}
\paragraph{Dueling bandits}
 The stochastic bandit is an online reinforcement learning model:  
    the learner repeats rounds of choosing one out of $n$-arms and receiving a stochastic reward 
     to maximize the total sum of rewards or to find the best arm. 
  Regret minimization and 
    the sample complexity of the best arm identification are significant topics
     \cite{Robbins52,LR85,Auer02,MT04,EMM06,ABM10,GC11,BC12,Urvoy13,JN14,GK16,KCG16}. 
  It has been extensively investigated 
    in machine learning, optimization, probability theory, etc., and 
    has many applications in the real world, 
    such as recommendation systems.  

 The \emph{Dueling bandit} problem, 
     introduced by Yue and Joachims \cite{YJ09}, 
    is a variant of the stochastic bandit 
   regarding the \emph{binary relation} of the rewards of arms instead of the reward itself. 
 The learner 
    repeats rounds of choosing \emph{a pair} of arms from $n$-arms and 
      receiving a stochastic result, indicating which arm provides more reward 
    to find the best arm, such as the \emph{Condorcet winner} and the \emph{Copeland winner}. 
 It is motivated by the real world, where 
   relative evaluations are more natural or frequent 
   than absolute evaluations employed in classical bandit problems.  
 In this context, 
   most work assumes the win-lose relation 
     to be transitive so that it is total or partial order 
   or to allow a champion, i.e., Condorcet or Copeland winners,  at least \cite{YJ09,YBKJ12,Urvoy13,Zoghi14,Komiyama15,Zoghi15,Komiyama16}. 
 This work focuses on dueling bandits 
    for a \emph{nontransitive} win-lose relation, 
   where the best arm no longer exists.

\begin{figure}[tbp]
  \centering
\begin{tabular}{cccc}
  \includegraphics[height=16mm]{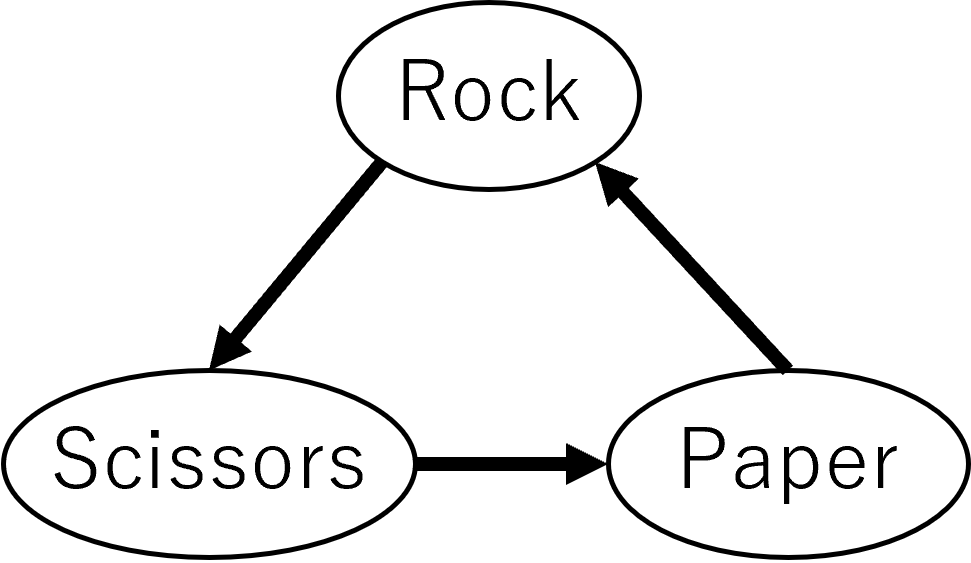} &
  \includegraphics[height=16mm]{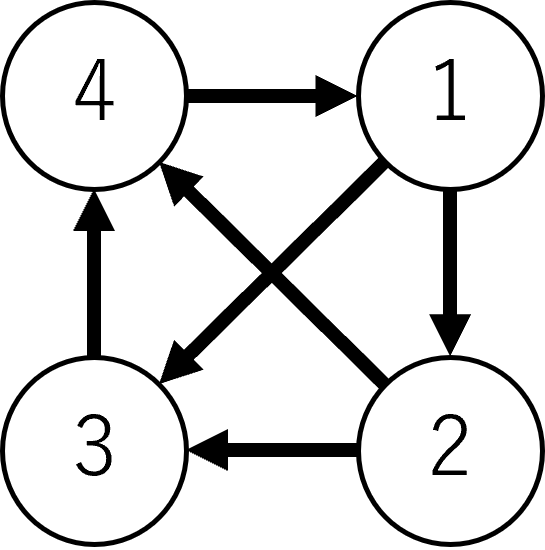} &
  \includegraphics[height=16mm]{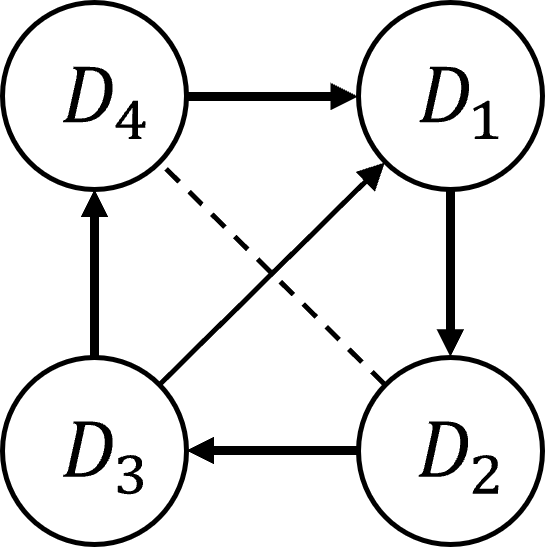} & 
  \includegraphics[height=16mm]{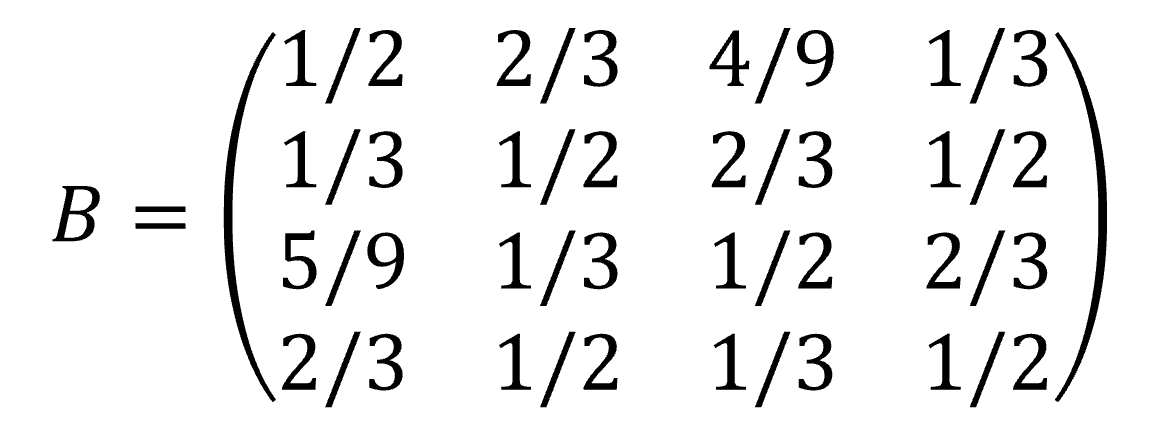}  \\
  (a) Jan-ken & (b) Extended Jan-ken & (c) Effron's dice & (d) Win. prob. matrix
\end{tabular}
  \caption{Examples of nontransitive win-lose relation.}\label{fig:1}
\end{figure}
\paragraph{Nontransitive win-lose relation} 
 Jan-ken, a.k.a. rock-paper-scissors, 
  is a game consisting of three items: rock, paper, and scissors.  
 Rock beats scissors, scissors beat paper, and paper beats rock.
 This is the simplest example of a \emph{nontransitive} relation. 
 Clearly, there is no strongest item (see Figure~\ref{fig:1}~(a)). 

 Similarly, we may consider an extended Jan-ken with \emph{four} items, say\footnote{
  Let $[n]$ denote $\{1,\ldots,n\}$ for a positive integer $n$. 
 } $[4]=\{1,2,3,4\}$. 
 For instance, let $1$ beats $2$ and $3$, $2$ beats $3$ and $4$, $3$ beats $4$, and $4$ beats $1$.  
 This variant also does not have the strongest item, but  
  we can observe that item $3$ is useless 
   because both $2$ and $3$ lose to $1$ and beats $4$, but $2$ beats $3$ (see Figure~\ref{fig:1} (b)). 
 Interestingly, 
  it is known that any win-lose relation on the set $[4]$ contains a useless move 
   unless allowing a tie-break between distinct $i,j$. 

 Another example is nontransitive dice.  
 Efron's dice is a set of four dice $D_1=(0,0,4,4,4,4)$, $D_2=(3,3,3,3,3,3)$, $D_3=(2,2,2,2,6,6)$,  $D_4=(1,1,1,5,5,5)$. 
 The win-lose relation is \emph{stochastic}; 
   roll $D_i$ and $D_j$, and 
   then  $D_i$ beats $D_j$ 
   if the cast of $D_i$ is larger than that of $D_j$.  
 Let $p_{ij}$ denote the probability that the cast of $D_i$ is greater than that of $D_j$. 
 We can observe that $p_{12} = p_{23} = p_{34} = p_{41} = 2/3$, and 
  the stochastic win-lose relation is nontransitive (see Figure~\ref{fig:1}~(c)). 
 Such a dice set is called a nontransitive dice~\cite{Gardner74,Savage94,Schaefer17a,Schaefer17b,LK22,Kim24}. 
 We can find many nontransitive games in the real world, and 
 the nontransitivity makes games nontrivial.

\paragraph{Game theory}
 It is appropriate to follow the terminology of game theory for further discussion. 
 A \emph{two-player zero-sum} game is characterized 
   by a \emph{pay-off matrix} $A = (a_{ij}) \in \mathbb{R}^{m \times n}$
   where $m$ and $n$ respectively represent the numbers of possible moves of row and column players; 
   if the row player selects move $i$ and the column player selects move $j$, then
   the row player gains a profit\footnote{
     Note that $a_{ij}$ can be negative. 
   } of $a_{ij}$ and the column player loses $a_{ij}$, i.e., gains $-a_{ij}$.

 We say $\Vec{x} = (x_1,\ldots,x_n)^{\top} \in \mathbb{R}^n$ is a \emph{mixed strategy} (or simply \emph{strategy}) 
   if $x_i \geq 0$ for $i=1,\ldots,n$ and $\sum_{i=1}^n x_i =1$ hold, 
   where $\Vec{x}$ represents the selection probability of moves $[n]$. 
 If a strategy 
  $\Vec{x}$ satisfies\footnote{
 For a pair of vectors 
   $\Vec{u} = (u_1,\ldots,u_n)^{\top} \in \mathbb{R}^n$ and 
   $\Vec{v} = (v_1,\ldots,v_n)^{\top} \in \mathbb{R}^n$,  
  let $\Vec{u} \geq \Vec{v}$ (resp.\ $\Vec{u} > \Vec{v}$) denote 
    that $u_i \geq v_i$ (resp.\ $u_i > v_i$) holds for any $i = 1,\ldots,n$, 
    $\Vec{u}^{\top} \geq \Vec{v}^{\top}$ and $\Vec{u}^{\top} > \Vec{v}^{\top}$ as well. 
  }  $\Vec{x} > \Vec{0}$, 
  then we say $\Vec{x}$ is \emph{completely} mixed, 
  where $\Vec{0}$ denotes the zero vector\footnote{
   We briefly mention to another special case:  
   a strategy $\Vec{x}$ is \emph{pure} if there exists $i \in \{1,\ldots,n\}$ such that $x_i = 1$. 
  A pure strategy is not the target of this work and we omit the detail.  
}.
 A row strategy $\Vec{x} \in \mathbb{R}^m$ is \emph{$v$-good} for $v \in \mathbb{R}$
   if $\Vec{x}^{\top} A \geq v \Vec{1}^{\top}$ holds
     where $\Vec{1}$ denotes the all one vector. 
 If $\Vec{x}$ is a $v$-good row strategy 
  then the row player's expected gain satisfies $\Vec{x}^{\top} A \Vec{y} \geq v$ 
    for any column strategy $\Vec{y} \in \mathbb{R}^n$, 
  which means that the row player gains at least $v$ in expectation for any column player's strategy.  
 Similarly, 
  a column strategy $\Vec{y}$ is $v'$-good 
   if $A \Vec{y} \leq v' \Vec{1}$ holds.  
  If $\Vec{y}$ is a $v'$-good, 
  then the expected loss of the row player ($=$ expected gain of the column player) 
   satisfies $\Vec{x}^{\top} A \Vec{y} \leq v'$ for any row strategy $\Vec{x}$. 
 It is known for any $A$ that 
   any pair of $v$-good row strategy and $v'$-good column strategy satisfy 
     $v \leq v'$ by the weak duality theorem of linear programming, and 
   $v=v'$ exists by the strong duality (see e.g., \cite{Matousek,von1947theory}). 
 A \emph{Nash equilibrium} 
  is a pair of $v$-good strategies $\Vec{x}$ and $\Vec{y}$,  
  which means that the row player (resp.\ column player) cannot 
   increase (resp.\ decrease) her expected gain (resp.\ loss) from $v$ (resp.\ $v$)
   if the other player is rational. 
 We call $v$ the \emph{game value} of $A$.

 A two-player zero-sum game is \emph{symmetric}  if $A$ is skew-symmetric, i.e., $A^{\top} = -A$ holds, 
   which is the target of this paper. 
 For instance, 
  the pay-off matrix of Jan-ken is given by 
\begin{align*}
  A = \begin{pmatrix}0 & 1 & -1\\-1 & 0 & 1\\1 & -1 &0 \end{pmatrix}
\end{align*}
  which is skew-symmetric. 
 It is not difficult to see that if $\Vec{x}$ is a $v$-good row strategy, it is also a $v$-good column strategy. 
 Thus, it is well-known for two-player zero-sum symmetric games 
   that the pair of $v$-good strategies $\Vec{x}$ and $\Vec{x}$ is a Nash equilibrium, and hence 
  the game value $v$ is zero. 
 We can observe that (the pair of) $(1/3,1/3,1/3)$ is a unique Nash equilibrium of Jan-ken. 

 For another example\footnote{
  We just mention the name of \emph{Colonel Blotto game} for another example (cf., \cite{Matousek}).
 }, 
 the winning probability matrix $B=(b_{ij})$ of Effron's dice is given in Figure~\ref{fig:1}~(d). 
 Let $A = (b_{ij} - \frac{1}{2})$, then $A$ is skew-symmetric, 
  meaning that Effron's dice is essentially regarded as a symmetric game:
 in fact,  its expected gain is 
  $\Vec{x}^{\top}B\Vec{y} 
   = \Vec{x}^{\top}A\Vec{y} + \Vec{x}\frac{1}{2}\mathbbm{1}\Vec{y} 
   = \Vec{x}A\Vec{y}+\frac{1}{2}$, 
   where $\mathbbm{1}$ denotes the $4 \times 4$ all one matrix.  
 Such a game is called a \emph{constant-sum game}. 
 Rump \cite{Rump01} showed that 
  the Nash equilibria of Effron's dice  
   form a line segment between $\Vec{x}= (0,1/2,0,1/2)$ and $\Vec{x}' = (3/7,1/7,3/7,0)$.  
 This means that 
  dice $D_1$ and $D_3$ are no longer used by players in the rational strategy $\Vec{x}$, and 
  neither is $D_4$ in $\Vec{x}'$.

\paragraph{Completely mixed Nash}
 We say a game $A$ is \emph{non-redundant} (or completely mixed) 
   if any Nash equilibrium of $A$ is completely mixed, 
   meaning that all moves are indispensable between rational players. 
 Kaplansky~\cite{Kaplansky45} 
  proved that 
   a game $A\in \mathbb{R}^{m \times n}$ is completely mixed if and only if 
   (1) $A$ is square (i.e., $m=n$) and has rank $n-1$, and 
   (2) all cofactors are different from zero and have the same sign (cf Thm. 5 in \cite{Kaplansky45}). 
 He also pointed out the following facts. 
\begin{theorem}[cf.\ Thm.~5 and Sec.~4 in \cite{Kaplansky45}]\label{thm:kap1}
 Let $A$ be a real $n \times n$ skew-symmetric matrix for $n \geq 2$. 
 Then, $A$ is completely mixed (i.e., non-redundant) only when $n$ is odd. 
 When $n$ is odd, $A$ is non-redundant if and only if 
  $\rank(A) = n-1$ and 
  $\Vec{x}^{\top} A = \Vec{0}^{\top}$ has a solution
   $\Vec{x} > \Vec{0}$. 
\end{theorem}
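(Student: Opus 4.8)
The plan is to reduce the statement to a linear-algebraic fact about the kernel of $A$, using the game-theoretic setup recalled in the introduction. Since $A$ is skew-symmetric, the game is symmetric, its game value is $0$, and the Nash equilibria are exactly the pairs of optimal strategies, where the optimal row strategies coincide with the optimal column strategies and equal the set of $0$-good strategies $\{\Vec{x} : \Vec{x} \geq \Vec{0},\ \Vec{1}^{\top}\Vec{x} = 1,\ \Vec{x}^{\top} A \geq \Vec{0}^{\top}\}$. Hence ``$A$ is non-redundant'' is equivalent to ``every $0$-good strategy is completely mixed.'' The first fact I would record is that, because $A$ is skew-symmetric, $\Vec{x}^{\top} A \Vec{x} = 0$ for every $\Vec{x}$; combined with $\Vec{x}^{\top} A \geq \Vec{0}^{\top}$ this forces $(\Vec{x}^{\top} A)_j = 0$ on the support of $\Vec{x}$. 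In particular a completely mixed optimal $\Vec{x}$ satisfies $\Vec{x}^{\top} A = \Vec{0}^{\top}$, equivalently $A\Vec{x} = \Vec{0}$ (using $A^{\top} = -A$).

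For the ``if'' direction, assume $\rank(A) = n-1$ and fix a solution $\Vec{x}_0 > \Vec{0}$ of $\Vec{x}^{\top} A = \Vec{0}^{\top}$, normalised so that $\Vec{1}^{\top}\Vec{x}_0 = 1$; then $\Vec{x}_0$ is optimal since $\Vec{x}_0^{\top} A = \Vec{0}^{\top} \geq \Vec{0}^{\top}$, and $\ker(A) = \ker(A^{\top})$ is one-dimensional and spanned by $\Vec{x}_0$. For any optimal $\Vec{z}$, the scalar $(\Vec{z}^{\top}A)\Vec{x}_0 = \Vec{z}^{\top}(A\Vec{x}_0) = 0$ is a sum of the nonnegative terms $(\Vec{z}^{\top}A)_j (\Vec{x}_0)_j$; since $(\Vec{x}_0)_j > 0$ for all $j$, each term vanishes, so $\Vec{z}^{\top}A = \Vec{0}^{\top}$, i.e.\ $\Vec{z} \in \myspan(\Vec{x}_0)$, and normalisation gives $\Vec{z} = \Vec{x}_0 > \Vec{0}$. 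So the unique Nash equilibrium is completely mixed.

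For the ``only if'' and the parity claim, assume $A$ is non-redundant. By the first fact, any (completely mixed) optimal $\Vec{x}_0$ satisfies $A\Vec{x}_0 = \Vec{0}$ with $\Vec{x}_0 > \Vec{0}$, which gives the stated solvability of $\Vec{x}^{\top}A = \Vec{0}^{\top}$ and $\rank(A) \leq n-1$. To upgrade to $\rank(A) = n-1$, argue by contradiction: if $\dim\ker(A) \geq 2$, take $\Vec{u} \in \ker(A)$ independent of $\Vec{x}_0$; replacing $\Vec{u}$ by $\Vec{u} - (\Vec{1}^{\top}\Vec{u})\Vec{x}_0$ keeps it in $\ker(A)$, independent of $\Vec{x}_0$, and with $\Vec{1}^{\top}\Vec{u} = 0$, so $\Vec{u}$ has a strictly negative coordinate; then for the largest $t^{*} > 0$ with $\Vec{x}_0 + t^{*}\Vec{u} \geq \Vec{0}$, the vector $\Vec{z} := \Vec{x}_0 + t^{*}\Vec{u}$ is a strategy with a zero coordinate and $A\Vec{z} = \Vec{0}$, hence $\Vec{z}^{\top}A = \Vec{0}^{\top} \geq \Vec{0}^{\top}$, i.e.\ an optimal strategy that is not completely mixed, contradicting non-redundancy. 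Finally, a real skew-symmetric matrix has even rank, so $\rank(A) = n-1$ forces $n$ odd; in particular no skew-symmetric $A$ with even $n$ can be non-redundant. (Alternatively, $\rank(A)=n-1$ and the parity claim follow from Kaplansky's cofactor criterion recalled above, since every principal cofactor of a skew-symmetric matrix is the determinant of an odd-order skew-symmetric matrix, hence zero when $n$ is even.)

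The skew-symmetry identities and the normalisations are routine; the crux is the reduction ``non-redundant $\Leftrightarrow$ every optimal strategy lies in the one-dimensional $\ker(A)$ and is positive'' — concretely, the complementary-slackness-style step showing that positivity of a single optimal $\Vec{x}_0$ pins down all optimal strategies, together with the boundary-perturbation construction that produces a non-completely-mixed equilibrium whenever $\ker(A)$ is too large.
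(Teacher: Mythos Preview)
Your proof is correct and self-contained. The paper does not actually prove this theorem: it is quoted from Kaplansky~\cite{Kaplansky45}, and the only argument the paper offers is the one-line remark that the parity claim ``comes from the fact that the determinant of any $k \times k$ skew-symmetric matrix is zero for any even $k$'' (which, incidentally, should read \emph{odd} $k$; the intended argument is that for even $n$ the principal cofactors of $A$ are determinants of odd-order skew-symmetric matrices, hence vanish, contradicting Kaplansky's cofactor criterion). Your parenthetical alternative for the parity claim is exactly this line of reasoning.

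Where the paper simply defers to Kaplansky's general cofactor characterisation of completely mixed games, you instead argue directly via complementary slackness ($\Vec{x}^{\top}A\Vec{x}=0$ forces $\Vec{x}^{\top}A=\Vec{0}^{\top}$ on the support of $\Vec{x}$) and the one-dimensionality of $\ker(A)$, together with a boundary-perturbation construction to rule out $\dim\ker(A)\geq 2$. This is the standard elementary route and has the advantage of not relying on the cofactor machinery; it also makes transparent why the unique Nash equilibrium is precisely the normalised kernel vector, a fact the paper uses repeatedly later (e.g.\ Lemma~\ref{lem:xA_i}). Your even-rank argument for the parity claim is likewise cleaner than the cofactor route.
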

 The former claim comes from 
   the fact that the determinant of any $k \times k$ skew-symmetric matrix is zero for any even $k$. 
 After 50 years, he in \cite{Kaplansky95} gave a proof of the following theorem 
   which characterizes skew-symmetric $A$ being non-redundant in terms of the principal Pfaffians of $A$, 
    which was already proved for $n=3$ in  \cite{Kaplansky45}. 
\begin{theorem}[Thm.~1 in \cite{Kaplansky95}]\label{thm:kap2}
 Let $A$ be a real $n \times n$ skew-symmetric matrix for an odd $n \geq 3$. 
 Then, $A$ is completely mixed if and only if the principle Pfaffians $p_1,\ldots,p_n$ of $A$ 
 are all nonzero and alternate in sign. In that case the unique good strategy for each player is proportional to 
 $\Vec{p} = (p_1,-p_2,\ldots,(-1)^{n-1}p_n)$.\footnote{
 Let $M_k = (m_{ij})$ be the $(n-1) \times (n-1)$ submatrix of $A$ formed by deleting the $k$-th row and column,  
 then the $k$-th principal Pfaffian is given by 
 $p_k = \frac{1}{2^n n!}\sum_{\sigma \in S_{2n}}\mathrm{sgn}(\sigma) \prod_{i=1}^n m_{\sigma(2i-1),\sigma(2i)}$
 where $S_{2n}$ denotes the symmetric group of degree $2n$. 
 Note that $(-1)^{i+j}p_i p_j$ is equal to the $(i,j)$-cofactor of $A$ (Lem.~1 in \cite{Kaplansky95}), 
  and hence $\Vec{p}^{\top}A = \Vec{0}^{\top}$ holds since $\det(A)=0$
    by a standard argument of linear algebra (see also \cite{Kaplansky45}). 
} 
\end{theorem}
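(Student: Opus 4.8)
The plan is to reduce the statement to Theorem~\ref{thm:kap1} by means of the cofactor identity quoted in the footnote, namely that the $(i,j)$-cofactor of $A$ equals $(-1)^{i+j}p_i p_j$. First I would record the two elementary consequences that drive everything. Since $n$ is odd, $\det(A)=\det(A^{\top})=\det(-A)=(-1)^n\det(A)=-\det(A)$, so $\det(A)=0$ and hence $\rank(A)\le n-1$. And the diagonal cofactor $C_{ii}$ is the determinant of the $(n-1)\times(n-1)$ skew-symmetric principal submatrix obtained by deleting row and column $i$, which is even-sized, so by the classical identity $\det=(\text{Pfaffian})^2$ it equals $p_i^2$. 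Therefore $\rank(A)=n-1$ if and only if $p_i\neq 0$ for at least one $i$.

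Next I would pin down the left kernel of $A$ when $\rank(A)=n-1$. From the cofactor identity the adjugate satisfies $\mathrm{adj}(A)_{ij}=C_{ji}=(-1)^{i-1}p_i\cdot(-1)^{j-1}p_j$, i.e. $\mathrm{adj}(A)=\Vec{p}\,\Vec{p}^{\top}$ with $\Vec{p}=(p_1,-p_2,\ldots,(-1)^{n-1}p_n)^{\top}$ exactly as in the statement (the overall sign being irrelevant). Since $\mathrm{adj}(A)\,A=\det(A)\,I=0$, each row $(-1)^{i-1}p_i\,\Vec{p}^{\top}$ of $\mathrm{adj}(A)$ is a left null vector of $A$; as some $p_i\neq 0$, this yields $\Vec{p}^{\top}A=\Vec{0}^{\top}$ with $\Vec{p}\neq\Vec{0}$, so the one-dimensional left kernel of $A$ is precisely $\myspan\{\Vec{p}\}$.

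The characterization now follows by bookkeeping. By Theorem~\ref{thm:kap1}, $A$ is completely mixed iff $\rank(A)=n-1$ and $\Vec{x}^{\top}A=\Vec{0}^{\top}$ admits a solution $\Vec{x}>\Vec{0}$. The first condition says some $p_i\neq 0$; given that, the second says some nonzero multiple of $\Vec{p}$ is strictly positive, i.e. all coordinates of $\Vec{p}$ are nonzero and share a common sign. Since coordinate $k$ of $\Vec{p}$ is $(-1)^{k-1}p_k$, this is equivalent to $p_k\neq 0$ for all $k$ and $p_k p_{k+1}<0$ for all $k$, i.e. the $p_k$ are nonzero and alternate in sign. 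For the uniqueness claim: the game value of a skew-symmetric payoff is $0$, so any good row strategy $\Vec{x}$ satisfies $\Vec{x}^{\top}A\ge\Vec{0}^{\top}$ and $\Vec{x}\ge\Vec{0}$ while $\Vec{x}^{\top}A\Vec{x}=0$ automatically; completely-mixedness forces $\Vec{x}>\Vec{0}$, so every term of $0=\Vec{x}^{\top}A\Vec{x}=\sum_j(\Vec{x}^{\top}A)_j x_j$ vanishes, giving $\Vec{x}^{\top}A=\Vec{0}^{\top}$; hence $\Vec{x}\in\myspan\{\Vec{p}\}$ and, being a probability vector, is the unique positive normalization of $\Vec{p}$, with the same conclusion for the column player by skew-symmetry.

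The main obstacle is the cofactor–Pfaffian identity $C_{ij}=(-1)^{i+j}p_i p_j$: its diagonal case is the familiar $\det=(\text{Pfaffian})^2$ formula, but the off-diagonal case is a mixed-Pfaffian (Jacobi-type) identity whose sign bookkeeping is the genuinely delicate point. I would invoke Lemma~1 of \cite{Kaplansky95} for it, or, for a self-contained treatment, derive it from the fact that $\mathrm{adj}(A)$ is a symmetric rank-one matrix (using $\mathrm{adj}(A)^{\top}=\mathrm{adj}(-A)=(-1)^{n-1}\mathrm{adj}(A)$ and $\mathrm{adj}(A)_{ii}=p_i^2\ge 0$) together with a specialization argument to fix the relative signs of its coordinates. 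Everything after that identity is an assembly of Theorem~\ref{thm:kap1} and elementary linear algebra.
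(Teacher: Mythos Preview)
The paper does not give a proof of this theorem; it is simply quoted from \cite{Kaplansky95} as background and never argued in the text. So there is nothing in the paper to compare against.

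That said, your argument is correct and is essentially the expected reduction to Theorem~\ref{thm:kap1} via the cofactor--Pfaffian identity from the footnote. The one step that is stated a bit quickly is ``$\rank(A)=n-1$ iff some $p_i\neq 0$'': the forward direction is not immediate from the existence of a nonzero $(n-1)\times(n-1)$ minor, since a priori that minor need not be principal. You do have the ingredients for it scattered in the last paragraph (the adjugate is symmetric because $n$ is odd, has rank one, and has nonnegative diagonal $p_i^2$, so not all diagonal entries can vanish), but it would read more cleanly if you moved that observation up to where the claim is first made. Everything else---the identification $\mathrm{adj}(A)=\Vec{p}\Vec{p}^{\top}$, the translation of ``$\Vec{p}$ has a strictly positive multiple'' into ``the $p_k$ are nonzero and alternate in sign,'' and the complementary-slackness argument for uniqueness---is clean and correct.
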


\paragraph{Problem and contribution}
 While most work on dueling bandits is concerned with the ``strongest'' item, 
   this paper focuses on dueling bandits for \emph{nontransitive relations}. 
 We are concerned with the sample complexity of the dueling bandit 
   to identify whether a given matrix $A \in \mathbb{R}^{n \times n}$ is completely mixed. 
 Since the answer is always no for any even $n$ due to Kaplansky \cite{Kaplansky45,Kaplansky95}, 
   this paper is concerned with only odd $n$.  
  In fact, the case of $n=3$ is easy, and we are mainly involved in the case of odd $n \geq 5$. 

 Our dueling bandit setting essentially follows the work of Maiti et al.~\cite{maiti2023instance}, described as follows: 
 Given an unknown $n \times n$ skew-symmetric matrix $A = (a_{ij})$ for an odd $n \geq 3$. 
 We assume that every $a_{ij}$ is finite, namely $a_{ij} \in [-1,1]$, for simplicity of descriptions. 
 A learner lets all pairs of $\{i,j\} \in \binom{[n]}{2}$ duel in a round 
  where $\binom{[n]}{2}$ denotes the set of all pairs of elements in $[n]$, and 
  receives results $X_{ij}$ where each result independently follows 
    a \emph{sub-Gaussian} distribution with mean $a_{ij}$ and variance at most 1. 
 Repeating rounds, the learner decides whether the given matrix $A$ is non-redundant. 
 We refer to the number of rounds required for the decision as sample complexity. 

 We give an upper bound of the sample complexity  
   $\Order\!\left(\frac{\varphi(A)^2}{\max\{\alpha^2,\pi_{\min}^2\}} \log \frac{n}{\delta}\right)$ 
    of an $(\alpha,\delta)$-PAC algorithm for the problem 
  where $\varphi(A)$ is a parameter intuitively related to $1/\det(A)$, 
    $\pi_{\min} = \min_{i \in [n]} \pi_i$ for the unique Nash equilibrium $\Vec{\pi} =(\pi_1,\ldots,\pi_n)$ of non-redundant $A$, 
    $\alpha$ is a prescribed margin parameter, and 
    $\delta$ is the confidence level. 
 We also give lower bounds 
  $\Omega\!\left(\frac{1}{\alpha^2}\log \frac{1}{\delta}\right)$ for any $n$ and 
  $\Omega\!\left(\varphi(A)^2 \log \frac{1}{\delta}\right)$ for each $n = 5, 7, \ldots,19$, 
  which provide  
  $\Omega\!\left(\max\{\frac{1}{\alpha^2}, \varphi(A)^2 \}\log \frac{1}{\delta}\right)$ for each $n = 5, 7, \ldots,19$. 
 Though there is some gap between the upper and lower bounds, 
 our result suggests 
  the possible involvement of $\varphi(A)$ to 
    the sample complexity in the identification of non-redundancy in the game. 
 As far as we know, 
  this is the first result of the problem. 

The work of Maiti et al.~\cite{maiti2023instance} is closely related. 
They investigated 
  the sample complexity of finding an $\epsilon$-Nash equilibrium (see Section \ref{sec:term} for definition) 
   of $2 \times 2$ zero-sum games.  
 They derived an instance-dependent lower bound on the sample complexity.
 This paper focuses on the non-redundancy of games for three or more moves, and 
  our result is incomparable with \cite{maiti2023instance}.

\paragraph{Other related work}
 The sample complexity is a central issue in stochastic multi-armed bandits. 
 \citet{MT04} gave the lower bound of best arm identification 
   $\Omega((n/\epsilon^2) \log (1/\delta))$. 
 \citet{EMM06} 
   proved that the upper bound of the sample complexity matches the lower bound by \cite{MT04} 
  by giving the successive elimination algorithm.

 \citet{YJ09} introduced the dueling bandit framework 
  featuring pairwise comparisons as actions. 
 \citet{YBKJ12} gave a regret lower bound of $\Omega(n\log{T})$ 
   for the $n$-armed dueling bandit problem of $T$ rounds 
   assuming strongly stochastic transitivity. 
 \citet{Urvoy13} proposed the SAVAGE algorithm 
  and gave an instance-dependent upper bound of the sample complexity 
  $\sum_{i=1}^n \Order\!\left(\frac{1}{\Delta_i^2}\log \frac{n}{\delta \Delta_i^2} \right)$ 
  where $\Delta_i$ is the local independence parameter. 
 \citet{Zoghi14} gave an upper bound of regret bound, 
  which matches the lower bound by \cite{YBKJ12} 
  assuming the Condorcet winner. 
 \citet{Komiyama15} further analyzed this lower bound and 
   determined the optimal constant factor for models adhering to the Condorcet assumption and 
   assuming the Condorcet winner arm. 
 \citet{Zoghi15} investigated regret minimization
   concerning the Copeland winner. 
 \citet{Komiyama16} gave an asymptotic regret lower bound 
   based on the minimum amount of exploration for identifying a Copeland winner.

 There are several works on dueling bandits from the viewpoint of game theory. 
 \citet{Ailon14} gave some reduction algorithms from dueling bandit to multi-armed bandit. 
 \citet{Zhou17} initiated the study of identifying the pure strategy Nash equilibrium (PSNE) 
   of a two-player zero-sum matrix game with stochastic results 
  and gave a lower bound of the sample complexity $\Omega(H_1\log(1/\delta))$ where 
   $H_1 = \sum_{i \neq i_*} \frac{1}{(A_{i_*,j_*}-A_{i,j_*})^2} + \sum_{j \neq j_*} \frac{1}{(A_{i_*,j_*}-A_{i_*,j})^2}$ 
   for the PSNE $(i_*,j_*)$ of $A$. 
 \citet{maiti2023instance} investigated the sample complexity of 
   identifying an $\epsilon$-Nash Equilibrium in a two-player zero-sum $2 \times n$ game 
   and provided near-optimal instance-dependent bounds, including 
   the gaps between the entries of the matrix, and the difference between the value of the game and reward received from playing a sub-optimal row. 
 \citet{maiti2023log} extended the techniques of \cite{maiti2023instance} 
   to identify the support of the Nash equilibrium in $m \times n$ games, 
   but the bounds are sub-optimal.
 \citet{maiti24a} investigated the sample complexity of 
   identifying the PSNE in $m \times n$ games and  
   designed a near-optimal algorithm whose sample complexity matches the lower bound by \cite{Zhou17}, up to log factors. 
 \citet{ILTW25} studied a more general class of two-player zero-sum games and 
   derived a regret upper bound $\Order(\sqrt{T} + \frac{m+n}{c} \log T)$
   where $c$ is a game-specific constant dependent on the pay-off structure, 
  and gave a regret lower bound of $\Omega(\log T)$ in cases where the game admits a PSNE.
  While most work focuses on PSNE, 
 \citet{Dudik15} discussed the mixed Nash strategy, as the name of von Neumann winner, 
  of a two-player zero-sum game and gave three algorithms for regret minimization. 

\section{Preliminary}\label{sec:prelim}
\subsection{Terminology}\label{sec:term}
 This paper is concerned with a \emph{two-player zero-sum symmetric game}, 
  which is characterized by an $n \times n$ skew-symmetric matrix $A \in [-1,1]^{n \times n}$. 
 For convenience, we define 
\begin{align*}
 \mathcal{S}_n &= \left\{ \Vec{x} \in \mathbb{R}^n \mid \textstyle \sum_{i=1}^n x_i = 1 \right\}, \\
 \mathcal{S}_n^+ &= \left\{ \Vec{x} \in \mathcal{S}_n \mid  \Vec{x} \geq \Vec{0} \right\}, \quad \mbox{and} \\
 \mathcal{S}_n^{++} &= \left\{ \Vec{x} \in \mathcal{S}_n \mid  \Vec{x} >\Vec{0} \right\}. 
\end{align*}
 Any $\Vec{x} \in \mathcal{S}_n^+$ is called a \emph{mixed strategy} (or simply \emph{strategy}) of $A$. 
 A strategy $\Vec{\pi} \in \mathcal{S}_n^+$ is a \emph{Nash equilibrium} of $A$ 
  if it satisfies for any strategy $\Vec{y} \in \mathcal{S}_n^+$ 
   that $\Vec{\pi}^{\top} A \Vec{y} \geq 0$.  
 We say
  a Nash equilibrium $\Vec{\pi}$ is \emph{completely mixed} if it satisfies $\Vec{\pi} \in \mathcal{S}_n^{++}$. 
 We say a game $A$ is \emph{non-redundant} (or completely mixed) 
   if any Nash equilibrium of $A$ is completely mixed. 
 A non-redundant $A$ is completely characterized by Theorems~\ref{thm:kap1} and \ref{thm:kap2} due to Kaplansky\cite{Kaplansky45,Kaplansky95}.

We define the \emph{$\epsilon$-Nash polytope} of $A$ by 
\begin{align}
P_A(\epsilon) 
  = \left\{ \Vec{x} \in \mathcal{S}_n \mid \Vec{x}^{\top}A \geq -\epsilon \Vec{1}^{\top}  \right\}
\label{def:PA}
\end{align}
 for $\epsilon > 0$,  
   where $\Vec{1}=(1,\ldots,1)^{\top}\in \mathbb{R}^n$. 
 We say that $\Vec{x}$ is an \emph{$\epsilon$-Nash equilibrium} of $A$ if $\Vec{x} \in P_A(\epsilon)$. 
 We remark that 
  if $\Vec{x}$ is an $\epsilon$-Nash equilibrium of $A$ then 
  $\Vec{x}^{\top} A \Vec{y} \geq -\epsilon $ holds for any $\Vec{y} \in \mathcal{S}_n^+$. 
 We will use the following fact later. 
\begin{lemma}[cf.\ \cite{maiti2023instance}]\label{eq:quad}
 If $\Vec{x} \in \mathcal{S}_n $ satisfies $|\Vec{x}^{\top}A\Vec{y}| \leq \epsilon$ 
  for any $\Vec{y} \in \mathcal{S}_n^+$ 
 then $\Vec{x} \in P_A(\epsilon)$. 
\end{lemma}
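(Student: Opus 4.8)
The plan is to test the hypothesis against the vertices of the simplex $\mathcal{S}_n^+$. First I would unpack what membership in $P_A(\epsilon)$ actually requires: by \eqref{def:PA} it asks for two things, namely $\Vec{x} \in \mathcal{S}_n$, which is part of the hypothesis already, and the coordinatewise inequality $(\Vec{x}^{\top}A)_j \geq -\epsilon$ for every $j \in [n]$. So the entire content of the lemma is to extract this coordinatewise lower bound from the single scalar bound $|\Vec{x}^{\top}A\Vec{y}| \leq \epsilon$ that is assumed to hold uniformly over $\Vec{y} \in \mathcal{S}_n^+$.

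The key step is to specialize $\Vec{y}$ to the standard basis vector $\Vec{e}_j = (0,\ldots,0,1,0,\ldots,0)^{\top}$ with the $1$ in position $j$. Each such $\Vec{e}_j$ is nonnegative and has entries summing to $1$, hence $\Vec{e}_j \in \mathcal{S}_n^+$, so the hypothesis applies and gives $|\Vec{x}^{\top}A\Vec{e}_j| \leq \epsilon$. Since $\Vec{x}^{\top}A\Vec{e}_j$ is exactly the $j$-th entry of the row vector $\Vec{x}^{\top}A$, this reads $-\epsilon \leq (\Vec{x}^{\top}A)_j \leq \epsilon$; in particular $(\Vec{x}^{\top}A)_j \geq -\epsilon$. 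Letting $j$ range over $1,\ldots,n$ yields $\Vec{x}^{\top}A \geq -\epsilon\Vec{1}^{\top}$, and combined with $\Vec{x} \in \mathcal{S}_n$ this is precisely $\Vec{x} \in P_A(\epsilon)$.

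There is no genuine obstacle here; the only points that warrant a sentence of care are that the hypothesis ranges over all of $\mathcal{S}_n^+$, so the vertices $\Vec{e}_j$ are legitimate test vectors, and that $P_A(\epsilon)$ imposes only the lower bound $-\epsilon$, so the two-sided absolute-value hypothesis is in fact stronger than what is needed. Equivalently, one may phrase the argument by observing that $\Vec{y} \mapsto \Vec{x}^{\top}A\Vec{y}$ is linear, so its minimum over the polytope $\mathcal{S}_n^+$ is attained at a vertex $\Vec{e}_j$, and the vertex evaluation is the same computation; I would keep the direct version since it is a one-liner.
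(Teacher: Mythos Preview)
Your proof is correct and essentially the same as the paper's: both test against the standard basis vectors $\Vec{e}_j \in \mathcal{S}_n^+$ to reduce the scalar bound to the coordinatewise inequality defining $P_A(\epsilon)$. The only cosmetic difference is that the paper phrases it as a contrapositive (if some coordinate violates $(\Vec{x}^{\top}A)_i \geq -\epsilon$, then $\Vec{y}=\Vec{e}_i$ witnesses $|\Vec{x}^{\top}A\Vec{y}|>\epsilon$), whereas you argue directly.
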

\begin{proof}
 We prove the contraposition. 
 Suppose $\Vec{x} \not\in P_A(\epsilon)$, 
  which means that there exists $i \in [n]$ such that $(\Vec{x}^{\top}A)_i < -\epsilon$. 
 Let $\Vec{y} = \Vec{e}_i$ where $\Vec{e}_i$ ($i=1,\ldots,n$) denote the unit basis, 
  meaning that $\Vec{e}_i = (e_{i,1},\ldots,e_{i,n})$ is given by $e_{i,i}=1$ and $e_{i,j}=0$ for $i \neq j$. 
 Then, $|\Vec{x}^{\top}A\Vec{y}| = |(\Vec{x}^{\top}A)_i| > \epsilon$. 
 We obtain the claim.  
\end{proof}

\subsection{Assumptions}\label{sec:assumption}
\paragraph{On the pay-off matrix $A$}
 We are concerned with 
   a skew-symmetric matrix $A \in [-1,1]^{n \times n}$ 
    for an odd $n \geq 3$. 
 By Theorem~\ref{thm:kap1}, 
  we know $A$ is non-redundant only when 
  $n \geq 3$ is odd and $\rank(A)=n-1$. 
 We also remark on another fact that 
 a non-redundant $A$ must have a solution of   
  $\Vec{x}^{\top} A = \Vec{0}^{\top}$ such that $\sum_{i=1}^n x_i \neq 0$ 
  since its Nash equilibrium $\Vec{\pi}$ satisfies $\Vec{\pi}^{\top} A = \Vec{0}^{\top}$ and $\sum_{i=1}^n \pi_i =1$. 
 Those conditions are summarized as follows, and 
   we basically assume an \emph{unknown} input matrix $A$ satisfies it. 
\begin{condition}\label{cond:A}
 $A \in [-1,1]^{n \times n}$ for an odd $n \geq 3$ 
  is skew-symmetric,  
  satisfies $\rank(A)=n-1$ and 
  has a solution of $\Vec{x}^{\top} A = \Vec{0}^{\top}$ such that $\sum_{i=1}^n x_i \neq 0$. 
\end{condition}

 For Condition~\ref{cond:A}, we remark the following fact, 
  which indicates that it is natural to assume that the rank of a random skew-symmetric matrix is $n-1$ 
  (see Section~\ref{apx:skew} for a proof). 
\begin{proposition}\label{prop:skew}
Let $q$ be a positive integer. 
Let $A = (a_{ij}) \in [-1,1]^{n \times n}$ be a random skew-symmetric matrix, 
 where $q a_{ij}$ for $i<j$ is independently uniformly distributed over integers between $-q$ and $q$, 
 $a_{ij}$ for $i>j$ are given by $a_{ij} = -a_{ji}$, 
 and diagonals are zero. 
Then, $\rank(A)=n-1$ almost surely asymptotic to $q \to \infty$. 
\end{proposition}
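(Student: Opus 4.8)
The plan is to combine the deterministic fact that an odd-order skew-symmetric matrix is singular with a Schwartz--Zippel argument showing that a suitable $(n-1)\times(n-1)$ submatrix of $A$ is nonsingular with probability tending to $1$. Throughout I read ``$\rank(A)=n-1$ almost surely asymptotic to $q\to\infty$'' as $\lim_{q\to\infty}\Pr[\rank(A)=n-1]=1$.

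First I would record the upper bound $\rank(A)\le n-1$, which holds for every realization: from $A^{\top}=-A$ and $n$ odd we get $\det(A)=\det(A^{\top})=\det(-A)=(-1)^n\det(A)=-\det(A)$, so $\det(A)=0$. (Consistently, skew-symmetric matrices have even rank over $\mathbb{R}$, and $n-1$ is even.) It therefore suffices to show that, with probability $\to 1$, some $(n-1)\times(n-1)$ minor of $A$ is nonzero. I would fix the leading principal submatrix $M$ obtained by deleting row $n$ and column $n$. It is skew-symmetric of even order $n-1$, so $\det(M)=\mathrm{Pf}(M)^2$, where $\mathrm{Pf}(M)$ is the Pfaffian of $M$; thus $M$ is nonsingular if and only if $\mathrm{Pf}(M)\neq 0$.

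The key point is that $\mathrm{Pf}(M)$ is a polynomial of total degree $(n-1)/2$ in the $\binom{n-1}{2}$ independent variables $\{a_{ij}:1\le i<j\le n-1\}$ and is \emph{not} identically zero: evaluating at the block-diagonal skew-symmetric matrix whose $(2k-1,2k)$-entries equal $1$ for $k=1,\dots,(n-1)/2$ and whose other above-diagonal entries are $0$ gives $\mathrm{Pf}=1\neq 0$. Since each $q\,a_{ij}$ with $i<j$ is drawn independently and uniformly from the set $\{-q,-q+1,\dots,q\}$ of size $2q+1$, applying the Schwartz--Zippel lemma to $\mathrm{Pf}(M)$ (viewed as a polynomial over $\mathbb{Q}$ in the rescaled integer variables $q\,a_{ij}$, which has the same degree) yields
\[
\Pr[\mathrm{Pf}(M)=0]\ \le\ \frac{(n-1)/2}{2q+1}.
\]
Hence $\Pr[\rank(A)=n-1]\ \ge\ \Pr[\det(M)\neq 0]\ =\ \Pr[\mathrm{Pf}(M)\neq 0]\ \ge\ 1-\frac{(n-1)/2}{2q+1}\ \longrightarrow\ 1$ as $q\to\infty$, which is the claim.

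The routine ingredients are the identity $\det(M)=\mathrm{Pf}(M)^2$ and the standard Schwartz--Zippel bound, which I would simply cite. The only place needing a little care is verifying that $\mathrm{Pf}(M)$ is a nonzero polynomial, settled by the explicit block-diagonal witness above; I would also note that the per-$q$ estimate $\Order(1/q)$ is not summable, so no Borel--Cantelli-type statement along the full sequence is claimed — only convergence in probability, which is the natural reading here.
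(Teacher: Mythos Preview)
Your proof is correct and takes a genuinely different route from the paper's. You fix a single $(n-1)\times(n-1)$ principal submatrix $M$, observe that its determinant equals $\mathrm{Pf}(M)^2$, and apply the Schwartz--Zippel lemma to the Pfaffian (a nonzero polynomial of degree $(n-1)/2$ in the independent upper-triangular entries) to get $\Pr[\rank(A)<n-1]\le \Pr[\mathrm{Pf}(M)=0]\le (n-1)/(2(2q+1))$. The paper instead appeals to Carlitz's exact count of skew-symmetric matrices of each rank over a finite field $F_q$ and computes the proportion of rank $n-1$, arriving at a bound of the shape $1-(n-1)/(2q^3)$. Your argument is more elementary and entirely self-contained---no external enumeration formula is needed---and it works directly in the integer-lattice model described in the statement (entries $q a_{ij}$ uniform in $\{-q,\dots,q\}$, rank over $\mathbb{Q}$). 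The paper's approach yields a sharper error term in $q$, at the cost of importing Carlitz's theorem and implicitly identifying the stated model with the uniform finite-field model. Both bounds tend to $1$ as $q\to\infty$, so both establish the proposition; your remark that only convergence of the probabilities (not an almost-sure statement along a coupled sequence) is being claimed is also on point.
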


\paragraph{On the results of duels}
 We also assume the following condition on the result $X_{ij}$ of a duel our learner receives. 
\begin{condition}\label{cond:X}
 As given an unknown matrix $A=(a_{ij})$, 
  the result $X_{ij}$ of a dual follows 1-sub-Gaussian with mean $a_{ij}$.  
 All results are mutually independent. 
\end{condition}
 Here,  $Z$ is a random variable following \emph{$\sigma^2$-sub-Gaussian} 
  if $\mathbb{P}[|Z -E[Z]| \geq c] \leq 2 \exp(-\frac{c^2}{2\sigma^2})$ holds for all $c \geq 0$ cf.~\cite{Wainwright19}. 
 For instance, the Bernoulli distribution with parameter $p$ is 1/4-sub-Gaussian for any $p \in [0,1]$.
 For another instance,  
  a version of Bernoulli distribution where $Z=1$ with probability $p$ and $Z=-1$ with probability $1-p$ 
    is 1-sub-Gaussian for any $p \in [0,1]$. 
 We will use the following inequality later. 
\begin{theorem}[Hoeffding inequality, cf. \cite{Wainwright19}]\label{thm:Hoeff}
Suppose $Z_i$ for $i=1,\ldots,n$ are iid 1-sub-Gaussian with mean $\mu$. Then, for all $c \geq 0$, 
\begin{align*}
\mathbb{P}\left[ \left| \tfrac{\sum_{i=1}^n Z_i}{n} - \mu \right| \geq c \right] \leq 2\exp\left(-\tfrac{c^2}{2}n \right). 
\end{align*}
\end{theorem}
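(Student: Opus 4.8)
The plan is to run the textbook Chernoff (exponential-moment) argument on the centered sum $S_n := \sum_{i=1}^n (Z_i - \mu)$. The key property I would use is the moment generating function form of $1$-sub-Gaussianity, namely $\mathbb{E}\bigl[e^{\lambda(Z-\mu)}\bigr] \le e^{\lambda^2/2}$ for every $\lambda \in \mathbb{R}$; this is the characterization used in \cite{Wainwright19}, and I would treat it as the operative definition here. (Deriving it from the tail-bound formulation written just before Theorem~\ref{thm:Hoeff} is possible, but the standard reduction only gives $\mathbb{E}[e^{\lambda(Z-\mu)}] \le e^{C\lambda^2}$ for some universal constant $C$, which degrades the constant in the exponent. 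Reconciling the two formulations with the sharp constant $1/2$ is the only delicate point in the whole proof, and I would sidestep it by adopting the MGF form.)

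First I would establish closure under independent sums: since the $Z_i$ are independent,
\[
\mathbb{E}\bigl[e^{\lambda S_n}\bigr] = \prod_{i=1}^n \mathbb{E}\bigl[e^{\lambda(Z_i-\mu)}\bigr] \le \prod_{i=1}^n e^{\lambda^2/2} = e^{n\lambda^2/2}
\]
for all $\lambda \in \mathbb{R}$, so $S_n$ is $n$-sub-Gaussian in the MGF sense. Next, for $\lambda > 0$ and $c \ge 0$, Markov's inequality applied to $e^{\lambda S_n}$ gives
\[
\mathbb{P}[S_n \ge nc] = \mathbb{P}\bigl[e^{\lambda S_n} \ge e^{\lambda n c}\bigr] \le e^{-\lambda n c}\,\mathbb{E}\bigl[e^{\lambda S_n}\bigr] \le \exp\!\left(-\lambda n c + \tfrac{n\lambda^2}{2}\right),
\]
and minimizing the exponent over $\lambda > 0$ (the minimizer is $\lambda = c$) yields $\mathbb{P}[S_n \ge nc] \le \exp(-nc^2/2)$.

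Finally I would symmetrize: each $-Z_i$ is again $1$-sub-Gaussian with mean $-\mu$, so the same bound applied to $-S_n$ gives $\mathbb{P}[S_n \le -nc] \le \exp(-nc^2/2)$. Since $\bigl\{\,\bigl|\tfrac1n\sum_{i=1}^n Z_i - \mu\bigr| \ge c\,\bigr\} = \{S_n \ge nc\} \cup \{S_n \le -nc\}$, a union bound over the two one-sided events gives $\mathbb{P}\bigl[\,\bigl|\tfrac1n\sum_{i=1}^n Z_i - \mu\bigr| \ge c\,\bigr] \le 2\exp(-nc^2/2)$, which is exactly the stated inequality. Every step other than the choice of sub-Gaussian definition is routine, so I expect the write-up to be short.
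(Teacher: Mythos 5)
Your proof is correct: the paper itself offers no proof of Theorem~\ref{thm:Hoeff} (it is imported from \cite{Wainwright19}), and your Chernoff/MGF argument is exactly the standard derivation that the citation points to. You are also right to flag the one delicate point, namely that the paper's tail-probability definition of sub-Gaussianity does not by itself yield the MGF bound with the sharp constant $\tfrac{1}{2}$; adopting the MGF characterization as the operative definition, as \cite{Wainwright19} does, is the appropriate resolution.
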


\subsection{The sample complexity of $3 \times 3$ game}
 We briefly mention the sample complexity of $3 \times 3$ game, 
  which is relatively easy compared with the case of $n \geq 5$. 
 Every skew-symmetric $3 \times 3$ matrix is described by 
\begin{align*}
A=
\begin{pmatrix}
 0 & a &  -b \\
 -a & 0 & c \\
 b & -c & 0 \\
\end{pmatrix}.
\end{align*}
We can observe that $(c,b,a) A = \Vec{0}^{\top}$ holds, thus 
$A$ is non-redundant if\footnote{
  Notice that $\rank(A)=2$ since its eigenvalues are $0$ and $\pm \mathrm{i}\sqrt{a^2+b^2+c^2}$ where $ \mathrm{i}$ is the imaginary unit. 
  } and only if 
$a,b,c$ has the same sign, i.e., $a, b, c > 0$ or  $a, b, c < 0$ (cf.\ Sec.~4 in \cite{Kaplansky45}). 
 Since the probabilities in the Nash equilibrium $\frac{1}{|a+b+c|}(|c|,|b|,|a|)$ directly link to the entries of matrix $A$, 
 we obtain its sample complexity by a standard argument (see Section~\ref{apx:3x3} for proof). 
\begin{theorem}\label{thm:3x3}
The sample complexity is $\Theta(\frac{1}{\Delta^2}\log \frac{1}{\delta})$ where $\Delta = \min\{|a|,|b|,|c|\}$. 
\end{theorem}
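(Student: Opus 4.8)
The plan is to prove matching upper and lower bounds for the sample complexity of deciding non-redundancy of a $3\times 3$ skew-symmetric $A$, exploiting the explicit observation that $A$ is non-redundant iff $a,b,c$ all share the same (nonzero) sign, where $\Delta=\min\{|a|,|b|,|c|\}$. Since the three off-diagonal entries $a,b,c$ are estimated directly and independently from the duel results, the problem reduces to three scalar sign-testing problems, and the bottleneck parameter is the smallest magnitude $\Delta$.

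For the upper bound, I would run the obvious algorithm: in each round, observe $X_{12},X_{23},X_{13}$ (so $X_{12}\to a$, $X_{23}\to c$, $X_{13}\to -b$), maintain the running empirical means $\hat a,\hat c,-\hat b$, and stop once all three estimates are bounded away from $0$ by a confidence radius. Concretely, after $t$ rounds, Hoeffding (Theorem~\ref{thm:Hoeff}) gives $\mathbb{P}[|\hat a - a|\ge c]\le 2\exp(-c^2 t/2)$, and similarly for $b,c$; a union bound over the three entries makes the total failure probability at most $6\exp(-c^2 t/2)$. Setting the confidence radius to be a constant fraction of the current gap and stopping when $t=\Theta(\frac{1}{\Delta^2}\log\frac{1}{\delta})$ ensures that with probability $\ge 1-\delta$ every sign is determined correctly; the algorithm then declares non-redundant iff $\mathrm{sgn}(\hat a)=\mathrm{sgn}(\hat b)=\mathrm{sgn}(\hat c)$. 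This yields the $\Order(\frac{1}{\Delta^2}\log\frac{1}{\delta})$ upper bound. One has to be slightly careful about the stopping rule since $\Delta$ is unknown, but a standard doubling/anytime-confidence-interval argument (stop when all three intervals exclude $0$) handles this with only constant overhead.

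For the lower bound, I would use a standard change-of-distribution (Le Cam / KL) argument. Fix a hard instance pair: e.g.\ take $A$ with $a=b=c=\Delta>0$ (non-redundant) versus $A'$ with $a=-\Delta$, $b=c=\Delta$ (redundant, since the signs differ); these two games differ only in the law of $X_{12}$, whose mean moves from $\Delta$ to $-\Delta$, a shift of $2\Delta$. Any $(\alpha,\delta)$-PAC algorithm must distinguish them, so by the standard lower-bound lemma (via Pinsker and the divergence decomposition over the per-round observations of the single differing coordinate) the expected number of rounds $T$ must satisfy $T\cdot \dkl(\mathcal{N}(-\Delta,1)\,\|\,\mathcal{N}(\Delta,1)) = T\cdot 2\Delta^2 \gtrsim \log\frac{1}{\delta}$, giving $T=\Omega(\frac{1}{\Delta^2}\log\frac{1}{\delta})$.

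The main obstacle — and it is minor here — is making the lower bound instance genuinely require $1/\Delta^2$ samples on the coordinate realizing the minimum: one must ensure the two instances are respectively a yes- and a no-instance while differing by exactly the $\Delta$-scale on one entry, and that the chosen sub-Gaussian noise model (e.g.\ the $\{+1,-1\}$-valued Bernoulli, which is $1$-sub-Gaussian) actually makes the per-round KL divergence $\Theta(\Delta^2)$ rather than something smaller; using Gaussian noise or the $\pm 1$ Bernoulli both work, and I would spell out the KL computation for whichever noise family the paper commits to in Condition~\ref{cond:X}. The remaining details (anytime confidence intervals for the upper bound, the divergence-decomposition lemma for the lower bound) are routine and deferred to the appendix.
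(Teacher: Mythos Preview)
Your proposal is correct and follows essentially the same route as the paper: the upper bound via Hoeffding plus a union bound over the three entries with a $\hat\Delta$-based stopping rule, and the lower bound via a two-instance change-of-measure where only the entry of smallest magnitude is flipped in sign (the paper uses Bretagnolle--Huber with Gaussian noise $\mathcal N(\pm\Delta,1)$, giving exactly your $\dkl=2\Delta^2$). The only cosmetic differences are that the paper commits to Gaussian noise and Bretagnolle--Huber rather than Pinsker, and uses an explicit $\hat\Delta$-threshold stopping rule rather than the anytime/doubling variant you sketch.
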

 We remark that 
  the Nash equilibrium, that is, a solution of $\Vec{x}^{\top} A = \Vec{0}^{\top}$, 
   links to the entries of $A$ with an affine transformation in the case of $n \geq 5$, 
  which makes our algorithm and analysis described in the following sections difficult.

\section{Identification Whether $A$ Is Non-redundant}\label{sec:upper}
 This section presents 
   an algorithm to identify whether $A$ is non-redundant 
   and proves an upper bound of the sample complexity for $n \geq 5$. 
 The idea behind our algorithm is as follows: 
 When $\|A -\hat{A}\|_{\infty}$ is small enough, 
  then it is natural to expect that a Nash equilibrium $\hat{\Vec{\pi}}$ of $\hat{A}$ 
    approximates the Nash equilibrium $\Vec{\pi}$ of $A$. 
 It is ideal if   
   $\hat{\Vec{\pi}} \in \mathcal{S}_n^{++} \Leftrightarrow \Vec{\pi} \in \mathcal{S}_n^{++}$ holds, 
  but it is not true. 
 To estimate how $\hat{\Vec{\pi}}$ approximates  $\Vec{\pi}$, 
   we use the $\epsilon$-Nash polytope $P_{\hat{A}}(\epsilon)$ of $\hat{A}$.

\subsection{Understanding the $\epsilon$-Nash polytope  --- as a preliminary step}
 To explain an intuition of the parameters appearing in our algorithm and theorem, 
 this section establishes Lemmas ~\ref{lem:xA_i} and \ref{lem:vertex} 
  respectively about the solution of  $\Vec{x}^{\top}A=\Vec{0}^{\top}$ and 
  the $\epsilon$-Nash polytope $P_A(\epsilon)$, as a preliminary step. 

 Let $A_j$ for $j=1,\ldots,n$ be 
 the matrix formed by replacing the $j$-th column of $A$ with the column vector~$\Vec{1}$. 
For instance, 
\begin{align*}
A_1 = 
  \begin{pmatrix}
   1 & a_{12} &\cdots & a_{1n} \\
   \vdots& \vdots&  & \vdots \\
   1 & a_{n2} & \cdots & a_{nn} 
  \end{pmatrix}.
\end{align*}
 Firstly, we remark the following fact which implies that 
  Condition~\ref{cond:A} ensures $A_j^{-1}$ for $j=1,\ldots,n$ 
  (see Section~\ref{apx:upper} for a proof of Lemma~\ref{lem:rankAj}). 
\begin{lemma}\label{lem:rankAj}
Suppose $\rank(A) = n-1$. 
$A_j$ is non-singular  for all $j \in \{1,\ldots,n\}$
if and only if $\exists \Vec{x} \neq \Vec{0}$ such that $\Vec{x}^{\top}A=\Vec{0}^{\top}$ and $\sum_{i=1}^n x_i \neq 0$. 
\end{lemma}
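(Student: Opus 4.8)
The plan is to argue both directions via the solution space of $\Vec{x}^{\top} A = \Vec{0}^{\top}$. Since $\rank(A) = n-1$, the left null space of $A$ is one-dimensional; fix a nonzero spanning vector $\Vec{z}$, so that $\Vec{x}^{\top} A = \Vec{0}^{\top}$ holds exactly for $\Vec{x} \in \myspan\{\Vec{z}\}$. The column space of $A$ then has dimension $n-1$, and it is precisely the orthogonal complement of $\Vec{z}$, i.e.\ $\{\Vec{v} : \Vec{z}^{\top}\Vec{v} = 0\}$.

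For the ``if'' direction: suppose $\sum_{i} z_i \neq 0$, i.e.\ $\Vec{z}^{\top}\Vec{1} \neq 0$. Fix $j$ and suppose toward a contradiction that $A_j$ is singular, so there is $\Vec{w} \neq \Vec{0}$ with $A_j \Vec{w} = \Vec{0}$. Writing out $A_j \Vec{w}$ column by column, $A_j$ differs from $A$ only in column $j$, where $\Vec{1}$ replaces the $j$-th column of $A$; hence $A_j \Vec{w} = A \Vec{w} - w_j (A\Vec{e}_j) + w_j \Vec{1} = \Vec{0}$, which rearranges to $A(\Vec{w} - w_j \Vec{e}_j) = -w_j \Vec{1}$. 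If $w_j = 0$ this forces $A \Vec{w} = \Vec{0}$ with $\Vec{w}$ in the right null space (which also has dimension $1$, spanned by $\Vec{z}$ since $A$ is skew-symmetric); but then $\Vec{w} \parallel \Vec{z}$ and $w_j = 0$ would give $z_j = 0$ for that $j$ — this does not immediately contradict anything, so I instead note the cleaner route below. If $w_j \neq 0$, then $\Vec{1}$ lies in the column space of $A$, so $\Vec{z}^{\top}\Vec{1} = 0$, contradicting our hypothesis. So the only escape is $w_j = 0$ together with $A\Vec{w} = \Vec{0}$; but then $\Vec{w} \in \myspan\{\Vec{z}\}$, and the $j$-th coordinate of $\Vec{w}$ vanishing would just say $z_j = 0$ — which is consistent, so I need a sharper argument: observe that if $z_j = 0$ for some $j$ but still $\sum_i z_i \neq 0$, the vector $\Vec{w} = \Vec{z}$ satisfies $A_j \Vec{w} = A \Vec{z} - z_j \Vec{1} = \Vec{0}$ only if $z_j \Vec{1} = \Vec{0}$... wait, $A^{\top}\Vec{z} = \Vec{0}$ gives $-A\Vec{z} = \Vec{0}$ by skew-symmetry, so $A\Vec{z} = \Vec{0}$ and indeed $A_j \Vec{z} = \Vec{0} - z_j \Vec{1} = -z_j \Vec{1}$, which is $\Vec{0}$ iff $z_j = 0$. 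Thus $A_j$ singular with the specific null vector $\Vec{z}$ happens iff $z_j = 0$; but $A_j$ could be singular through a different null vector only via $w_j \neq 0$, handled above. Hence, to make the ``if'' direction go through I will show that when $\sum_i z_i \neq 0$ no coordinate $z_j$ can vanish — this is false in general, so the correct statement must be that $A_j$ is non-singular \emph{for all} $j$ iff the spanning null vector has \emph{all} coordinates nonzero, which under $\rank(A)=n-1$ and skew-symmetry is equivalent to the stated condition by Theorem~\ref{thm:kap1}'s surrounding discussion. I will therefore reorganize: the determinant expansion gives $\det(A_j) = \sum_i z_i \cdot (\text{cofactor})$ relations, so $\det(A_j) \neq 0$ for every $j$ precisely when $\Vec{z}^{\top}\Vec{1} \neq 0$ \emph{and} $\Vec{z}$ has no zero entry; but in fact a standard cofactor computation shows $\det(A_j) = (-1)^{j+1}(\textstyle\sum_i z_i)\, z_j / c$ up to a nonzero normalizing constant $c$ (using that the $(i,k)$-cofactor of the skew-symmetric $A$ equals $\pm z_i z_k/c$, cf.\ the Pfaffian footnote after Theorem~\ref{thm:kap2}), so $\det(A_j)\neq 0$ for all $j$ $\iff$ $\sum_i z_i \neq 0$ and all $z_j \neq 0$ $\iff$ $\sum_i z_i \neq 0$ (the latter equivalence because, for $\rank(A)=n-1$ skew-symmetric, any single $z_j = 0$ would make the remaining $n-1$ coordinates solve a smaller skew-symmetric system forcing $\Vec{z} = \Vec{0}$, a contradiction — this is the key linear-algebra observation).

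For the ``only if'' direction: suppose $A_j$ is non-singular for all $j$, in particular for $j=1$. Set $\Vec{x} = A_1^{-1}\Vec{1}$, which is well-defined and nonzero. Then $A_1 \Vec{x} = \Vec{1}$; reading the first coordinate of this identity (the first row of $A_1$ is $(1, a_{12}, \dots, a_{1n})$) and comparing with $A\Vec{x}$, one gets $(A\Vec{x})_1 = 1 - x_1 + x_1 = $ ... more directly, $A_1 \Vec{x} = A\Vec{x} - x_1 A\Vec{e}_1 + x_1 \Vec{1} = \Vec{1}$, so $A\Vec{x} = (1-x_1)\Vec{1} + x_1 A\Vec{e}_1$. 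This is not obviously the null space; instead I will transpose the whole setup, working with $A^{\top} = -A$: the cleaner path is to use $\det(A_j) \neq 0$ for all $j$ to conclude via the cofactor identity that the null vector $\Vec{z}$ has all coordinates nonzero and $\Vec{z}^{\top}\Vec{1} \neq 0$, giving the desired $\Vec{x} := \Vec{z}$ (scaled) with $\Vec{x}^{\top} A = \Vec{0}^{\top}$ and $\sum_i x_i \neq 0$.

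\textbf{Main obstacle.} The genuine difficulty is bookkeeping the relation between $\det(A_j)$, the cofactors of the skew-symmetric $A$, and the coordinates of the left-null vector $\Vec{z}$ — specifically establishing the identity $\det(A_j) = (\pm)\,(\sum_i z_i)\,z_j / c$ for a uniform nonzero constant $c$, and the subsidiary fact that under $\rank(A) = n-1$ a skew-symmetric $A$ cannot have a null vector with a zero coordinate (so that ``$\sum z_i \neq 0$'' alone already forces all $z_j \neq 0$). Both are standard but require care with signs and with the even/odd structure of skew-symmetric minors; I expect to invoke the Pfaffian/cofactor relation already quoted in the footnote to Theorem~\ref{thm:kap2} to shortcut the sign analysis.
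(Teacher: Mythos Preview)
Your plan is far more elaborate than the paper's. For $(\Rightarrow)$ the paper gives a one-line contrapositive: if every left-null vector $\Vec{c}$ has $\sum_i c_i = 0$, then $\Vec{c}^{\top}(\Vec{1},A)=\Vec{0}^{\top}$, so $\Vec{c}$ annihilates every $A_j$ from the left and each $A_j$ is singular --- no cofactor or Pfaffian machinery needed. For $(\Leftarrow)$ the paper just observes that $\Vec{1}$ is independent of the column space of $A$ and then asserts that ``the claim is easy from a standard argument of linear algebra.''

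You rightly sensed that this last step is \emph{not} automatic: $A_j$ being non-singular needs not only $\Vec{1}\notin\mathrm{col}(A)$ but also that deleting column $j$ does not drop the column rank, which (using skew-symmetry, so the right null vector is again $\Vec{z}$) is exactly $z_j\neq 0$. Your attempted fix, however, is wrong. The assertion that ``for $\rank(A)=n-1$ skew-symmetric, any single $z_j=0$ would \ldots\ force $\Vec{z}=\Vec{0}$'' fails already at $n=3$: take
\[
A=\begin{pmatrix}0&1&-1\\-1&0&0\\1&0&0\end{pmatrix},
\]
which is skew-symmetric of rank $2$ with left null vector $\Vec{z}=(0,1,1)^{\top}$ (so $\sum_i z_i=2\neq 0$), yet $A_1$ has two equal rows and is singular. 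Your ``smaller skew-symmetric system'' heuristic breaks because the $(n-1)\times(n-1)$ principal submatrix obtained by deleting row and column $j$ has even size and need not be invertible (here it is the zero matrix). This counterexample in fact shows that the $(\Leftarrow)$ direction of the lemma, as written, does not hold without the extra hypothesis that every $z_j\neq 0$; the paper's hand-wave does not supply this, and your Pfaffian/cofactor route cannot supply it either, because the implication it would have to establish is false.
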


 Next, the following lemma gives the solution of $\Vec{x}^{\top}A=\Vec{0}^{\top}$ using $A_j^{-1}$.
\begin{lemma}\label{lem:xA_i}
 Suppose $A \in [-1,1]^{n \times n}$ satisfies Condition~\ref{cond:A}. 
 Let $\Vec{\pi} \in \mathcal{S}_n$ satisfy $\Vec{\pi}^{\top}A = \Vec{0}^{\top}$. 
 Then, 
\begin{align}
 \Vec{\pi}^{\top} = \Vec{e}_j^{\top}A_j^{-1} 
\label{eq:x1}
\end{align}
  holds for each $j = 1,\ldots,n$.
\end{lemma}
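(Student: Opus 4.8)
The plan is to verify directly that $\Vec{\pi}^{\top} A_j$ equals the $j$-th standard basis row vector $\Vec{e}_j^{\top}$, and then to invert $A_j$.

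First I would note that Condition~\ref{cond:A} supplies exactly the two hypotheses required by Lemma~\ref{lem:rankAj}: namely $\rank(A) = n-1$, together with the existence of some $\Vec{x} \neq \Vec{0}$ satisfying $\Vec{x}^{\top}A = \Vec{0}^{\top}$ and $\sum_{i=1}^n x_i \neq 0$. Consequently $A_j$ is non-singular for every $j \in \{1,\ldots,n\}$, so $A_j^{-1}$ exists and it suffices to show $\Vec{\pi}^{\top} A_j = \Vec{e}_j^{\top}$; right-multiplying that identity by $A_j^{-1}$ then gives \eqref{eq:x1}.

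Next I would compute $\Vec{\pi}^{\top} A_j$ entry by entry. Write $\Vec{a}_k$ for the $k$-th column of $A$; by definition the $k$-th column of $A_j$ is $\Vec{a}_k$ when $k \neq j$ and is $\Vec{1}$ when $k = j$. Hence, for $k \neq j$, the $k$-th coordinate of $\Vec{\pi}^{\top} A_j$ is $\Vec{\pi}^{\top}\Vec{a}_k = (\Vec{\pi}^{\top}A)_k = 0$ by the hypothesis $\Vec{\pi}^{\top}A = \Vec{0}^{\top}$; and the $j$-th coordinate is $\Vec{\pi}^{\top}\Vec{1} = \sum_{i=1}^n \pi_i = 1$ since $\Vec{\pi} \in \mathcal{S}_n$. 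Collecting these coordinates gives $\Vec{\pi}^{\top} A_j = \Vec{e}_j^{\top}$, which is what was needed.

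The argument is short and I do not anticipate any real obstacle; the only points deserving care are (i) confirming that the hypotheses of Lemma~\ref{lem:rankAj} genuinely follow from Condition~\ref{cond:A}, so that each $A_j$ is invertible, and (ii) observing that the displayed formula also shows that such a $\Vec{\pi}$, if it exists, is unique — consistent with $\rank(A) = n-1$ forcing the left null space of $A$ to be one-dimensional and the side condition $\sum_{i=1}^n x_i \neq 0$ permitting normalization within $\mathcal{S}_n$.
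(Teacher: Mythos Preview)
Your proof is correct and follows essentially the same approach as the paper: verify $\Vec{\pi}^{\top} A_j = \Vec{e}_j^{\top}$ directly from $\Vec{\pi}^{\top}A = \Vec{0}^{\top}$ and $\sum_i \pi_i = 1$, then invoke Lemma~\ref{lem:rankAj} to invert $A_j$. The paper additionally spells out the existence and uniqueness of $\Vec{\pi}$ via $\dim(\ker A)=1$ before the computation, but this is exactly the content of your remark~(ii).
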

\begin{proof}
 $\rank(A) = n-1$ implies that $\dim(\ker(A))=1$. 
 Let $\Vec{c} \in \ker(A) \setminus \{\Vec{0}\}$, i.e., 
   $\Vec{c}^{\top}A=\Vec{0}^{\top}$ and $\Vec{c} \neq \Vec{0}$. 
 Let $\Vec{\pi} = \frac{1}{\sum_{i=1}^n c_i}\Vec{c}$, 
   where $\sum_{i=1}^n c_i \neq 0$ by Condition~\ref{cond:A}. 
 Then, $\Vec{\pi}$ is the (unique) solution of $\Vec{x}^{\top}A=0$ and $x_1 + \cdots + x_n = 1$. 
 Now, it is not difficult to observe that 
 $ \Vec{\pi}^{\top}A_j = \Vec{e}_j^{\top}$.  
 Recall $A_j$ is non-singular by Lemma~\ref{lem:rankAj}. 
\end{proof}

The following lemma presents the vertices of $P_A(\epsilon)$. 
\begin{lemma}\label{lem:vertex}
 Suppose $A \in [-1,1]^{n \times n}$ satisfies Condition~\ref{cond:A}. 
Let
\begin{align}
 \Vec{v}_j^{\top} 
  &= \Vec{\pi}^{\top}-\epsilon(\Vec{1}^{\top}A_j^{-1}-\Vec{\pi}^{\top}) 
\label{eq:v1}
\end{align}
for any $j=1,\ldots,n$ where $\Vec{\pi}$ is given by \eqref{eq:x1}. Then, 
\begin{align*}
P_A(\epsilon) 
  = \conv\{ \Vec{v}_1, \ldots, \Vec{v}_n \}
\end{align*}
 where $\conv\ \! S$ denotes the convex hull of $S \subseteq \mathbb{R}^n$ (see, e.g., \cite{Matousek}). 
\end{lemma}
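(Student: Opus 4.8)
The plan is to show the two sets are equal by a double inclusion, using the characterization of $P_A(\epsilon)$ as the solution set of the $n$ linear inequalities $(\Vec{x}^\top A)_i \geq -\epsilon$ together with the single equality $\sum_i x_i = 1$. Since this polytope sits inside the affine hyperplane $\mathcal{S}_n$, it is $(n-1)$-dimensional, and a bounded polyhedron cut out by $n$ facet inequalities in an $(n-1)$-dimensional affine space is a simplex with exactly $n$ vertices, each obtained by making $n-1$ of the inequalities tight. So the first step is to identify, for each $j$, the point $\Vec{v}_j$ obtained by setting $(\Vec{x}^\top A)_i = -\epsilon$ for all $i \neq j$ (leaving the $j$-th inequality slack) and imposing $\sum_i x_i = 1$, and to verify that this system has the unique solution given by \eqref{eq:v1}.

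The key computation is the following. Consider the $n$ equations consisting of $(\Vec{x}^\top A)_i = -\epsilon$ for $i \neq j$ and $\sum_i x_i = 1$. In matrix form this is $\Vec{x}^\top A_j = \Vec{w}_j^\top$, where $A_j$ is $A$ with its $j$-th column replaced by $\Vec{1}$, and $\Vec{w}_j$ is the vector with $1$ in coordinate $j$ and $-\epsilon$ elsewhere, i.e.\ $\Vec{w}_j^\top = \Vec{e}_j^\top - \epsilon(\Vec{1}^\top - \Vec{e}_j^\top) = (1+\epsilon)\Vec{e}_j^\top - \epsilon \Vec{1}^\top$. Since $A_j$ is nonsingular by Lemma~\ref{lem:rankAj}, the unique solution is $\Vec{x}^\top = \Vec{w}_j^\top A_j^{-1} = (1+\epsilon)\Vec{e}_j^\top A_j^{-1} - \epsilon \Vec{1}^\top A_j^{-1}$. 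Invoking Lemma~\ref{lem:xA_i}, $\Vec{e}_j^\top A_j^{-1} = \Vec{\pi}^\top$, so this equals $(1+\epsilon)\Vec{\pi}^\top - \epsilon \Vec{1}^\top A_j^{-1} = \Vec{\pi}^\top - \epsilon(\Vec{1}^\top A_j^{-1} - \Vec{\pi}^\top)$, which is exactly \eqref{eq:v1}. This shows each $\Vec{v}_j$ is the unique intersection of the $j$-th family of $n-1$ hyperplanes, and in particular $\Vec{v}_j \in \mathcal{S}_n$ and satisfies the remaining inequality $(\Vec{v}_j^\top A)_j \geq -\epsilon$ (one should check this last point: since $A$ is skew-symmetric, $\Vec{\pi}^\top A = \Vec{0}^\top$ and summing $(\Vec{v}_j^\top A)_i$ over all $i$ gives $\Vec{v}_j^\top A \Vec{1} = -(\Vec{v}_j^\top A)^\top \cdot$ wait—use $\sum_i (\Vec{v}_j^\top A)_i = \Vec{v}_j^\top A \Vec{1}$ and skew-symmetry with $\Vec{v}_j \in \mathcal{S}_n$ to conclude the $j$-th slack is nonnegative, so $\Vec{v}_j \in P_A(\epsilon)$).

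From here the inclusion $\conv\{\Vec{v}_1,\dots,\Vec{v}_n\} \subseteq P_A(\epsilon)$ is immediate since $P_A(\epsilon)$ is convex and contains each $\Vec{v}_j$. For the reverse inclusion, I would argue that $\{\Vec{v}_1,\dots,\Vec{v}_n\}$ is affinely independent (so its convex hull is a full-dimensional simplex in $\mathcal{S}_n$) and that any $\Vec{x} \in P_A(\epsilon)$ lies in this simplex. The cleanest route: write $\Vec{x} \in P_A(\epsilon)$ in barycentric-type coordinates. Let $t_i := (\Vec{x}^\top A)_i + \epsilon \geq 0$ be the slacks; since $\Vec{x} \in \mathcal{S}_n$ and $A$ is skew-symmetric, $\sum_i t_i = \Vec{x}^\top A \Vec{1} + n\epsilon = n\epsilon$ (using $\Vec{1}^\top A \Vec{x} = -\Vec{x}^\top A \Vec{1}$... this needs care—actually $\Vec{x}^\top A \Vec{1}$ need not vanish, but $\Vec{x}^\top A \Vec{1} = -\Vec{1}^\top A \Vec{x}$ and one can relate it to $\sum_i x_i$ via another identity; alternatively bound $\sum_i t_i$ directly). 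Then set $\lambda_j := t_j / (\text{their sum})$ and verify $\sum_j \lambda_j \Vec{v}_j = \Vec{x}$ by applying $A_j$-type relations, i.e.\ check the identity after multiplying by each $A_i$.

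The main obstacle I anticipate is the bookkeeping around the normalization of the slacks—pinning down $\sum_i t_i$ exactly and confirming the convex combination reconstructs $\Vec{x}$—together with establishing affine independence of the $\Vec{v}_j$ (equivalently, that the $\epsilon$-Nash polytope is genuinely $(n-1)$-dimensional and not degenerate under Condition~\ref{cond:A}). A slicker alternative that sidesteps the reconstruction identity is to use the general polyhedral fact that an $(n-1)$-dimensional bounded polyhedron defined by exactly $n$ inequalities is the convex hull of the $n$ points obtained by dropping one inequality at a time, provided each such point is feasible and the $n$ defining hyperplanes are in general position; then all that remains is to verify boundedness of $P_A(\epsilon)$ (which follows since it lies in $\mathcal{S}_n$ and the inequalities $\Vec{x}^\top A \geq -\epsilon\Vec{1}^\top$ together with $\sum_i x_i = 1$ leave no recession direction, because any recession direction $\Vec{d}$ would satisfy $\Vec{d}^\top A \geq \Vec{0}^\top$, $\sum_i d_i = 0$, forcing $\Vec{d}^\top A \Vec{d} = 0$ and hence $\Vec{d}^\top A = \Vec{0}^\top$ by skew-symmetry, so $\Vec{d} \in \ker(A^\top) \cap \{\sum d_i = 0\} = \{\Vec{0}\}$ by Condition~\ref{cond:A}).
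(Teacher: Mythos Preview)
Your derivation of the $\Vec{v}_j$ via $\Vec{x}^\top A_j = (1+\epsilon)\Vec{e}_j^\top - \epsilon\Vec{1}^\top$ together with Lemma~\ref{lem:xA_i} is exactly the paper's computation; the paper then simply asserts that ``by a standard argument of linear algebra'' these $n$ points are the vertices of $P_A(\epsilon)$ and stops, leaving feasibility, boundedness, and the convex-hull identity unargued. Your additional work is an attempt to supply what the paper omits.

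There is, however, a genuine gap in your recession-direction argument: the step ``$\Vec{d}^\top A \Vec{d} = 0$ and hence $\Vec{d}^\top A = \Vec{0}^\top$ by skew-symmetry'' fails, because $\Vec{d}^\top A \Vec{d} = 0$ holds for \emph{every} $\Vec{d}$ when $A$ is skew-symmetric, so it carries no information. A valid argument instead pairs $\Vec{d}^\top A \geq \Vec{0}^\top$ with the identity $(\Vec{d}^\top A)\Vec{\pi} = 0$ (from $A\Vec{\pi} = \Vec{0}$), which forces $\Vec{d}^\top A = \Vec{0}^\top$ \emph{only when} $\Vec{\pi} > \Vec{0}$; the same positivity is what makes your feasibility check go through, since one computes $(\Vec{v}_j^\top A)_j = \epsilon(1-\pi_j)/\pi_j$, and this is $\geq -\epsilon$ iff $\pi_j > 0$. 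Under Condition~\ref{cond:A} alone $\Vec{\pi}$ may have a negative entry, and then $P_A(\epsilon)$ can actually be unbounded (e.g.\ $n=3$ with $a_{12}=a_{13}=a_{23}=1$ gives $\Vec{\pi}=(1,-1,1)$ and $\Vec{d}=(1,0,-1)$ is a nonzero recession direction), so neither your argument nor any other can close the gap at the stated generality. The paper only invokes this lemma inside Lemma~\ref{lem:P_epsilon}, where the extra hypothesis $\Vec{\pi}\in\mathcal{S}_n^{++}$ is in force and repairs all of the above.
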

\begin{proof}
Recall \eqref{def:PA}, that is 
$P_A(\epsilon)$ is described by the following $n$ inequalities and one equality: 
\begin{align*}
 & x_1 a_{1i}+x_2 a_{2i}+\cdots+x_n a_{ni} \geq -\epsilon \ \mbox{for $i \in [n]$, and }\\
 & x_1+x_2+\cdots+x_n=1. 
\end{align*}
By a standard argument of linear algebra, we can see for each $j=1,\ldots,n$  that 
\begin{align}
\begin{array}{ll}
 & x_1 a_{1i}+x_2 a_{2i}+\cdots+x_n a_{ni}= -\epsilon \ \mbox{for $i \in [n] \setminus \{j\}$, and }\\[0.5ex]
 & x_1+x_2+\cdots+x_n=1
\end{array}
\label{form:P}
\end{align}
gives a vertex of $P_A(\epsilon)$. 
\eqref{form:P} is described by 
 $\Vec{x}^{\top}A_j 
  = -\epsilon \sum_{i \neq j} \Vec{e}_i^{\top} + \Vec{e}_j^{\top}
  = -\epsilon(\Vec{1}-\Vec{e}_j)^{\top} + \Vec{e}_j^{\top}$. 
Since $A_j$ is non-singular by Lemma~\ref{lem:rankAj}, the solution is given by 
  $\Vec{x}^{\top} 
   = (-\epsilon(\Vec{1}-\Vec{e}_j)^{\top} + \Vec{e}_j^{\top})A_j^{-1}
   = -\epsilon(\Vec{1}-\Vec{e}_j)^{\top}A_j^{-1} + \Vec{\pi}^{\top}
   = -\epsilon(\Vec{1}^{\top}A_j^{-1}-\Vec{\pi}^{\top}) + \Vec{\pi}^{\top}$ 
 where we used $\Vec{e}_j^{\top} A_j^{-1} = \Vec{\pi}^{\top}$ by Lemma~\ref{lem:xA_i}. 
We obtain the claim. 
\end{proof}

\subsection{Algorithm and theorem}

\begin{algorithm}[t]
\caption{Identify if $A$ is non-redundant}\label{alg:cm}
\nl $\hat{A} = (\hat{a}_{ij})_{n \times n}$, $Z_{ij} \gets 0$; \\
\nl $T\gets \lceil \frac{2U^2}{\alpha^2} \log \frac{2n^2}{\delta} \rceil + 1$;\\
\nl \For{$t = 1,2,\ldots,T $}{
\nl  \For{$\{i,j\} \in \binom{[n]}{2}$}{
  \nl   receive the result $X_{ij}$ of a duel between $i$ and $j$ according to unknown $A$; \\
  \nl   $Z_{ij} \gets Z_{ij} + X_{ij}$, $\hat{a}_{ij} \gets \frac{Z_{ij}}{t}$, $\hat{a}_{ji} \gets -\hat{a}_{ij}$;
  }
\nl  Compute $\hat{\Vec{\pi}} = \Vec{e}_1^{\top} \hat{A}_1^{-1}$;\\
\nl  Compute $ \varphi(\hat{A})= 
  \underset{j \in [n]}{\max}\ \! \underset{i \in [n]}{\max} 
    \left| \left( \Vec{1}^{\top} \hat{A}_j^{-1} -\hat{\Vec{\pi}} \right)_i \right|$;\\
\nl  \If{
    $t > \frac{2\varphi(\hat{A})^2}{\hat{\pi}_{\min}^2}\log \frac{2n^2}{\delta}$ and 
           $\hat{\pi}_{\min} >0$}{
\nl            Conclude ``$A$ is non-redundant'' and terminate Algorithm~\ref{alg:cm};  \hfill(a) 
            }
\nl  \If{$t > \frac{2\varphi(\hat{A})^2}{\alpha^2}\log \frac{2n^2 }{\delta}$}{
\nl    $\epsilon \gets 
     \frac{\alpha}{\varphi(\hat{A})}$;\\
\nl    Compute $\hat{\Vec{v}}_j^{\top} = \hat{\Vec{\pi}}^{\top}-\epsilon(\Vec{1}^{\top}\hat{A}_j^{-1} - \hat{\Vec{\pi}}^{\top})$ 
       for all $j=1,2,\ldots,n$;\\
\nl    \If{$\hat{v}_{j,\min} < \alpha$ for all $j=1,2,\ldots,n$}{
\nl      Conclude ``$A$ is $\alpha$-redundant'' and terminate Algorithm~\ref{alg:cm};  \hfill(b) 
     }
   }
  }	
\nl    $\epsilon \gets \frac{\alpha}{U}$;\\
\nl    Compute $\hat{\Vec{v}}_j^{\top} = \hat{\Vec{\pi}}^{\top}-\epsilon(\Vec{1}^{\top}\hat{A}_j^{-1} - \hat{\Vec{\pi}}^{\top})$ 
       for all $j=1,2,\ldots,n$;\\
\nl  \If{$\hat{v}_{j,\min} > 0$ for all $j=1,2,\ldots,n$}
{\nl Conclude ``$A$ is non-redundant'';   \hfill(c) }
\nl \Else{\nl Conclude ``$A$ is $\alpha$-redundant'';   \hfill(d) }
\end{algorithm}

 For a pay-off matrix $A$, 
  let $\Vec{\pi} = (\pi_1\ldots,\pi_n) \in \mathcal{S}_n$ satisfy $\Vec{\pi}^{\top} A =\Vec{0}^{\top}$.  
 Let $\pi_{\min} = \min_{i \in [n]} \pi_i$. 
 Let $P(\epsilon) = P_A(\epsilon)$ for $\epsilon > 0$, for convenience.  
 Similarly, for an estimated pay-off matrix $\hat{A}$, 
  let $\hat{\Vec{\pi}} \in \mathcal{S}_n$ satisfy $\hat{\Vec{\pi}}^{\top} \hat{A} =\Vec{0}^{\top}$, 
  let $\hat{\pi}_{\min} = \min_{i \in [n]} \hat{\pi}_i$, and
  let $\hat{P}(\epsilon) = P_{\hat{A}}(\epsilon)$. 
Recalling \eqref{eq:v1}, 
  we define 
\begin{align}
 \varphi(A)= 
  \underset{j \in [n]}{\max}\ \! \underset{i \in [n]}{\max} 
    \left| \left( \Vec{1}^{\top} A_j^{-1} -\Vec{\pi} \right)_i \right|, 
\label{def:alpha}
\end{align}    
 where we remark 
$ \varphi(\hat{A})= 
  \underset{j \in [n]}{\max}\ \! \underset{i \in [n]}{\max} 
    \left| \left( \Vec{1}^{\top} \hat{A}_j^{-1} -\hat{\Vec{\pi}} \right)_i \right|$ just in case.  
 Intuitively,
  $\varphi(A)$ gets large if $\det(A_j)$ is close to $0$ for some $j$ (see Section~\ref{sec:low} for such an example).

  Let 
\begin{align}
  \mathcal{S}_n^{\alpha} = \{ \Vec{x}\in \mathcal{S}_n \mid \Vec{x} \geq \alpha \Vec{1}\}
\label{def:sntheta}
\end{align}  
   for $\alpha \in (0,\frac{1}{n}]$.  
 We say $A$ is \emph{$\alpha$-redundant} 
   if $A$ does not have any Nash equilibrium $\Vec{\pi}$
  satisfying $\Vec{\pi} \in \mathcal{S}_n^{\alpha}$. 
 Now, we present Algorithm~\ref{alg:cm}. 
 Roughly speaking, 
  the algorithm estimates $A$ at $\hat{A}$ by repeating duels. 
 If the number of iterations gets large enough to assume $|A_{ij} - \hat{A}_{ij}|$ is sufficiently small 
  then the algorithm decides whether $A$ is non-redundant or $\alpha$-redundant 
  by checking the estimated Nash equilibrium $\hat{\pi}$ or the $\epsilon$-Nash region $\hat{P}(\epsilon)$. 
 It terminates in 
   $\Order\!\left(\frac{\varphi(A)^2}{\max\{\alpha^2,\pi_{\min}^2\}}\log \frac{n}{\delta}\right)$ rounds 
   if we know in advance that $\varphi(A)$ is not very big. 
 Note that 
  $\hat{v}_{j,\min} = \min_{j \in [n]} \hat{v}_{j,i}$ 
   for $\hat{\Vec{v}}_j = (\hat{v}_{j,1},\ldots,\hat{v}_{j,n})$
 in Algorithm~\ref{alg:cm}. 

\begin{theorem}\label{thm:upper}
 Suppose $A \in [-1,1]^{n \times n}$ satisfies Condition~\ref{cond:A}, and 
 suppose that we can assume that $\varphi(A) \leq U$. 
 Then, Algorithm~\ref{alg:cm} correctly concludes about the nonredundancy of $A$ with probability at least $1-\delta$. 
 The sample complexity of Algorithm~\ref{alg:cm} is 
  $\Order\!\left(\frac{U^2}{\max\{\alpha^2,\pi_{\min}^2\}}\log \frac{n}{\delta}\right)$. 
\end{theorem}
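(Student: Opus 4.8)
The plan is to run the whole argument on a single high-probability event on which the running estimate $\hat{A}$ is uniformly close to $A$, turn this closeness into inclusions among the $\epsilon$-Nash polytopes $P_A(\epsilon)$ and $P_{\hat A}(\epsilon)$ via Lemmas~\ref{eq:quad}, \ref{lem:xA_i} and \ref{lem:vertex}, and then check that each of the four exit rules (a)--(d) of Algorithm~\ref{alg:cm} is consistent with it; the sample-complexity claim then reduces to the facts that the loop length $T$ is already $\Order\!\big(\frac{U^2}{\alpha^2}\log\frac n\delta\big)$ and that rule (a) fires even sooner when $\pi_{\min}$ is large. First I would fix the good event: at round $t$ each $\hat a_{ij}$ is an average of $t$ i.i.d.\ $1$-sub-Gaussian variables with mean $a_{ij}$, so Theorem~\ref{thm:Hoeff} with a union bound over the $2\binom n2\le n^2$ ordered pairs (and over the $\le T$ rounds reached) gives, with probability at least $1-\delta$, that $\|\hat A-A\|_\infty<\eta_t:=\sqrt{\tfrac2t\log\tfrac{2n^2}{\delta}}$ at every such round. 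The union over rounds is the one place needing a little care --- one absorbs a $\log T$ factor or, since the output depends only on $\hat A$ at the stopping round, peels $\sum_t$; this is why the confidence in the thresholds is $\tfrac{2n^2}{\delta}$. The point of that choice is that the three comparisons guarding rules (a), (b) and the final test are precisely ``$\eta_t<\hat\pi_{\min}/\varphi(\hat A)$'', ``$\eta_t<\alpha/\varphi(\hat A)$'', and ``$\eta_T<\alpha/U$''.

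On the good event I would record the following deterministic facts. The unique (under Condition~\ref{cond:A}) solution $\Vec\pi$ of $\Vec\pi^\top A=\Vec 0^\top$ satisfies $|\Vec\pi^\top\hat A\Vec y|=|\Vec\pi^\top(\hat A-A)\Vec y|\le\|\Vec\pi\|_1\|\hat A-A\|_\infty$ for every genuine strategy $\Vec y$, and symmetrically with $A,\hat A$ interchanged; by Lemma~\ref{eq:quad} this yields $\Vec\pi\in P_{\hat A}(\|\Vec\pi\|_1\eta_t)$ and $\hat{\Vec\pi}\in P_A(\|\hat{\Vec\pi}\|_1\eta_t)$, where the two $\ell_1$-norms equal $1$ whenever the corresponding vector is nonnegative. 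Feeding this into Lemma~\ref{lem:vertex}, which gives $P_A(\epsilon)=\conv\{\Vec v_1,\dots,\Vec v_n\}$ with $\|\Vec v_j-\Vec\pi\|_\infty=\epsilon\|\Vec 1^\top A_j^{-1}-\Vec\pi\|_\infty\le\epsilon\,\varphi(A)$ (and likewise for $\hat A$), I get that $P_A(\epsilon)$ lies in the $\ell_\infty$-ball of radius $\epsilon\varphi(A)$ about $\Vec\pi$ and $P_{\hat A}(\epsilon)$ in the one of radius $\epsilon\varphi(\hat A)$ about $\hat{\Vec\pi}$, hence that $\hat{\Vec\pi}$ and $\Vec\pi$ are mutually $\Order(\eta_t\varphi)$-close and $\varphi(\hat A)\to\varphi(A)\le U$ as $\eta_t\to0$. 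I would isolate these perturbation estimates --- including that $\varphi(\hat A)$ does not drop much below $\varphi(A)$ once $\hat A$ is accurate, and the harmless measure-zero caveat that the $\hat A_j$ be invertible --- as a separate lemma, morally the content of Section~\ref{apx:upper}.

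Correctness is then a case analysis built on two characterizations that follow from Theorems~\ref{thm:kap1}--\ref{thm:kap2} under Condition~\ref{cond:A}: $A$ is non-redundant iff $\Vec\pi>\Vec 0$, and $A$ is $\alpha$-redundant iff $\pi_{\min}<\alpha$; these overlap exactly on $0<\pi_{\min}<\alpha$, where either output is valid, while their complements $\{\pi_{\min}\le0\}$ and $\{\pi_{\min}\ge\alpha\}$ are disjoint, so a correct answer always exists. For the ``non-redundant'' verdicts: when rule (a) fires, $\hat\pi_{\min}>0$ makes $\hat{\Vec\pi}$ a genuine strategy, so $\Vec\pi\in P_{\hat A}(\eta_t)\subseteq B_\infty(\hat{\Vec\pi},\eta_t\varphi(\hat A))$ and $\eta_t\varphi(\hat A)<\hat\pi_{\min}$ forces $\pi_i>\hat\pi_i-\hat\pi_{\min}\ge0$ for all $i$ (the case $\Vec\pi\not\ge\Vec 0$ being excluded by the perturbation lemma); when rule (c) fires at $t=T$ we have $\eta_T<\epsilon=\alpha/U\le\alpha/\varphi(A)$, and $\hat v_{j,\min}>0$ for all $j$ gives $P_{\hat A}(\epsilon)\subseteq\mathcal{S}_n^{++}$, so $\Vec\pi\in P_{\hat A}(\eta_T)\subseteq P_{\hat A}(\epsilon)\subseteq\mathcal{S}_n^{++}$. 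For the ``$\alpha$-redundant'' verdicts (rules (b), (d)) I would argue by contradiction: $\pi_{\min}\ge\alpha$ makes $\Vec\pi$ a genuine strategy in $\mathcal{S}_n^{\alpha}$ lying in $P_{\hat A}(\eta_t)$, and I would push this through the polytope inclusions using the explicit vertices $\Vec v_j=\Vec\pi-\epsilon(\Vec 1^\top A_j^{-1}-\Vec\pi)$ and the fact that $\hat{\Vec v}_j(\epsilon)$ and $\hat{\Vec v}_j(\eta_t)$ are scalings of one another from $\hat{\Vec\pi}$, contradicting ``$\hat v_{j,\min}<\alpha$ for all $j$'' (rule (b)) or ``some $\hat v_{j,\min}\le0$'' (rule (d)).

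I expect this last step --- certifying $\alpha$-redundancy from the vertex tests --- to be the main obstacle: a small example shows ``$\hat v_{j,\min}<\alpha$ for all $j$'' does \emph{not} imply $P_{\hat A}(\epsilon)\cap\mathcal{S}_n^{\alpha}=\emptyset$, so one must genuinely exploit that $\Vec\pi$ lies in the much smaller polytope $P_{\hat A}(\eta_t)$ (not merely in $P_{\hat A}(\epsilon)$) and the precise geometry of Lemma~\ref{lem:vertex}, rather than a crude ``ball'' estimate, and also keep the $\|\Vec\pi\|_1$-factor under control. For the sample complexity, the loop halts by round $T=\lceil\tfrac{2U^2}{\alpha^2}\log\tfrac{2n^2}{\delta}\rceil+1=\Order\!\big(\tfrac{U^2}{\alpha^2}\log\tfrac n\delta\big)$ no matter what (rules (c)/(d) answer at $t=T$), covering the regime $\pi_{\min}\le\alpha$; in the regime $\pi_{\min}>\alpha$, once $\hat A$ is accurate enough that $\varphi(\hat A)=\Order(\varphi(A))=\Order(U)$ and $\hat\pi_{\min}=\Omega(\pi_{\min})$ --- which on the good event happens by round $\Order\!\big(\tfrac{U^2}{\pi_{\min}^2}\log\tfrac n\delta\big)$ via the perturbation lemma --- the guard of rule (a) is met and the algorithm stops. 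Combining the two regimes yields the stated $\Order\!\big(\tfrac{U^2}{\max\{\alpha^2,\pi_{\min}^2\}}\log\tfrac n\delta\big)$ bound.
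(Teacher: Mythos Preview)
Your outline is essentially the paper's own argument. The paper packages the polytope facts you use as three lemmas --- Lemma~\ref{lem:P_Phat1} ($P_B(\epsilon')\subseteq P_C(\epsilon+\epsilon')$ when $\|B-C\|_\infty\le\epsilon$ and $P_B(\epsilon')\subseteq\mathcal S_n^+$), Lemma~\ref{lem:epsilonNE} (the Nash of $B$ lies in $P_C(\epsilon)$), and Lemma~\ref{lem:P_epsilon} ($P_B(\epsilon)\subseteq\mathcal S_n^{++}$ whenever $\epsilon<x_{\min}/\varphi(B)$) --- and then reads off correctness of exits (a), (b), (d) in one-line Lemmas~\ref{lem:1}--\ref{lem:3}, calling exit (c) and the sample-complexity bound ``trivial.'' Your more careful treatment of the union bound over rounds, the invertibility of the $\hat A_j$, and the perturbation estimates on $\varphi(\hat A)$ and $\hat\pi_{\min}$ needed to make rule (a) fire by round $\Order\!\left(\frac{U^2}{\pi_{\min}^2}\log\frac{n}{\delta}\right)$ all go beyond what the paper actually writes down.

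Your worry about rule (b) is well founded, but it is a gap you share with the paper rather than a defect of your route: the paper handles (b) by invoking Lemma~\ref{lem:2}, whose hypothesis is $\hat P(\epsilon_2)\cap\mathcal S_n^{\alpha}=\emptyset$, yet the algorithm only tests ``$\hat v_{j,\min}<\alpha$ for every $j$,'' i.e.\ that no \emph{vertex} of $\hat P(\epsilon_2)$ lies in $\mathcal S_n^{\alpha}$ --- which, exactly as you observe, does not preclude an interior point of the polytope from lying in $\mathcal S_n^{\alpha}$. The paper's proof of Theorem~\ref{thm:upper} simply cites Lemma~\ref{lem:2} at this step without bridging that mismatch, so neither argument resolves exit (b) as written; fixing it cleanly requires either strengthening the algorithmic test to the genuine emptiness check, or the sharper geometric argument you sketch that exploits $\Vec\pi\in P_{\hat A}(\eta_t)$ with $\eta_t\ll\epsilon_2$ rather than merely $\Vec\pi\in P_{\hat A}(\epsilon_2)$.
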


\if0
\subsection{Proof}
\begin{lemma}
 Suppose $|a_{ij} -\hat{a}_{ij} |\leq \epsilon^*$. 
 If $\hat{P}(2\epsilon^*) \subseteq \mathcal{S}_n^{++}$ 
  then any $P(\epsilon^*) \subseteq \mathcal{S}_n^{++}$. 
\end{lemma}

$x^*$, $\hat{x}^*$, $P(\epsilon)$, $\hat{P}(\epsilon)$

$x^* \in \hat{P}(\epsilon)$, 
$\hat{x}^* \in P(\epsilon)$, 
$P(\epsilon) \subseteq \hat{P}(2\epsilon) \subseteq P(4\epsilon)$, 

\begin{lemma}[ideal]
 $x^* \in \mathcal{S}_n^{\alpha}$ iff $\hat{x}^* \in \mathcal{S}_n^{\alpha}$
\end{lemma}
\begin{lemma}[abstract]
 If $\hat{x}^* \in \mathcal{S}_n^{\alpha}$ then $x^* \in \hat{P}(\epsilon) \subseteq \mathcal{S}_n^{++}$. 
 [contra] If $\hat{P}(\epsilon^*) \not\subseteq \mathcal{S}_n^{++}$
   then $\hat{x}^* \not\in \mathcal{S}_n^{\alpha}$. 
\end{lemma}
\begin{lemma}[abstract]
 If $x^* \in \mathcal{S}_n^{\alpha}$ then $\hat{P}(\epsilon) \subseteq \mathcal{S}_n^{++}$. 
 [contra] If $\hat{P}(\epsilon) \not\subseteq \mathcal{S}_n^{++}$
   then $x^* \not\in \mathcal{S}_n^{\alpha}$. 
\end{lemma}
\fi

 The sample complexity is trivial from Algorithm~\ref{alg:cm}. 
 Thus, the heart of the proof of Theorem~\ref{thm:upper} is the correctness of the conclusions (a)--(d). 
 For the purpose, 
  the following Lemmas~\ref{lem:P_Phat1}--\ref{lem:P_epsilon} are the key. 
\begin{lemma}\label{lem:P_Phat1}
 Let $B=(b_{ij})$ and $C=(c_{ij})$ be $n \times n$ skew-symmetric matrices 
  satisfying 
  $\max_{i,j}|b_{ij} -c_{ij} | \leq \epsilon$.  
If $P_B(\epsilon') \subseteq \mathcal{S}_n^+$ 
 then $P_B(\epsilon') \subseteq P_C(\epsilon+\epsilon')$. 
\end{lemma}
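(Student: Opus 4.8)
The plan is to argue coordinatewise, treating $C$ as a small entrywise perturbation of $B$. First I would fix an arbitrary $\Vec{x} \in P_B(\epsilon')$ and record the two things this membership gives: $\Vec{x} \in \mathcal{S}_n$ with $\Vec{x}^{\top} B \geq -\epsilon' \Vec{1}^{\top}$, and --- crucially --- by the hypothesis $P_B(\epsilon') \subseteq \mathcal{S}_n^+$ we also have $\Vec{x} \geq \Vec{0}$ together with $\sum_{i=1}^n x_i = 1$. So $\Vec{x}$ is an honest probability vector, not merely an affine combination of the coordinates.

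Next I would set $D = C - B$, which is again skew-symmetric and satisfies $|d_{ij}| \leq \epsilon$ entrywise by assumption, and estimate $\Vec{x}^{\top} D$ coordinate by coordinate. For each $i \in [n]$ we have $(\Vec{x}^{\top} D)_i = \sum_{j=1}^n x_j d_{ji}$, and since $x_j \geq 0$ this is a convex combination of numbers of absolute value at most $\epsilon$; hence $|(\Vec{x}^{\top} D)_i| \leq \epsilon \sum_{j=1}^n x_j = \epsilon$. This is the single place where the nonnegativity of $\Vec{x}$ is used, and it is exactly what the containment hypothesis $P_B(\epsilon') \subseteq \mathcal{S}_n^+$ buys us: without it the entries of $\Vec{x}$ could be large in absolute value while still summing to $1$, and the perturbation bound would fail.

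Finally I would combine the two estimates: for each $i \in [n]$, $(\Vec{x}^{\top} C)_i = (\Vec{x}^{\top} B)_i + (\Vec{x}^{\top} D)_i \geq -\epsilon' - \epsilon$, that is, $\Vec{x}^{\top} C \geq -(\epsilon + \epsilon')\Vec{1}^{\top}$. Together with $\Vec{x} \in \mathcal{S}_n$ this is precisely the statement $\Vec{x} \in P_C(\epsilon + \epsilon')$, and since $\Vec{x} \in P_B(\epsilon')$ was arbitrary, $P_B(\epsilon') \subseteq P_C(\epsilon + \epsilon')$ follows. I do not expect a genuine obstacle here; the only point requiring care is not to lose track of the role of the hypothesis $P_B(\epsilon') \subseteq \mathcal{S}_n^+$, which is what upgrades the affine normalization $\sum_i x_i = 1$ to a true convex-combination bound on the perturbation term.
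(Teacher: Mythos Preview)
Your proof is correct and follows essentially the same approach as the paper's: both split $\Vec{x}^{\top}C = \Vec{x}^{\top}B + \Vec{x}^{\top}(C-B)$, use the hypothesis $P_B(\epsilon') \subseteq \mathcal{S}_n^+$ to ensure $\Vec{x} \geq \Vec{0}$, and bound the perturbation term by $\epsilon$ using that $\Vec{x}$ is a probability vector. The only cosmetic difference is that the paper bounds $|\Vec{x}^{\top}(C-B)\Vec{y}| \leq \epsilon$ for all $\Vec{y} \in \mathcal{S}_n^+$ and then invokes Lemma~\ref{eq:quad}, whereas you bound each coordinate $(\Vec{x}^{\top}(C-B))_i$ directly; your version is slightly more direct but the content is identical.
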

\begin{proof}
 We prove that any  $\Vec{x} \in P_B(\epsilon')$ satisfies  $\Vec{x} \in P_C(\epsilon+\epsilon')$. 
 Suppose $\Vec{x} \in P_B(\epsilon)$. 
 Then, 
\begin{align*}
 \Vec{x}^{\top} C = \Vec{x}^{\top} B + \Vec{x}^{\top}(C-B) 
\end{align*}
 holds. 
 The hypothesis $\Vec{x} \in P_B(\epsilon)$ implies 
  $\Vec{x}^{\top} B \geq -\epsilon \Vec{1}$. 
 The hypothesis $\max_{i,j}|B_{ij} -C_{ij} | < \epsilon$ implies 
  for any $\Vec{y} \in \mathcal{S}_n^+$ that
\begin{align*}
 |\Vec{x}^{\top}(C-B)\Vec{y}| 
  &= \left| \sum_{i=1}^n x_i  \sum_{j=1}^n (C-B)_{ij} y_j \right| 
  \leq \sum_{i=1}^n |x_i| \sum_{j=1}^n \left| (C-B)_{ij} \right|  |y_j| \\
  &= \sum_{i=1}^n x_i \sum_{j=1}^n \left| b_{ij}-c_{ij} \right|  y_j 
  \leq  \epsilon \sum_{i=1}^n x_i \sum_{j=1}^n  y_j 
  =  \epsilon 
\end{align*} 
  where we used 
    $\Vec{x} \geq \Vec{0}$, 
    $\Vec{y} \geq \Vec{0}$, 
    $\max_{i,j}|b_{ij} - c_{ij}| \leq \epsilon$ and 
    $\sum_{i=1}^n x_i = \sum_{i=1}^n y_i = 1$. 
 This implies $\Vec{x}^{\top}(C-B) \geq -\epsilon \Vec{1}$ by Lemma~\ref{eq:quad}. 
 Now the claim is easy. 
\end{proof}

The following Lemma~\ref{lem:epsilonNE} is a special case of Lemma~\ref{lem:P_Phat1}. 
\begin{lemma}\label{lem:epsilonNE}
 Let $B=(b_{ij})$ and $C=(c_{ij})$ be $n \times n$ skew-symmetric matrices 
  satisfying 
  $\max_{i,j}|b_{ij} -c_{ij} | \leq \epsilon$.  
 Let $\Vec{x} \in \mathcal{S}_n^+$ satisfy $\Vec{x}^{\top} B \geq \Vec{0} $. 
 Then, 
  $\Vec{x} \in P_C(\epsilon)$.  
\end{lemma}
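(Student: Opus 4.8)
The plan is to recognize that this statement is the $\epsilon'=0$ instance of Lemma~\ref{lem:P_Phat1}, specialized to a single strategy rather than to the whole polytope $P_B(0)$, and to simply rerun the proof of Lemma~\ref{lem:P_Phat1} for that one strategy. Concretely, I would start from the same decomposition $\Vec{x}^{\top} C = \Vec{x}^{\top} B + \Vec{x}^{\top}(C-B)$. The hypothesis $\Vec{x}^{\top} B \geq \Vec{0}^{\top}$ already disposes of the first summand (it is in fact nonnegative, not merely $\geq -\epsilon\Vec{1}^{\top}$), so all the content is in controlling the perturbation term $\Vec{x}^{\top}(C-B)$.

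For that term I would reuse the entrywise bound from the proof of Lemma~\ref{lem:P_Phat1}: for every $\Vec{y} \in \mathcal{S}_n^+$,
\[
 |\Vec{x}^{\top}(C-B)\Vec{y}| \;\leq\; \sum_{i=1}^n x_i \sum_{j=1}^n |b_{ij}-c_{ij}|\, y_j \;\leq\; \epsilon,
\]
where the first inequality uses $\Vec{x}\geq\Vec{0}$ (available since $\Vec{x}\in\mathcal{S}_n^+$) and $\Vec{y}\geq\Vec{0}$, and the second uses $\max_{i,j}|b_{ij}-c_{ij}|\leq\epsilon$ together with $\sum_i x_i=\sum_j y_j=1$. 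Since $C-B$ is again skew-symmetric and $\Vec{x}\in\mathcal{S}_n$, Lemma~\ref{eq:quad} applied to $C-B$ gives $\Vec{x}^{\top}(C-B)\geq-\epsilon\Vec{1}^{\top}$. Adding this to $\Vec{x}^{\top} B\geq\Vec{0}^{\top}$ yields $\Vec{x}^{\top} C\geq-\epsilon\Vec{1}^{\top}$, i.e. $\Vec{x}\in P_C(\epsilon)$ by the definition \eqref{def:PA}.

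The one point deserving a word of care — and the only reason this is not obtained literally by substituting $\epsilon'=0$ into Lemma~\ref{lem:P_Phat1} — is that Lemma~\ref{lem:P_Phat1} assumes $P_B(\epsilon')\subseteq\mathcal{S}_n^+$, whereas $P_B(0)$ need not lie in $\mathcal{S}_n^+$ in general. This is harmless: that hypothesis is used in the proof of Lemma~\ref{lem:P_Phat1} only to guarantee $\Vec{x}\geq\Vec{0}$ for the particular strategy being tested, and here that nonnegativity is supplied directly by the assumption $\Vec{x}\in\mathcal{S}_n^+$. So there is no real obstacle; the only things to keep straight are the signs (the $\Vec{x}^{\top} B$ term is nonnegative, not just $\geq-\epsilon\Vec{1}^{\top}$) and the fact that $C-B$ remains skew-symmetric so that Lemma~\ref{eq:quad} applies to it.
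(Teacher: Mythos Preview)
Your proof is correct and follows essentially the same approach as the paper, which simply states that Lemma~\ref{lem:epsilonNE} is a special case of Lemma~\ref{lem:P_Phat1} without further argument. Your observation that one cannot literally plug $\epsilon'=0$ into Lemma~\ref{lem:P_Phat1} (since $P_B(0)\subseteq\mathcal{S}_n^+$ need not hold) but that the assumption $\Vec{x}\in\mathcal{S}_n^+$ supplies exactly what the proof needs is a useful clarification the paper leaves implicit; the remark that $C-B$ is skew-symmetric is true but unnecessary, as Lemma~\ref{eq:quad} does not require it.
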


\begin{lemma}\label{lem:P_epsilon}
Let $B \in [-1,1]^{n \times n}$ satisfy Condition~\ref{cond:A}. 
Let $\Vec{x}^{\top}B = \Vec{0}^{\top}$. 
Suppose $\Vec{x} \in \mathcal{S}_n^{++}$. 
If a nonnegative $\epsilon$ satisfies 
 \begin{align}
  \epsilon < 
  \frac{x_{\min}}{\varphi(B)}
\label{eq:epsilon}
\end{align}
 then $P_B(\epsilon) \subseteq \mathcal{S}_n^{++}$.
\end{lemma}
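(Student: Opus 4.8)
The plan is to use the explicit description of the vertices of $P_B(\epsilon)$ provided by Lemma~\ref{lem:vertex}, and to bound how far each vertex can deviate from $\Vec{\pi}$ (the Nash equilibrium, which here equals $\Vec{x}$). Since $P_B(\epsilon)$ is the convex hull of the vertices $\Vec{v}_1,\ldots,\Vec{v}_n$, and $\mathcal{S}_n^{++}$ is convex, it suffices to show each $\Vec{v}_j$ lies in $\mathcal{S}_n^{++}$; that is, to show every coordinate $v_{j,i}$ is strictly positive.

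First I would note that Condition~\ref{cond:A} applies to $B$, so by Lemma~\ref{lem:xA_i} the hypothesis $\Vec{x}^{\top}B=\Vec{0}^{\top}$ together with $\Vec{x}\in\mathcal{S}_n$ forces $\Vec{x}=\Vec{\pi}$ with $\Vec{\pi}^{\top}=\Vec{e}_j^{\top}B_j^{-1}$ for every $j$. Then I invoke Lemma~\ref{lem:vertex}: the vertices of $P_B(\epsilon)$ are exactly
\[
 \Vec{v}_j^{\top} = \Vec{\pi}^{\top}-\epsilon\bigl(\Vec{1}^{\top}B_j^{-1}-\Vec{\pi}^{\top}\bigr),\qquad j=1,\ldots,n.
\]
For each coordinate $i\in[n]$, I estimate
\[
 v_{j,i} = \pi_i - \epsilon\bigl(\Vec{1}^{\top}B_j^{-1}-\Vec{\pi}^{\top}\bigr)_i \;\geq\; \pi_i - \epsilon\,\Bigl|\bigl(\Vec{1}^{\top}B_j^{-1}-\Vec{\pi}^{\top}\bigr)_i\Bigr| \;\geq\; \pi_i - \epsilon\,\varphi(B),
\]
where the last inequality is just the definition \eqref{def:alpha} of $\varphi(B)$ as the maximum over $j$ and $i$ of exactly these absolute values. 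Hence $v_{j,i}\geq \pi_{\min}-\epsilon\,\varphi(B)=x_{\min}-\epsilon\,\varphi(B)$, and the hypothesis $\epsilon<x_{\min}/\varphi(B)$ gives $v_{j,i}>0$ for all $i,j$ (when $\varphi(B)=0$ the bound $v_{j,i}=\pi_i\geq x_{\min}>0$ is immediate, so we may assume $\varphi(B)>0$). Since each vertex $\Vec{v}_j$ lies in $\mathcal{S}_n$ (it satisfies $\sum_i v_{j,i}=1$, as one checks from \eqref{eq:v1} using $\Vec{1}^{\top}\Vec{\pi}=1$ — or simply because it is a vertex of $P_B(\epsilon)\subseteq\mathcal{S}_n$) and has all positive coordinates, we get $\Vec{v}_j\in\mathcal{S}_n^{++}$.

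Finally, taking the convex hull: any $\Vec{y}\in P_B(\epsilon)=\conv\{\Vec{v}_1,\ldots,\Vec{v}_n\}$ is a convex combination $\Vec{y}=\sum_j\lambda_j\Vec{v}_j$ with $\lambda_j\geq 0$, $\sum_j\lambda_j=1$, so each coordinate $y_i=\sum_j\lambda_j v_{j,i}\geq \min_j v_{j,i}>0$, and $\Vec{y}\in\mathcal{S}_n$ is inherited; thus $\Vec{y}\in\mathcal{S}_n^{++}$, proving $P_B(\epsilon)\subseteq\mathcal{S}_n^{++}$. I do not anticipate a serious obstacle here: the only mild subtlety is confirming that Lemma~\ref{lem:vertex} is applicable, which requires Condition~\ref{cond:A} for $B$ (so that $B_j$ is invertible via Lemma~\ref{lem:rankAj}) — this is exactly what is assumed — and keeping straight that the $\Vec{\pi}$ appearing in \eqref{eq:v1} is forced to coincide with the given $\Vec{x}$ by uniqueness of the normalized kernel vector.
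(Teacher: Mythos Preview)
Your proposal is correct and follows essentially the same approach as the paper: both use Lemma~\ref{lem:vertex} to write down the vertices $\Vec{v}_j$ explicitly, bound each coordinate $v_{j,i}$ from below using the definition of $\varphi(B)$ together with the hypothesis \eqref{eq:epsilon}, and conclude by convexity. Your write-up is slightly more detailed (you spell out the convex-hull step and the degenerate case $\varphi(B)=0$), but the argument is the same.
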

\begin{proof}
By Lemma~\ref{lem:vertex}, we know
 $\Vec{v}_j^{\top}   = \Vec{x}^{\top} -\epsilon (\Vec{1}^{\top} B_j^{-1} -\Vec{x}^{\top})$ 
is a vertex of $P_B(\epsilon)$. 
Let $\Vec{v}_j = (v_{j,1},\ldots,v_{j,n})^{\top} $ for convenience. 
Then, 
\begin{align*}
v_{j,i}
 &\geq x_i -\epsilon \left(\Vec{1}^{\top} B_j^{-1} -\Vec{x}^{\top}\right)_i \\
 &\geq x_i-\epsilon \left| \left(\Vec{1}^{\top} B_j^{-1} -\Vec{x}^{\top} \right)_i \right| \\
 &> x_i-x_{\min} & (\mbox{by $\epsilon \geq 0$ and \eqref{eq:epsilon}})\\
 &\geq 0
\end{align*}
holds for any $i \in [n]$ and $j \in [n]$. 
Now the claim is easy. 
\end{proof}

Next, we prove Lemmas~\ref{lem:1}--\ref{lem:3} that respectively correspond to conclusions (a), (b) and (d). 
\begin{lemma}\label{lem:1}
 Suppose $\hat{\Vec{\pi}} \in \mathcal{S}_n^{++}$. 
 If $\underset{i,j}{\max}\ \!|\hat{a}_{ij} - a_{ij}| < \epsilon_1$ 
 where 
  $\epsilon_1 = \hat{\pi}_{\min}/\varphi(\hat{A})$
 then $\pi \in \mathcal{S}_n^{++}$. 
\end{lemma}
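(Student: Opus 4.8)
The plan is to combine Lemma~\ref{lem:epsilonNE} with Lemma~\ref{lem:P_epsilon}, using $\hat{A}$ as the ``true'' matrix and $A$ as the ``perturbed'' one. First I would observe that $\Vec{\pi}$, the solution of $\Vec{\pi}^{\top}A = \Vec{0}^{\top}$ with $\sum_i \pi_i = 1$, certainly satisfies $\Vec{\pi}^{\top}A \geq \Vec{0}^{\top}$ (with equality). Applying Lemma~\ref{lem:epsilonNE} with $B = A$, $C = \hat{A}$ and $\epsilon = \epsilon_1$, the hypothesis $\max_{i,j}|\hat{a}_{ij} - a_{ij}| < \epsilon_1$ gives $\Vec{\pi} \in P_{\hat{A}}(\epsilon_1)$.

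Next I would invoke Lemma~\ref{lem:P_epsilon} with $B = \hat{A}$ and $\Vec{x} = \hat{\Vec{\pi}}$: the hypotheses are that $\hat{A}$ satisfies Condition~\ref{cond:A} (so that $\hat{A}_j^{-1}$ exist and $\varphi(\hat{A})$ is well-defined — this should be guaranteed by the branch condition in Algorithm~\ref{alg:cm} or folded in as a standing assumption), that $\hat{\Vec{\pi}}^{\top}\hat{A} = \Vec{0}^{\top}$, and that $\hat{\Vec{\pi}} \in \mathcal{S}_n^{++}$, which is exactly the supposition of the lemma. Since $\epsilon_1 = \hat{\pi}_{\min}/\varphi(\hat{A})$ is precisely the boundary value in \eqref{eq:epsilon} — and the strict inequality $\max_{i,j}|\hat a_{ij}-a_{ij}| < \epsilon_1$ propagates so that $\Vec{\pi}$ actually lies in $P_{\hat{A}}(\epsilon')$ for some $\epsilon' < \epsilon_1$ — Lemma~\ref{lem:P_epsilon} yields $P_{\hat{A}}(\epsilon') \subseteq \mathcal{S}_n^{++}$. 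Chaining the two inclusions gives $\Vec{\pi} \in \mathcal{S}_n^{++}$, which is the claim.

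The main obstacle I anticipate is the boundary case of the inequality: Lemma~\ref{lem:P_epsilon} requires the \emph{strict} inequality $\epsilon < x_{\min}/\varphi(B)$, whereas the natural output of Lemma~\ref{lem:epsilonNE} puts $\Vec{\pi}$ in $P_{\hat{A}}(\epsilon_1)$ with $\epsilon_1$ equal to (not strictly below) $\hat{\pi}_{\min}/\varphi(\hat{A})$. The clean way around this is to use the \emph{strictness} already present in the hypothesis: since $\max_{i,j}|\hat a_{ij} - a_{ij}| < \epsilon_1$, there is some $\epsilon' < \epsilon_1$ with $\max_{i,j}|\hat a_{ij}-a_{ij}| \leq \epsilon'$, and then $\Vec{\pi} \in P_{\hat A}(\epsilon') \subseteq \mathcal{S}_n^{++}$ by Lemmas~\ref{lem:epsilonNE} and~\ref{lem:P_epsilon}, since now $\epsilon' < \hat\pi_{\min}/\varphi(\hat A)$ strictly. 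One should also double-check that $A$ satisfying Condition~\ref{cond:A} indeed forces the solution $\Vec{\pi}$ of $\Vec{\pi}^\top A = \Vec 0^\top$, $\sum_i\pi_i=1$ to be unique, which follows from $\rank(A) = n-1$ as in the proof of Lemma~\ref{lem:xA_i}, so that ``$\pi$'' in the statement is unambiguous.
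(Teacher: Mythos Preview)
Your approach is exactly the paper's two-line argument: apply Lemma~\ref{lem:epsilonNE} (with $B=A$, $C=\hat A$) to get $\Vec{\pi}\in\hat P(\epsilon_1)$, then Lemma~\ref{lem:P_epsilon} (with $B=\hat A$, $\Vec{x}=\hat{\Vec{\pi}}$) to conclude $\hat P(\epsilon_1)\subseteq\mathcal S_n^{++}$. Your extra care with the strict-versus-nonstrict boundary in~\eqref{eq:epsilon}, handled by passing to some $\epsilon'<\epsilon_1$ via the strict hypothesis, is a detail the paper's proof glosses over.
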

\begin{proof}
By Lemma~\ref{lem:epsilonNE}, $\Vec{\pi} \in \hat{P}(\epsilon_1)$. 
By Lemma~\ref{lem:P_epsilon}, 
$\hat{P}(\epsilon_1) \subseteq \mathcal{S}_n^{++}$. 
\end{proof}

\begin{lemma}\label{lem:2}
 Suppose $\underset{i,j}{\max}\ \!|\hat{a}_{ij} - a_{ij}| < \epsilon_2$  
 where 
 $\epsilon_2 = \alpha/\varphi(\hat{A})$ for $ 0< \alpha < \frac{1}{n}$. 
 If $\hat{P}(\epsilon_2) \cap \mathcal{S}_n^{\alpha} = \emptyset$
   then $\Vec{\pi} \not\in \mathcal{S}_n^{\alpha}$. 
\end{lemma}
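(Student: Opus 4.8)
The plan is to argue by contraposition: assuming $\Vec{\pi} \in \mathcal{S}_n^{\alpha}$, I will show $\hat{P}(\epsilon_2) \cap \mathcal{S}_n^{\alpha} \neq \emptyset$, and in fact exhibit an explicit point of the intersection. The natural candidate is $\Vec{\pi}$ itself. First I would check $\Vec{\pi} \in \hat{P}(\epsilon_2)$: since $\Vec{\pi}^{\top} A = \Vec{0}^{\top} \geq \Vec{0}^{\top}$ and $\Vec{\pi} \in \mathcal{S}_n^+$, Lemma~\ref{lem:epsilonNE} applied with $B = A$, $C = \hat{A}$, and the bound $\max_{i,j}|\hat a_{ij}-a_{ij}| < \epsilon_2$ gives $\Vec{\pi} \in P_{\hat A}(\epsilon_2) = \hat{P}(\epsilon_2)$. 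Second, by the contraposition hypothesis $\Vec{\pi} \in \mathcal{S}_n^{\alpha}$. Hence $\Vec{\pi} \in \hat{P}(\epsilon_2) \cap \mathcal{S}_n^{\alpha}$, contradicting the assumption that this intersection is empty. Therefore $\Vec{\pi} \not\in \mathcal{S}_n^{\alpha}$, which is the claim.

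One point I would be careful about is the direction of the inequality in the hypothesis of Lemma~\ref{lem:epsilonNE}: there it is stated for a strategy $\Vec{x}$ with $\Vec{x}^{\top} B \geq \Vec{0}^{\top}$, and here $\Vec{\pi}^{\top} A = \Vec{0}^{\top}$ satisfies this with equality, so the application is immediate. I would also note that the condition $0 < \alpha < \tfrac1n$ and the fact that $\varphi(\hat A)$ is well defined (guaranteed by Condition~\ref{cond:A} for $\hat A$ through Lemma~\ref{lem:rankAj}, or by the event that the estimate is close enough) are only needed to make $\epsilon_2$ a meaningful positive quantity; they play no structural role in the argument itself.

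There is essentially no hard step here: the lemma is a direct consequence of Lemma~\ref{lem:epsilonNE}, which already did the work of showing that an exact equilibrium of one matrix is an approximate equilibrium of a nearby matrix. The only thing to get right is matching up the roles ($B \leftrightarrow A$, $C \leftrightarrow \hat A$) and observing that $\Vec{\pi}$ simultaneously witnesses membership in both $\hat{P}(\epsilon_2)$ and $\mathcal{S}_n^{\alpha}$ under the contrapositive hypothesis.
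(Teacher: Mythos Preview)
Your proposal is correct and follows essentially the same approach as the paper: both arguments use Lemma~\ref{lem:epsilonNE} to place $\Vec{\pi}$ in $\hat{P}(\epsilon_2)$, after which the conclusion is immediate. The only cosmetic difference is that you phrase it as a contraposition while the paper states it directly, but the logical content is identical.
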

\begin{proof}
$\Vec{\pi} \in \hat{P}(\epsilon_2)$ by Lemma~\ref{lem:epsilonNE}. 
 If $\hat{P}(\epsilon_2) \cap \mathcal{S}_n^{\alpha} = \emptyset$
   then $\pi_{\min} < \alpha$. 
\end{proof}

\begin{lemma}\label{lem:3}
 Suppose $\underset{i,j}{\max}\ \!|\hat{a}_{ij} - a_{ij}| < \epsilon_3$ 
 where 
  $\epsilon_3 =  \alpha/\varphi(A)$.  
 If $\hat{P}(2\epsilon_3) \not\subseteq \mathcal{S}_n^{++}$
   then $\Vec{\pi} \not\in \mathcal{S}_n^{\alpha}$. 
\end{lemma}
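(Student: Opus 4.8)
I would prove the contrapositive: assume $\Vec{\pi}\in\mathcal{S}_n^{\alpha}$, so that $\pi_i\geq\alpha$ for every $i$ and in particular $\Vec{\pi}\in\mathcal{S}_n^{++}$, and show $\hat{P}(2\epsilon_3)\subseteq\mathcal{S}_n^{++}$. Throughout write $\epsilon^{*}=\max_{i,j}|\hat{a}_{ij}-a_{ij}|$; the hypothesis says $\epsilon^{*}<\epsilon_3=\alpha/\varphi(A)$, equivalently $\epsilon^{*}\varphi(A)<\alpha\leq\pi_{\min}$. The idea is to sandwich the estimated polytope $\hat{P}(2\epsilon_3)$ inside an $\epsilon$-Nash polytope $P_A(\cdot)$ of the true matrix whose radius is only a small multiple of $\epsilon_3$, and then use Lemma~\ref{lem:vertex} to read off that $\Vec{\pi}\in\mathcal{S}_n^{\alpha}$ keeps every vertex of that polytope coordinatewise positive.

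Two ingredients come straight from Lemma~\ref{lem:epsilonNE}. Applying it with $B=\hat{A}$, $C=A$, $\Vec{x}=\hat{\Vec{\pi}}$ (legitimate since $\hat{\Vec{\pi}}^{\top}\hat{A}=\Vec{0}^{\top}\geq\Vec{0}^{\top}$) gives $\hat{\Vec{\pi}}\in P_A(\epsilon^{*})$, and by Lemma~\ref{lem:vertex} every vertex $\Vec{v}_k$ of $P_A(\epsilon^{*})$ satisfies $(\Vec{v}_k)_i\geq\pi_i-\epsilon^{*}\varphi(A)>\pi_i-\alpha\geq0$, so $P_A(\epsilon^{*})\subseteq\mathcal{S}_n^{++}$ and hence $\hat{\Vec{\pi}}\in\mathcal{S}_n^{++}$. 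Applying Lemma~\ref{lem:epsilonNE} the other way, with $B=A$, $C=\hat{A}$, $\Vec{x}=\Vec{\pi}$, gives $\Vec{\pi}\in\hat{P}(\epsilon^{*})\subseteq\hat{P}(2\epsilon_3)$, so $\hat{P}(2\epsilon_3)$ contains the strictly positive point $\Vec{\pi}$. Now take any $\Vec{x}\in\hat{P}(2\epsilon_3)$ and suppose toward a contradiction that $\Vec{x}\notin\mathcal{S}_n^{++}$. Since $\hat{P}(2\epsilon_3)$ is convex and contains $\Vec{\pi}\in\mathcal{S}_n^{++}$, the segment from $\Vec{\pi}$ to $\Vec{x}$ contains a point $\Vec{w}\in\hat{P}(2\epsilon_3)\cap\mathcal{S}_n^{+}$ with $w_i=0$ for some $i$. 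Because $\Vec{w}\geq\Vec{0}$, the perturbation estimate from the proof of Lemma~\ref{lem:P_Phat1} applies, giving $\Vec{w}^{\top}A\geq\Vec{w}^{\top}\hat{A}-\epsilon^{*}\Vec{1}^{\top}\geq-(2\epsilon_3+\epsilon^{*})\Vec{1}^{\top}$, i.e.\ $\Vec{w}\in P_A(2\epsilon_3+\epsilon^{*})$. Expressing $\Vec{w}$ as a convex combination of the vertices $\Vec{v}'_k$ of $P_A(2\epsilon_3+\epsilon^{*})$ (Lemma~\ref{lem:vertex}) and inspecting coordinate $i$, some vertex has $(\Vec{v}'_k)_i\leq0$, which (since $\pi_i>0$) forces $0<\pi_i\leq(2\epsilon_3+\epsilon^{*})(\Vec{1}^{\top}A_k^{-1}-\Vec{\pi}^{\top})_i\leq(2\epsilon_3+\epsilon^{*})\varphi(A)$; this contradicts $\pi_i\geq\alpha=\epsilon_3\varphi(A)$, and hence $\hat{P}(2\epsilon_3)\subseteq\mathcal{S}_n^{++}$.

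The step I expect to be the main obstacle is precisely this last contradiction, i.e.\ the $\epsilon$-bookkeeping: replacing $\hat{A}$ by $A$ costs an additive $\epsilon^{*}$ and the polytope tested already has radius $2\epsilon_3$, so one lands in $P_A(2\epsilon_3+\epsilon^{*})$, and one must check that the margin secured by $\epsilon_3=\alpha/\varphi(A)$ together with $\pi_{\min}\geq\alpha$ genuinely absorbs that inflation — this is where the constant appearing in the radius $2\epsilon_3$ and in the radius at which Algorithm~\ref{alg:cm} tests the estimated polytope has to be pinned down. An alternative route, closer to the proofs of Lemmas~\ref{lem:1}--\ref{lem:2}, is to apply Lemma~\ref{lem:P_epsilon} to $\hat{A}$ and $\hat{\Vec{\pi}}$ directly: from $\hat{\Vec{\pi}}\in P_A(\epsilon^{*})$ and Lemma~\ref{lem:vertex} one gets a quantitative lower bound on $\hat{\pi}_{\min}$, and from continuity of $\varphi(\cdot)$ in the matrix entries an upper bound on $\varphi(\hat{A})$ in terms of $\varphi(A)$ and $\epsilon^{*}$, after which $2\epsilon_3<\hat{\pi}_{\min}/\varphi(\hat{A})$ yields the conclusion. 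Apart from this bookkeeping, every step is a direct invocation of Lemmas~\ref{lem:epsilonNE}, \ref{lem:vertex}, \ref{lem:P_Phat1}, and \ref{lem:P_epsilon}.
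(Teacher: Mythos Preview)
Your route is considerably more elaborate than the paper's. The paper's proof of the contrapositive is two lines: (i) from $\Vec{\pi}\in\mathcal{S}_n^{\alpha}$ (so $\pi_{\min}\geq\alpha$), apply Lemma~\ref{lem:P_epsilon} to $A$ to obtain $P(2\epsilon_3)\subseteq\mathcal{S}_n^{++}$; (ii) apply Lemma~\ref{lem:P_Phat1} with $B=\hat{A}$, $C=A$ to obtain $\hat{P}(\epsilon_3)\subseteq P(2\epsilon_3)$. Everything you do --- showing $\hat{\Vec{\pi}}\in\mathcal{S}_n^{++}$, showing $\Vec{\pi}\in\hat{P}(2\epsilon_3)$, passing to a boundary point $\Vec{w}\in\mathcal{S}_n^{+}$, pushing it into $P_A(2\epsilon_3+\epsilon^{*})$, and reading off a vertex inequality --- is a pointwise unrolling of that same sandwich $\hat{P}\subseteq P_A\subseteq\mathcal{S}_n^{++}$. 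The one genuine gain of your convexity/boundary manoeuvre is that it honestly discharges the hypothesis $P_{\hat{A}}(\epsilon')\subseteq\mathcal{S}_n^{+}$ required by Lemma~\ref{lem:P_Phat1} (its perturbation bound uses $\Vec{x}\geq\Vec{0}$), which the paper invokes without checking; so that step is not wasted even though the paper skips it.

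The $\epsilon$-bookkeeping you flag as the obstacle is real and is shared with the paper's argument as written. Lemma~\ref{lem:P_epsilon} gives $P(2\epsilon_3)\subseteq\mathcal{S}_n^{++}$ only when $2\epsilon_3<\pi_{\min}/\varphi(A)$, whereas $\pi_{\min}\geq\alpha$ yields only $\pi_{\min}/\varphi(A)\geq\epsilon_3$; correspondingly the paper itself slips from $\hat{P}(2\epsilon_3)$ in the statement to $\hat{P}(\epsilon_3)$ in its stated contrapositive. Your chain ends with $\alpha\leq\pi_i\leq(2\epsilon_3+\epsilon^{*})\varphi(A)<3\alpha$, which is not a contradiction. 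So neither argument closes with the constants exactly as stated; the remedy is to rescale the tested radius (or $\epsilon_3$) by a fixed constant, not a new idea. Your proposed alternative via lower-bounding $\hat{\pi}_{\min}$ and controlling $\varphi(\hat{A})$ is not the paper's route and would add, not remove, bookkeeping.
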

\begin{proof}
We prove contraposition: 
 If $\Vec{\pi} \in \mathcal{S}_n^{\alpha}$ then $\hat{P}(\epsilon_3) \subseteq \mathcal{S}_n^{++}$. 
If $\Vec{\pi} \in \mathcal{S}_n^{\alpha}$ then $P(2\epsilon_3) \subseteq \mathcal{S}_n^{++}$ 
by Lemma~\ref{lem:P_epsilon}.
By Lemma~\ref{lem:P_Phat1}, 
$\hat{P}(\epsilon_3) \subseteq P(2\epsilon_3)$. 
\end{proof}


 We use the following lemma, easily derived from Hoeffding's inequality. 
\begin{lemma}\label{lem:PACtime}
Let $\hat{A}^{(t)} = (\hat{a}^{(t)}_{ij})$ denote the estimated pay-off matrix at the end of the $t$-th iteration of Algorithm~\ref{alg:cm}. 
If $t \geq \frac{2}{\epsilon^2}\log \frac{2n^2}{\delta}$ then 
\begin{align*} 
 \mathbb{P}\left[\left|\hat{a}^{(t)}_{ij} - a_{ij} \right| < \epsilon \mbox { for all $\{i,j\} \in \binom{[n]}{2}$}\right] >1 - \delta. 
\end{align*}
\end{lemma}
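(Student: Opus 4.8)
The plan is to apply the Hoeffding inequality (Theorem~\ref{thm:Hoeff}) to each fixed off-diagonal pair separately and then take a union bound over all $\binom{n}{2}$ pairs. First I would fix $\{i,j\} \in \binom{[n]}{2}$ and recall that, after $t$ iterations, the estimate is $\hat a^{(t)}_{ij} = \frac{1}{t}\sum_{\tau=1}^{t} X^{(\tau)}_{ij}$, where the $X^{(\tau)}_{ij}$ are, by Condition~\ref{cond:X}, i.i.d.\ $1$-sub-Gaussian with mean $a_{ij}$. Theorem~\ref{thm:Hoeff} with $c=\epsilon$ and sample size $t$ then gives
\begin{align*}
 \mathbb{P}\left[\left|\hat a^{(t)}_{ij} - a_{ij}\right| \geq \epsilon \right] \leq 2\exp\!\left(-\tfrac{\epsilon^2}{2}\,t\right).
\end{align*}

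Next I would impose the hypothesis $t \geq \frac{2}{\epsilon^2}\log\frac{2n^2}{\delta}$, which yields $\frac{\epsilon^2}{2}t \geq \log\frac{2n^2}{\delta}$, hence $2\exp(-\tfrac{\epsilon^2}{2}t) \leq 2\cdot\frac{\delta}{2n^2} = \frac{\delta}{n^2}$. Therefore each bad event has probability at most $\delta/n^2$. Taking a union bound over the $\binom{n}{2} = \frac{n(n-1)}{2} < n^2$ pairs (the independence across pairs in Condition~\ref{cond:X} is not even needed here, only the per-pair tail bound), the probability that $|\hat a^{(t)}_{ij} - a_{ij}| \geq \epsilon$ for at least one pair is at most $\binom{n}{2}\cdot\frac{\delta}{n^2} < \delta$. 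Complementing gives the stated bound. Since $\hat a^{(t)}_{ji} = -\hat a^{(t)}_{ij}$ and $a_{ji} = -a_{ij}$ by skew-symmetry, controlling the pairs in $\binom{[n]}{2}$ also controls the transposed entries automatically, and the diagonal entries are identically zero.

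There is no real obstacle here; the only mild points of care are (i) using the correct concentration scale, namely that averaging $t$ i.i.d.\ $1$-sub-Gaussian variables gives the exponent $\tfrac{\epsilon^2}{2}t$ rather than $\tfrac{\epsilon^2}{2}$, which is exactly the form of Theorem~\ref{thm:Hoeff}; and (ii) checking the counting $\binom{n}{2}<n^2$ so that the union bound closes with room to spare (strict inequality), which is why the statement is phrased with a strict ``$>1-\delta$''. I would write this out in three short displayed lines and a sentence, with no further machinery required.
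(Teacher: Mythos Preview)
Your proposal is correct and follows essentially the same argument as the paper: apply Hoeffding's inequality (Theorem~\ref{thm:Hoeff}) to each pair to get a per-pair failure probability of at most $\delta/n^2$ under the hypothesis on $t$, then take a union bound over the $\binom{n}{2}<n^2$ pairs. The extra remarks you make about skew-symmetry of the estimates and not needing independence across pairs are fine clarifications but are not used in the paper's version.
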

\begin{proof}
By Hoeffding's inequality (Theorem~\ref{thm:Hoeff}), 
\begin{align*} 
 \mathbb{P}[ |\hat{a}^{(t)}_{ij} - a_{ij} | \geq \epsilon] 
 \leq 2\exp\left(-\tfrac{\epsilon^2}{2}t\right)
 \leq 2\exp\left(-\tfrac{\epsilon^2}{2} \left(\tfrac{2}{\epsilon^2}\log \tfrac{2n^2}{\delta}\right) \right)
 = \tfrac{\delta}{n^2}
\end{align*}
 for any $i,j$. By union bound, 
\begin{align*} 
 \mathbb{P}\left[\left|\hat{a}^{(t)}_{ij} - a_{ij} \right| < \epsilon \mbox { for all $\{i,j\} \in \binom{[n]}{2}$}\right] 
 &\geq 1 - \sum_{i,j} \left(1- \mathbb{P}[ |\hat{a}^{(t)}_{ij} - a_{ij} | \geq \epsilon] \right) \\
 &\textstyle \geq 1 - \binom{n}{2} \frac{\delta}{n^2} 
 \geq 1- \delta 
\end{align*}
 and we obtain the claim. 
\end{proof}

\begin{proof}[Proof of Theorem~\ref{thm:upper}]
 If  $t > \frac{2\varphi(\hat{A})^2}{\hat{\pi}_{\min}^2}\log \frac{2n^2}{\delta}$ 
  then $\max_{i,j}|a_{ij}-\hat{a}_{ij}| < \frac{\hat{\pi}_{\min}}{\varphi(\hat{A})}$ by Lemma~\ref{lem:PACtime}. 
 Thus, the conclusion (a) is correct by Lemma~\ref{lem:1}. 
 If $t > \frac{2\varphi(\hat{A})^2}{\alpha^2}\log \frac{2n^2 }{\delta}$ 
  then $\max_{i,j}|a_{ij}-\hat{a}_{ij}| < \frac{\alpha}{\varphi(\hat{A})}$ by Lemma~\ref{lem:PACtime}. 
 Thus, the conclusion (b) is correct by Lemma~\ref{lem:2}. 
 The conclusion (c) is trivial from lemma~\ref{lem:epsilonNE}\footnote{This case could be detected much earlier as the conclusion (a). }. 
 If $t > \frac{2U^2}{\alpha^2} \log \frac{2n^2}{\delta} $
  then $\max_{i,j}|a_{ij}-\hat{a}_{ij}| < \frac{\alpha}{\varphi(A)}$ by Lemma~\ref{lem:PACtime} since $\varphi(A) \leq U$. 
 Thus, the conclusion (d) is correct by Lemma~\ref{lem:3}. 
 The sample complexity is trivial. 
\end{proof}


\section{Lower Bound of the Sample Complexity}\label{sec:low}
 Concerning the lower bounds of the sample complexity of our problem for $n \geq 5$, 
  we can prove the following two theorems, 
   where Theorem~\ref{thm:low-varphi} is supported by computer-aided symbolic calculations. 
 See Section~\ref{apx:low} for proofs. 
\begin{theorem}\label{thm:ETlow}
 Let $\alpha$ and $\delta$ be fixed parameters respectively satisfying $0 < \alpha \ll 1$ and $0 < \delta \ll 1$. 
 Let $\tau$ denote the running time of 
   an arbitrary $(\alpha,\delta)$-PAC algorithm 
   that identifies whether an arbitrarily given $A$ is non-redundant.  
 Then,  the expected running time satisfies 
  $\mathbb{E}{[\tau]}\geq \frac{1}{2\alpha^2}\log{\frac{5}{12\delta}}$. 
\end{theorem}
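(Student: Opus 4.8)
The plan is the classical two-point change-of-measure lower bound, in the style of Kaufmann--Capp\'e--Garivier. The first step is to pin down the noise model in our favour: I take the duel outcomes to be Gaussian, $X_{ij}\sim\mathcal N(a_{ij},1)$, which is $1$-sub-Gaussian and therefore admissible under Condition~\ref{cond:X}; the point of this choice is the clean identity $\mathrm{KL}\!\left(\mathcal N(\mu,1)\,\|\,\mathcal N(\mu',1)\right)=\tfrac12(\mu-\mu')^2$. The second step is to build two instances $A,A'\in[-1,1]^{n\times n}$, both satisfying Condition~\ref{cond:A}, that disagree in exactly one duel and by only $2\alpha$ there, with $A$ non-redundant and possessing a Nash in $\mathcal S_n^{\alpha}$, while $A'$ is $\alpha$-redundant. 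For $n=3$ this is immediate from Section~2.3: put $a=b=1$, and let $c=3\alpha$ in $A$ and $c=\alpha$ in $A'$ (assume $\alpha<\tfrac13$). Then $A$ has the unique Nash $\tfrac1{2+3\alpha}(3\alpha,1,1)$, whose minimum coordinate $\tfrac{3\alpha}{2+3\alpha}$ exceeds $\alpha$, and $A$ is non-redundant since $a,b,c>0$; whereas $A'$ has the unique Nash $\tfrac1{2+\alpha}(\alpha,1,1)$, whose minimum coordinate $\tfrac{\alpha}{2+\alpha}$ is strictly below $\alpha$, so $A'$ admits no Nash in $\mathcal S_3^{\alpha}$ and is $\alpha$-redundant. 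Only the duel between items $2$ and $3$ is affected, with gap $|3\alpha-\alpha|=2\alpha$. For general odd $n\ge5$ one needs, by exactly the same logic, a non-redundant instance whose unique Nash has a coordinate of order $2\alpha$ and whose $\varphi$-value is $\Order(1)$ --- so that, by Lemma~\ref{lem:vertex}, shifting one entry by $2\alpha$ drags that coordinate below $\alpha$ while keeping the matrix non-redundant (hence of a unique, now non-$\alpha$ Nash); such instances are routine to write down, and all other entries are copied identically into $A$ and $A'$.

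The third step applies the $(\alpha,\delta)$-PAC property to the event $E$ that the algorithm concludes ``$A$ is non-redundant''. Since $A$ is non-redundant, $\mathbb P_{A}[E]\ge1-\delta$; since $A'$ is $\alpha$-redundant, the algorithm must conclude ``$\alpha$-redundant'' there with probability $\ge1-\delta$, so $\mathbb P_{A'}[E]\le\delta$. Because every round of the protocol lets the distinguished pair $\{2,3\}$ duel exactly once while every other duel has an identical distribution under $A$ and $A'$, the standard change-of-measure inequality for stopping times (Wald's identity combined with the data-processing inequality for relative entropy) gives
\begin{align*}
\mathbb E_{A}[\tau]\cdot\mathrm{KL}\!\left(\mathcal N(a_{23},1)\,\|\,\mathcal N(a'_{23},1)\right)\ \ge\ \mathrm{kl}\!\left(\mathbb P_{A}[E],\,\mathbb P_{A'}[E]\right)\ \ge\ \mathrm{kl}(1-\delta,\delta),
\end{align*}
where $\mathrm{kl}(p,q)=p\log\tfrac pq+(1-p)\log\tfrac{1-p}{1-q}$ and the last inequality uses the monotonicity of $\mathrm{kl}$ together with $\mathbb P_{A}[E]\ge1-\delta>\delta\ge\mathbb P_{A'}[E]$.

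Finally I substitute $\mathrm{KL}\!\left(\mathcal N(a_{23},1)\,\|\,\mathcal N(a'_{23},1)\right)=\tfrac12(2\alpha)^2=2\alpha^2$ and the standard estimate $\mathrm{kl}(1-\delta,\delta)\ge\log\frac1{2.4\delta}=\log\frac{5}{12\delta}$, which together yield $\mathbb E_{A}[\tau]\ge\frac1{2\alpha^2}\log\frac{5}{12\delta}$, proving the claim for the instance $A$ (and, symmetrically, for $A'$). The only part of this that is not bookkeeping is the instance construction for odd $n\ge5$: one must arrange simultaneously that both matrices obey Condition~\ref{cond:A}, that $A$ lies strictly in the non-redundant region and $A'$ strictly in the $\alpha$-redundant region, and that this separation is realised by an $\ell_\infty$-perturbation of size only $2\alpha$ in a single duel. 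The last requirement is exactly why one must start from an instance already within $\Theta(\alpha)$ of the $\alpha$-redundancy boundary --- as the $3\times3$ gadget above does --- and keeping that distance $\Theta(\alpha)$ while controlling $\varphi$ is the one place where a little care is needed; everything else is the off-the-shelf bandit lower bound.
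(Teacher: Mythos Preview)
Your change-of-measure skeleton is exactly the paper's: Gaussian duel outcomes, a single perturbed entry of size $2\alpha$, the Kaufmann--Capp\'e--Garivier inequality $\mathbb{E}[\tau]\cdot\mathrm{KL}\ge\mathrm{kl}(\mathbb P_{A}(E),\mathbb P_{A'}(E))$, and the bound $\mathrm{kl}(1-\delta,\delta)\ge\log\frac{5}{12\delta}$. On that level the two arguments coincide.

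The gap is in your instance pair. Your $A'$ is still \emph{non-redundant}: in the $3\times3$ gadget its unique Nash is $\frac{1}{2+\alpha}(\alpha,1,1)>0$, and for $n\ge5$ you explicitly say you ``keep the matrix non-redundant'' after the shift. But then both of the algorithm's possible conclusions are true statements about $A'$ --- it is non-redundant \emph{and} $\alpha$-redundant --- and the $(\alpha,\delta)$-PAC guarantee does not force either output. Indeed, conclusion~(a) of Algorithm~\ref{alg:cm} is exactly this case, and Lemma~\ref{lem:1} certifies ``non-redundant'' as correct whenever $\Vec\pi\in\mathcal S_n^{++}$. So you cannot deduce $\mathbb P_{A'}[E]\le\delta$, and the change-of-measure step yields nothing. (Under the reading implicit in the paper's own proof --- a PAC learner must say ``non-redundant'' whenever $A$ is non-redundant --- you would in fact get $\mathbb P_{A'}[E]\ge1-\delta$, the wrong direction.)

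The paper's pair $Q^+=Q(n,\kappa,\alpha)$, $Q^-=Q(n,\kappa,-\alpha)$ is built precisely to avoid this. They differ only in the $(1,n)$ entry, by $2\alpha$, but by Lemma~\ref{lem:Q-sol} the solution of $\Vec x^\top Q^-=\Vec 0^\top$ has \emph{negative} coordinates (the entries away from positions $1,\frac{n+1}{2},n$ equal $-\alpha$), so $Q^-$ is genuinely redundant and only the ``$\alpha$-redundant'' conclusion is admissible there; meanwhile $Q^+$ is non-redundant with $\pi_{\min}=\frac{\alpha}{4\kappa+(n-4)\alpha}>\alpha$ for small $\kappa$, so only ``non-redundant'' is admissible. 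Producing a single-entry $2\alpha$-perturbation that flips an $n\times n$ instance from strictly inside $\mathcal S_n^{\alpha}$ all the way to redundant is the substantive construction here, and it is not quite ``routine to write down'': the paper spends Lemmas~\ref{lem:Q-red}--\ref{lem:Q-sol} on it. Your fix is to make $A'$ actually redundant (for $n=3$, take $c=-\alpha$ rather than $c=\alpha$, and then shrink $a,b$ below $\tfrac12$ so that $A$ stays outside the $\alpha$-redundant region); for $n\ge5$ you need a construction like the paper's $Q$.
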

\begin{theorem}\label{thm:low-varphi}
 Let $\delta$ be a fixed parameter satisfying $0 < \delta \ll 1$. 
 Let $\tau$ denote the running time of 
   an arbitrary $(\alpha,\delta)$-PAC algorithm 
   that identifies whether an arbitrarily given $A$ is non-redundant.  
 Then, 
  $\mathbb{E}{[\tau]} = \Omega\!\left(\varphi(A)^2 \log\frac{1}{\delta}\right)$
   for each $n=5,7,\ldots,19$. 
\end{theorem}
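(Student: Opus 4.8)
The proof will run through an information-theoretic change-of-measure argument in the spirit of the proof of Theorem~\ref{thm:ETlow}, the new ingredient being a pair of hard instances engineered so that $\varphi$ is large. \emph{Construction of the hard family.} For each odd $n\in\{5,7,\ldots,19\}$ the plan is to exhibit explicitly, with the aid of symbolic computation, a one-parameter family of skew-symmetric matrices $A(\rho)\in[-1,1]^{n\times n}$, $\rho>0$ small, that approaches a skew-symmetric matrix $A(0)$ of rank $n-3$ while itself having rank $n-1$ for $\rho>0$; the path is chosen so that the (up to scale unique) kernel vector of $A(\rho)$ normalises to a Nash equilibrium $\Vec{\pi}(\rho)$ converging to an interior point of the simplex, so that $\Vec{\pi}(\rho)\in\mathcal{S}_n^{\alpha}$ for all sufficiently small $\rho$ and all fixed $\alpha$ below some $n$-dependent constant. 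Since $A(0)$ has corank at least $2$, every $(n-1)\times(n-1)$ minor of $A(\rho)$, hence the whole adjugate, and hence each $\det A_j(\rho)$, vanishes to order at least two in $\rho$, whereas the numerators $\Vec{1}^{\top}\mathrm{adj}(A_j(\rho))$ stay bounded away from zero; by the definition \eqref{def:alpha} this forces $\varphi(A(\rho))\to\infty$ (at the rate $\rho^{-2}$). Thus $A(\rho)$ satisfies Condition~\ref{cond:A}, is clearly non-redundant with $\Vec{\pi}(\rho)\in\mathcal{S}_n^{\alpha}$, and has $\varphi(A(\rho))$ as large as desired.

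\emph{The companion instance with the opposite answer.} By Lemma~\ref{lem:vertex} the vertices $\Vec{v}_j$ of $P_{A(\rho)}(\epsilon)$ sit at $\ell_\infty$-distance $\Theta(\epsilon\,\varphi(A(\rho)))$ from $\Vec{\pi}(\rho)$, so at the scale $\epsilon_0=\Theta(1/\varphi(A(\rho)))$ some vertex $\Vec{v}$ of $P_{A(\rho)}(\epsilon_0)$ has a non-positive coordinate while still $\sum_i v_i=1$ and $\Vec{v}^{\top}A(\rho)=\Order(\epsilon_0)\,\Vec{1}^{\top}$ entrywise. A rank-one skew-symmetric correction $E=\frac{1}{\|\Vec{v}\|_2^2}\bigl(\Vec{v}\Vec{g}^{\top}-\Vec{g}\Vec{v}^{\top}\bigr)$ with $\Vec{g}=A(\rho)\Vec{v}$ then satisfies $\Vec{v}^{\top}E=-\Vec{v}^{\top}A(\rho)$ (using $\Vec{v}^{\top}A(\rho)\Vec{v}=0$) and $\|E\|_\infty=\Order(\epsilon_0)$, so $A'(\rho):=A(\rho)+E$ is skew-symmetric, lies in $[-1,1]^{n\times n}$ and satisfies Condition~\ref{cond:A} for small $\rho$, has $\Vec{v}$ as its normalised kernel vector, and is therefore $\alpha$-redundant. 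Consequently $A(\rho)$ and $A'(\rho)$ are both admissible inputs, $\max_{i,j}|a_{ij}-a'_{ij}|=\Theta(1/\varphi(A(\rho)))$, and the only correct output on $A(\rho)$ is ``non-redundant'' while the only correct output on $A'(\rho)$ is ``$\alpha$-redundant''; hence any $(\alpha,\delta)$-PAC algorithm outputs ``non-redundant'' with probability $\ge 1-\delta$ on $A(\rho)$ and with probability $\le\delta$ on $A'(\rho)$.

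\emph{Change of measure.} Now invoke the standard transportation inequality for PAC identification, exactly as in the proof of Theorem~\ref{thm:ETlow}. Taking each duel outcome to follow $\mathcal{N}(a_{ij},1)$, which is $1$-sub-Gaussian and hence consistent with Condition~\ref{cond:X}, the KL divergence accumulated per round between the two instances is $\sum_{\{i,j\}\in\binom{[n]}{2}}\tfrac12(a_{ij}-a'_{ij})^2=\Order\!\bigl(n^2/\varphi(A(\rho))^2\bigr)$, since each round duels every pair once and $\dkl(\mathcal{N}(a,1)\|\mathcal{N}(a',1))=(a-a')^2/2$. The transportation inequality then gives $\mathbb{E}_{A(\rho)}[\tau]\cdot\Order\!\bigl(n^2/\varphi(A(\rho))^2\bigr)\ge \log\frac{5}{12\delta}$, so for fixed $n$ we obtain $\mathbb{E}_{A(\rho)}[\tau]=\Omega\!\bigl(\varphi(A(\rho))^2\log\frac1\delta\bigr)$; letting $\rho$ range over $(0,\rho_0)$ exhibits admissible instances with $\varphi$ arbitrarily large that meet this bound, which is the assertion of the theorem.

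\emph{Main obstacle.} The real work is the first step: producing, for each of the seven values of $n$, a concrete admissible skew-symmetric family limiting to a rank-$(n-3)$ matrix along a direction whose associated kernel vector normalises to an interior Nash equilibrium, and verifying symbolically the exact vanishing orders of the $\det A_j(\rho)$ together with the non-vanishing of the numerators (so that $\varphi(A(\rho))\to\infty$) and the membership $\Vec{\pi}(\rho)\in\mathcal{S}_n^{\alpha}$. A single closed-form construction valid for all odd $n$ seems out of reach — the stratified geometry of low-rank skew-symmetric matrices and of the polytopes $P_A(\epsilon)$ has to be pinned down case by case — which is why the statement is restricted to $n\le 19$ and the verification is delegated to computer-aided symbolic calculation.
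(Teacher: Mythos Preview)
Your overall architecture (two nearby instances with opposite answers, then a change-of-measure lower bound) is right, but both the construction and the analysis diverge substantially from the paper and contain a concrete error.

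\textbf{What the paper actually does.} The paper does \emph{not} build a new pair of instances for Theorem~\ref{thm:low-varphi}. It simply reuses the family $Q(n,\kappa,s)$ already used for Theorem~\ref{thm:ETlow}: take $A=Q^{+}=Q(n,\kappa,\alpha)$ (non-redundant) and $Q^{-}=Q(n,\kappa,-\alpha)$ (redundant), which differ in a \emph{single} entry by $2\alpha$. Theorem~\ref{thm:ETlow} already gives $\mathbb{E}[\tau]\ge \frac{1}{2\alpha^2}\log\frac{5}{12\delta}$. The only new work is an \emph{upper} bound $\varphi(A)\le \frac{4n^2+1}{3\alpha}$ (Proposition~23, via explicit bounds on $\det Q_k$ and on the cofactors of $Q_k$, the latter computer-verified for $n=5,\dots,19$). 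Substituting $\frac{1}{\alpha}\ge c_n\,\varphi(A)$ immediately yields $\mathbb{E}[\tau]\ge \frac{1}{8n^4}\varphi(A)^2\log\frac{5}{12\delta}$. So the proof is a three-line reduction to Theorem~\ref{thm:ETlow}, not a fresh information-theoretic argument; and, contrary to your last paragraph, the paper \emph{does} give a single closed-form family $Q(n,\kappa,s)$ valid for all odd $n$---only the cofactor bound is checked per $n$.

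\textbf{Where your construction goes wrong.} In your rank-$(n-3)$ approach you assert that ``the numerators $\Vec{1}^{\top}\mathrm{adj}(A_j(\rho))$ stay bounded away from zero.'' They do not: if $\rank A(0)=n-3$ then $\rank A_j(0)\le n-2$, hence every $(n-1)\times(n-1)$ minor of $A_j(0)$ vanishes and $\mathrm{adj}(A_j(0))=0$. The correct picture is that the cofactors vanish to order $\ge 1$ while $\det A_j(\rho)$ vanishes to order $\ge 2$, so $\varphi(A(\rho))$ grows like $\rho^{-1}$, not $\rho^{-2}$. This is repairable, but note that the paper's mechanism for blowing up $\varphi$ is different and simpler: $Q(n,\kappa,s)$ keeps $\rank=n-1$ as $s\to 0$, while the kernel vector acquires zero coordinates, which is what makes some $Q_j$ singular and drives $\varphi\sim 1/|s|$. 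Your companion construction via the skew correction $E=\|\Vec v\|_2^{-2}(\Vec v\Vec g^{\top}-\Vec g\Vec v^{\top})$ can be made to work (one checks $\|\Vec g\|_\infty=O(\epsilon_0)$ using $\Vec v^{\top}A\Vec v=0$, and that $\ker A'=\mathrm{span}\,\Vec v$), but it is far heavier than the paper's ``flip the sign of $s$'' and still leaves Condition~\ref{cond:A} and the $[-1,1]$ box to be verified. In short: the strategy is salvageable but overbuilt; the paper's route---upper-bound $\varphi$ on the \emph{same} hard instance used for Theorem~\ref{thm:ETlow} and substitute---is both shorter and avoids the delicate adjugate analysis.
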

 As a consequence of them, 
 we obtain a lower bound 
  $\mathbb{E}{[\tau]} = \Omega\!\left(\max\{\frac{1}{\alpha^2},\varphi(A)^2\} \log\frac{1}{\delta}\right)$
    for each $n=5,7,\ldots,19$.

\section{Concluding Remarks}\label{sec:conclusion}
 Focusing on the nontransitive relation, 
   this paper introduced a dueling bandit problem to identify the non-redundancy of moves. 
 We gave an algorithm with 
   $\Order\!\left(\frac{\varphi(A)^2}{\max\{\alpha^2,\pi_{\min}^2\}}\log \frac{n}{\delta}\right)$ samples. 
 We also gave lower bounds of the sample complexity of the problem 
   $\Omega\!\left(\frac{1}{\alpha^2}\log{\frac{1}{\delta}}\right)$ and 
   $\Omega\!\left(\varphi(A)^2 \log\frac{1}{\delta}\right)$. 
 Filling the gap between upper and lower bounds is a future work. 

 Our algorithm and analysis may feel somehow complicated. 
 A better understanding of the gap between $\varphi(A)$ and $\varphi(\hat{A})$ could 
  provide a simpler algorithm and proof. 
 This paper employed sequential sampling of duels following the work of \citet{maiti2023instance}. 
 Adaptive sampling of duels is a future work. 
 This paper was concerned with identifying whether all moves are indispensable. 
 Finding all indispensable moves in a game is another work. 

\bibliography{cmneref}
\bibliographystyle{abbrvnat}
\newpage
\appendix
\section{Supplemental Proofs of Section~\ref{sec:prelim}}\label{apx:prelim}
\subsection{Proof of Proposition~\ref{prop:skew}}\label{apx:skew}
\begin{proof}[Proof of Proposition~\ref{prop:skew}]
 Let $F_q$ be a finite field of size $q \geq 2$, and 
 consider a uniformly random skew-symmetric matrix $M_n \in F_q^{n \times n}$. 
 Let $N(q,n)$ denote the total number of such matrices of size $n$, and let 
 $S(q,n,2r)$ denote the number of matrices of rank $2r$, 
 noting that the rank of skew-symmetric matrix over any field is always even. 
 It is easy to know that 
\begin{align}
 N(q,n) = q^{\binom{n}{2}}
\end{align}
and by Theorem 3 of Carlitz \cite{Carlitz54} (cf \cite{FG15}), the number of matrices of rank $2r$ is given by 
\begin{align}
S(q,n,2r) = q^{r(r-1)} \frac{\prod_{i=0}^{2r-1}(q^{n-i}-1)}{\prod_{i=1}^{r}(q^{2i}-1)}.
\end{align}
In particular, when $n$ is odd, the maximal possible rank is $n-1$, and asymptotically (as $q \to \infty$), 
the probability that a random skew-symmetric matrix $M_n$ over $F_q$ has rank $n-1$ satisfies 
\begin{align*}
P(\rank(M_n) = n-1) 
 &= \frac{S(q,n,n-1)}{N(q,n)} \\
 &= \frac{q^{\frac{n-1}{2}(\frac{n-1}{2}-1)} \frac{\prod_{i=1}^{n-2}(q^{n-i}-1)}{\prod_{i=1}^{\frac{n-1}{2}} (q^{2i}-1)}}{q^{\binom{n}{2}}} \\
 &= \frac{q^{\frac{(n-1)(n-3)}{4}} \frac{(q^n-1)(q^{n-1}-1)\cdots(q^2-1)}{(q^2-1)(q^4-1)\cdots(q^{n-1}-1)}}{q^{\frac{n(n-1)}{2}}} \\
 &= \frac{q^{\frac{(n-1)(n-3)}{4}} (q^3-1)(q^5-1)\cdots(q^n-1)}{q^{\frac{n(n-1)}{2}}} \\
 &= \frac{q^{\frac{(n-1)(n-3)}{4}} q^3q^5\cdots q^n (1-\frac{1}{q^3}) (1-\frac{1}{q^5})\cdots (1-\frac{1}{q^n})}{q^{\frac{n(n-1)}{2}}} \\
 &\geq \frac{q^{\frac{(n-1)(n-3)}{4}} q^3q^5\cdots q^n (1-\frac{1}{q^3})^{\frac{n-1}{2}} }{q^{\frac{n(n-1)}{2}}} \\
 &= \frac{q^{\frac{(n-1)(n-3)}{4} + \frac{(n-1)(n+3)}{4}} }{q^{\frac{n(n-1)}{2}}} (1-\frac{1}{q^3})^{\frac{n-1}{2}}\\
 &= \left(1-\frac{1}{q^3}\right)^{\frac{n-1}{2}}\\
 &\geq 1-\frac{n-1}{2q^3}
\end{align*}
\end{proof}

\subsection{Proof of Theorem~\ref{thm:3x3}}\label{apx:3x3}
\begin{algorithm}[t]
\caption{Identify if $A \in [-1,1]$ is non-redundant}\label{alg:3x3}
\nl $\hat{A} = (\hat{a}_{ij})_{3 \times 3}$, $Z_{ij} \gets 0$; \\
\nl \For{$t = 1,2,\ldots,T $}{
\nl  \For{$\{i,j\} \in \binom{[3]}{2}$}{
  \nl   get the result $X_{ij}$ of a duel between $i$ and $j$ according to unknown $A$; \\
  \nl   $Z_{ij} \gets Z_{ij} + X_{ij}$, $\hat{a}_{ij} \gets \frac{Z_{ij}}{t}$, $\hat{a}_{ji} \gets -\hat{a}_{ij}$;
  }
\nl Set $\hat{\Delta} = \min \{ |\hat{a}_{12}|, |\hat{a}_{23}|, |\hat{a}_{31}|\} $;\\
\nl  \If{$t > \frac{18}{\hat{\Delta}^2}\log \frac{2}{\delta}$}{
\nl     \If{$\hat{a}_{12}$, $\hat{a}_{23}$ and $\hat{a}_{31}$ has the same sign}{
\nl          Conclude ``$A$ is non-redundant'';}
\nl     \Else{
\nl            Conclude ``$A$ is redundant'';  
             }
\nl     terminate Algorithm~\ref{alg:cm}
         }
   }	
\end{algorithm}

 This section proves Theorem~\ref{thm:3x3}, which immediately follows from 
   an upper bound given by Lemma~\ref{lem:up3x3} and 
   a lower bound given by Lemma~\ref{lem:low-3x3} appearing below. 
\subsubsection{Upper bound}
 Firstly, we prove an upper bound. 
\begin{lemma}\label{lem:up3x3}
 Algorithm~\ref{alg:3x3} correctly identifies whether $A$ is non-redundant 
 in $\Order(\frac{1}{\Delta^2}\log \frac{1}{\delta})$ rounds 
  with probability at least $1-\delta$. 
\end{lemma}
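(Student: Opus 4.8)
The plan is to show that Algorithm~\ref{alg:3x3} is correct with probability at least $1-\delta$ and that, conditioned on the high-probability event, it terminates within $\Order(\frac{1}{\Delta^2}\log\frac{1}{\delta})$ rounds. Recall that by the discussion preceding Theorem~\ref{thm:3x3}, a skew-symmetric $3\times 3$ matrix $A$ with off-diagonal entries $a_{12}=a$, $a_{23}=c$, $a_{31}=b$ (up to sign conventions) is non-redundant if and only if $a_{12}$, $a_{23}$, $a_{31}$ all share the same sign, and in that case $\Delta = \min\{|a_{12}|,|a_{23}|,|a_{31}|\} > 0$. So the algorithm only needs to estimate the signs of these three entries correctly.

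First I would set up the high-probability event. Define $\mathcal{E}_t$ to be the event that $|\hat{a}^{(t)}_{ij} - a_{ij}| < \Delta$ for all $\{i,j\}\in\binom{[3]}{2}$ at the end of round $t$. By Hoeffding's inequality (Theorem~\ref{thm:Hoeff}) applied to each of the three pairs, together with a union bound as in Lemma~\ref{lem:PACtime}, once $t \geq \frac{2}{\Delta^2}\log\frac{6}{\delta}$ we have $\mathbb{P}[\mathcal{E}_t] \geq 1-\delta$. Moreover $|\hat{a}^{(t)}_{ij}-a_{ij}|<\Delta$ forces $\mathrm{sgn}(\hat{a}^{(t)}_{ij}) = \mathrm{sgn}(a_{ij})$ for every pair (since $|a_{ij}|\geq\Delta$), and it forces $|\hat{a}^{(t)}_{ij}| > |a_{ij}| - \Delta \geq 0$ and $|\hat{a}^{(t)}_{ij}| < |a_{ij}| + \Delta \leq 2$, so in particular $\hat{\Delta} := \min\{|\hat{a}^{(t)}_{12}|,|\hat{a}^{(t)}_{23}|,|\hat{a}^{(t)}_{31}|\}$ satisfies $\hat{\Delta} > \Delta - \Delta$ only trivially; I need the sharper two-sided bound $\hat{\Delta} \in (\Delta - \Delta, \Delta + \Delta)$, so I would instead run the event at resolution $\Delta/2$: with $t \geq \frac{8}{\Delta^2}\log\frac{6}{\delta}$ we get $|\hat{a}^{(t)}_{ij}-a_{ij}|<\Delta/2$ with probability $\geq 1-\delta$, which gives $\Delta/2 < \hat\Delta < 3\Delta/2$ and the same sign-recovery property.

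Next I would argue correctness and the stopping time together. On the event (at resolution $\Delta/2$), the stopping condition $t > \frac{18}{\hat\Delta^2}\log\frac{2}{\delta}$ is implied once $t > \frac{18}{(\Delta/2)^2}\log\frac{2}{\delta} = \frac{72}{\Delta^2}\log\frac{2}{\delta}$, since $\hat\Delta > \Delta/2$; hence the algorithm halts by round $\Order(\frac{1}{\Delta^2}\log\frac{1}{\delta})$. Conversely, the stopping rule ensures that when the algorithm does halt at some round $t^\star$, we have $t^\star > \frac{18}{\hat\Delta^2}\log\frac{2}{\delta} \geq \frac{2}{(\hat\Delta/3)^2}\log\frac{2}{\delta}$; combined with $\hat\Delta/3 < \Delta/2 \cdot (2/3)\cdot 3$... — more cleanly, I would note that the threshold $\frac{18}{\hat\Delta^2} = \frac{2}{(\hat\Delta/3)^2}$ and that on the complement of the bad event being avoided we want $\hat\Delta/3 \leq \Delta$; since $\hat\Delta$ could a priori be as large as $2$ when $\Delta$ is tiny, the point is that the Hoeffding bound used to bound the sign-error probability must use the \emph{actual} confidence radius at time $t^\star$. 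Concretely: for any fixed $t$, $\mathbb{P}[\exists \{i,j\}: |\hat{a}^{(t)}_{ij}-a_{ij}|\geq \hat\Delta^{(t)}/3 \text{ and } t>\tfrac{18}{(\hat\Delta^{(t)})^2}\log\tfrac{2}{\delta}] \leq \delta$ by a peeling/union argument over the possible values of $\hat\Delta^{(t)}$, after which $|\hat a^{(t)}_{ij}| \geq \hat\Delta^{(t)}$ gives $|\hat a^{(t)}_{ij}-a_{ij}| < |\hat a^{(t)}_{ij}|$, forcing correct signs and hence the correct conclusion.

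The main obstacle I anticipate is exactly this last point: because the stopping rule is data-dependent through $\hat\Delta$, one cannot simply apply Lemma~\ref{lem:PACtime} at a single pre-determined round. The clean fix is to prove a uniform-in-time guarantee — either via a union bound over dyadic scales of $\hat\Delta$ (there are only $\Order(\log(1/\Delta))$ relevant scales, costing an extra log factor absorbed into the confidence, or handled by a slightly inflated constant), or by directly analyzing the event $\{|\hat a^{(t)}_{ij} - a_{ij}| \geq |\hat a^{(t)}_{ij}|\}$ and showing that whenever the algorithm stops, this event has probability at most $\delta$. Everything else — the sign characterization of non-redundancy, the Hoeffding tail bound, and the arithmetic bounding $t^\star = \Order(\frac{1}{\Delta^2}\log\frac{1}{\delta})$ — is routine.
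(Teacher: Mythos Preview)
Your approach matches the paper's: apply Hoeffding at resolution $\Delta/2$, union-bound over the three pairs, and on the good event conclude that the signs of $\hat a_{ij}$ agree with those of $a_{ij}$ while $\hat\Delta$ lies within a constant factor of $\Delta$, giving both correctness and the $\Order(\tfrac{1}{\Delta^2}\log\tfrac{1}{\delta})$ termination bound. The paper's proof is in fact briefer than yours and stops exactly where you identify the obstacle: it invokes concentration only at the deterministic time $t_0 = \tfrac{8}{\Delta^2}\log\tfrac{6}{\delta}$, notes that on the good event $\hat\Delta \leq \tfrac{3\Delta}{2}$ (and hence $\tfrac{18}{\hat\Delta^2}\log\tfrac{2}{\delta} \geq \tfrac{8}{\Delta^2}\log\tfrac{6}{\delta}$), and declares the conclusion correct---without addressing the possibility that the data-dependent rule $t > \tfrac{18}{\hat\Delta^2}\log\tfrac{2}{\delta}$ fires at some earlier round with an incorrect sign pattern. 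So your diagnosis of the gap is accurate, and the peeling or anytime-confidence fix you sketch is the standard way to make such a stopping-rule analysis rigorous; the paper simply leaves this point implicit.
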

\begin{proof}
Let $\hat{A}^{(t)} = (\hat{a}^{(t)}_{ij})$ denote the estimated pay-off matrix at the end of the $t$-th iteration of Algorithm~\ref{alg:3x3}. 
Suppose $t \geq \frac{8}{\Delta^2}\log \frac{6}{\delta}$ where $\Delta = \min\{|a_{12}|,|a_{23}|,|a_{31}|\}$. 
By Hoeffding's inequality (Theorem~\ref{thm:Hoeff}), 
\begin{align*} 
 \mathbb{P}[ |\hat{a}^{(t)}_{ij} - a_{ij} | \geq \tfrac{\Delta}{2}] 
 \leq 2\exp\left(-\tfrac{(\frac{\Delta}{2})^2}{2}t\right)
 \leq 2\exp\left(-\tfrac{\Delta^2}{8} \left(\tfrac{8}{\Delta^2}\log \tfrac{6}{\delta}\right) \right)
 = \frac{\delta}{3}
\end{align*}
 for any $i,j$. By union bound, 
\begin{align*} 
 \mathbb{P}\left[\left|\hat{a}^{(t)}_{ij} - a_{ij} \right| < \tfrac{\Delta}{2} \mbox { for all $\{i,j\} \in \binom{[3]}{2}$}\right] 
 &\geq 1 - \sum_{i,j} \left(1- \mathbb{P}[ |\hat{a}^{(t)}_{ij} - a_{ij} | \geq \tfrac{\Delta}{2}] \right) \\
 &\geq 1 - \binom{3}{2} \frac{\delta}{3} 
 \geq 1- \delta 
\end{align*}
 and we obtain 
\begin{align*} 
 \mathbb{P}\left[\left|\hat{a}^{(t)}_{ij} - a_{ij} \right| < \tfrac{\Delta}{2} \mbox { for all $\{i,j\} \in \binom{[3]}{2}$}\right] >1 - \delta. 
\end{align*}

If $ [|\hat{a}^{(t)}_{ij} - a_{ij} | < \tfrac{\Delta}{2} \mbox { for all $\{i,j\} \in \binom{[3]}{2}$}] $ then 
  $\hat{\Delta} \leq \Delta + \frac{\Delta}{2}$ holds,
  which implies 
 $\frac{18}{\hat{\Delta}^2}\log \frac{2}{\delta} \geq \frac{8}{\Delta^2}\log \frac{6}{\delta}$. 
Thus, Algorithm~\ref{alg:3x3} concludes correctly with probability at least $1-\delta$. 
\end{proof}

\subsubsection{Lower bound}
 We use the following Bretagnolle–Huber inequality (instead of Pinsker's inequality) for a lower bound. 
\begin{theorem}[Bretagnolle–Huber inequality~\cite{Canonne23}]\label{thm:BHineq}
 Let $\nu$ and $\nu'$ be two probability distributions over~$\Omega$. 
 Let $\dtv(\nu,\nu') = \sup_{A\subseteq \Omega}\{ |\nu(A) - \nu(A)|\}$. Then, 
\begin{align*} 
 \dtv(\nu,\nu') \leq 1-\frac{1}{2} \exp(-D_{KL} (\nu \parallel \nu')).
\end{align*}
\end{theorem}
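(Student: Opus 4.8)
The plan is to prove the Bretagnolle--Huber inequality by routing through the Hellinger affinity (Bhattacharyya coefficient) $\rho(\nu,\nu') = \int \sqrt{pq}\,d\mu$, where $p$ and $q$ denote densities of $\nu$ and $\nu'$ with respect to any common dominating measure $\mu$ (for instance $\mu = \nu+\nu'$). If $D_{KL}(\nu\parallel\nu') = +\infty$ the right-hand side equals $1$ and the inequality is immediate from $\dtv(\nu,\nu')\le 1$, so throughout we may assume $\nu \ll \nu'$ and $D_{KL}(\nu\parallel\nu')<\infty$, which in particular makes $\log(p/q)$ $\nu$-integrable.

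First I would record the two standard identities, both consequences of $\dtv(\nu,\nu') = \tfrac12\int|p-q|\,d\mu$: from $|p-q| = \max(p,q)-\min(p,q)$ and $\int(\max(p,q)+\min(p,q))\,d\mu = \int(p+q)\,d\mu = 2$ we get
\[
1 - \dtv(\nu,\nu') = \int \min(p,q)\,d\mu
\quad\text{and}\quad
\int \max(p,q)\,d\mu = 2 - \int\min(p,q)\,d\mu \le 2.
\]
Next comes the key lower bound on $1-\dtv$. Applying the Cauchy--Schwarz inequality to the factorization $\sqrt{pq} = \sqrt{\min(p,q)}\cdot\sqrt{\max(p,q)}$ gives
\[
\rho(\nu,\nu')^2 \;\le\; \left(\int\min(p,q)\,d\mu\right)\left(\int\max(p,q)\,d\mu\right) \;\le\; 2\bigl(1 - \dtv(\nu,\nu')\bigr),
\]
so that $1 - \dtv(\nu,\nu') \ge \tfrac12\,\rho(\nu,\nu')^2$.

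Finally I would lower bound $\rho$ by $\exp(-\tfrac12 D_{KL}(\nu\parallel\nu'))$. Writing $\rho(\nu,\nu') = \int p\sqrt{q/p}\,d\mu = \mathbb{E}_{\nu}\!\left[\exp\!\left(-\tfrac12\log(p/q)\right)\right]$ and applying Jensen's inequality to the convex map $t\mapsto e^{-t/2}$ with $t = \log(p/q)$ yields
\[
\rho(\nu,\nu') \;\ge\; \exp\!\left(-\tfrac12\,\mathbb{E}_{\nu}[\log(p/q)]\right) \;=\; \exp\!\left(-\tfrac12\,D_{KL}(\nu\parallel\nu')\right).
\]
Chaining this with the previous display gives $1 - \dtv(\nu,\nu') \ge \tfrac12\,e^{-D_{KL}(\nu\parallel\nu')}$, which is the claim after rearrangement.

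The main obstacle is entirely on the measure-theoretic bookkeeping side rather than in the two core inequalities: one must fix a common dominating measure so that all the integrals above are simultaneously well defined, confirm that $\dtv(\nu,\nu') = \tfrac12\int|p-q|\,d\mu$ in the required generality, and cleanly quarantine the degenerate cases ($D_{KL} = \infty$, or $\nu = \nu'$) so that the Jensen step is applied to a genuinely $\nu$-integrable random variable and the equality $\rho(\nu,\nu') = \mathbb{E}_\nu[(q/p)^{1/2}]$ holds (the set $\{p=0\}$ contributing nothing to either side). None of this is deep, but it should be dispatched before the Cauchy--Schwarz and Jensen steps are chained.
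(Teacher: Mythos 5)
The paper does not prove this theorem; it is imported verbatim from the cited reference \cite{Canonne23}, whose proof is exactly the argument you give (Cauchy--Schwarz on $\sqrt{pq}=\sqrt{\min(p,q)}\sqrt{\max(p,q)}$ to get $1-\dtv\ge\tfrac12\rho^2$, then Jensen to get $\rho\ge e^{-D_{KL}/2}$). Your write-up is correct, and the measure-theoretic caveats you flag (choice of dominating measure, the $D_{KL}=\infty$ case, integrability of $\log(p/q)$) are handled appropriately, so nothing is missing.
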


The following lemmas gives the Kullback-Leibler divergence for the normal distributions. 
\begin{theorem}[cf.~\cite{Soch}]\label{prop-normal}
 Let $P = N(\mu_1,\sigma_1)$ and $Q = N(\mu_2,\sigma_2)$, 
 then the Kullback-Leibler divergence $\dkl(P,Q)$ of $P$ from $Q$ satisfies 
\begin{align}
\dkl(P,Q) = 
  \frac{1}{2}\left(\frac{(\mu_2-\mu_1)^2}{\sigma_2^2} + \frac{\sigma_1^2}{\sigma_2^2}
  -\ln \frac{\sigma_1^2}{\sigma_2^2} -1\right). 
\end{align}
\end{theorem}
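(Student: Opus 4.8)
The plan is to compute the Kullback–Leibler divergence directly from its definition, $\dkl(P,Q) = \mathbb{E}_{X \sim P}\!\left[\ln\frac{p(X)}{q(X)}\right]$, where $p$ and $q$ denote the Lebesgue densities of $P = N(\mu_1,\sigma_1)$ and $Q = N(\mu_2,\sigma_2)$ (here $\sigma_1,\sigma_2 > 0$, so all logarithms below are well defined). Writing out the two Gaussian densities and taking the logarithm of their ratio, the integrand collapses to
\begin{align*}
\ln\frac{p(x)}{q(x)} = \frac{1}{2}\ln\frac{\sigma_2^2}{\sigma_1^2} - \frac{(x-\mu_1)^2}{2\sigma_1^2} + \frac{(x-\mu_2)^2}{2\sigma_2^2},
\end{align*}
so the whole computation reduces to taking the $P$-expectation of the two quadratic terms on the right.

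First I would invoke the elementary second-moment identities for $X \sim N(\mu_1,\sigma_1)$: namely $\mathbb{E}_P[(X-\mu_1)^2] = \sigma_1^2$, and, by expanding $(X-\mu_2)^2 = (X-\mu_1)^2 + 2(X-\mu_1)(\mu_1-\mu_2) + (\mu_1-\mu_2)^2$ and using $\mathbb{E}_P[X-\mu_1] = 0$, also $\mathbb{E}_P[(X-\mu_2)^2] = \sigma_1^2 + (\mu_1-\mu_2)^2$. Substituting these into the expectation of the displayed integrand gives
\begin{align*}
\dkl(P,Q) = \frac{1}{2}\ln\frac{\sigma_2^2}{\sigma_1^2} - \frac{1}{2} + \frac{\sigma_1^2 + (\mu_1-\mu_2)^2}{2\sigma_2^2}.
\end{align*}

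Finally I would rearrange this into the asserted form, using $-\ln(\sigma_1^2/\sigma_2^2) = \ln(\sigma_2^2/\sigma_1^2)$ and $(\mu_1-\mu_2)^2 = (\mu_2-\mu_1)^2$, which yields $\dkl(P,Q) = \frac{1}{2}\!\left(\frac{(\mu_2-\mu_1)^2}{\sigma_2^2} + \frac{\sigma_1^2}{\sigma_2^2} - \ln\frac{\sigma_1^2}{\sigma_2^2} - 1\right)$. There is no genuine obstacle here: the statement is a classical fact, and the only points needing a little care are the Gaussian second-moment computation — in particular that the cross term $\mathbb{E}_P[(X-\mu_1)(\mu_1-\mu_2)]$ vanishes — and keeping track of the sign of the logarithmic term when passing from $\ln(\sigma_2^2/\sigma_1^2)$ to $-\ln(\sigma_1^2/\sigma_2^2)$; so the proof is essentially a bookkeeping exercise.
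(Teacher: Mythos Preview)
Your proof is correct and is the standard direct computation of the KL divergence between two Gaussians. The paper itself does not prove this statement; it simply cites it as a known fact (cf.~\cite{Soch}), so there is nothing to compare against beyond noting that your argument is exactly the classical one.
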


 Now, we prove a lower bound.
\begin{lemma}\label{lem:low-3x3}
 Let $\delta$ be fixed parameters respectively satisfying $0 < \delta \ll 1$. 
 Let $\tau$ denote the running time of 
   an arbitrary $\delta$-PAC algorithm 
   that identifies whether a $3 \times 3$ skew-symmetric matrix $A$ is non-redundant.  
 Then, 
  $\tau\geq \frac{1}{2\Delta^2}\log{\frac{1}{\delta}}$ where $\Delta = \min\{ |a_{12}|,|a_{23}|,|a_{31}| \}$. 
\end{lemma}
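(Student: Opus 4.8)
The plan is to use a standard change-of-measure (or "two-environment") argument based on the Bretagnolle--Huber inequality (Theorem~\ref{thm:BHineq}). First I would construct two instances $A$ and $A'$ that a correct $\delta$-PAC algorithm must distinguish: take $A$ to be non-redundant, i.e., with $a_{12},a_{23},a_{31}$ all positive (and minimal absolute value $\Delta$), and let $A'$ be obtained by flipping the sign of whichever of these three entries achieves the minimum $\Delta$, so that $A'$ is redundant. These two matrices differ in a single duel entry (say the $\{i,j\}$ duel) and the gap between the corresponding means is exactly $2\Delta$. Any correct algorithm must output ``non-redundant'' with probability $\geq 1-\delta$ under $A$ and ``redundant'' with probability $\geq 1-\delta$ under $A'$; hence the total variation distance between the two observation laws, restricted to the event defined by the algorithm's output, is at least $1-2\delta$.

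The core step is then to bound the KL divergence between the two observation processes. Under the sub-Gaussian/normal model, each round contributes i.i.d.\ observations on the $\{i,j\}$ duel with means differing by $2\Delta$ and variance at most $1$, and all other duels have identical laws under $A$ and $A'$, so they contribute nothing. By the chain rule for KL divergence and Wald-type stopping-time identity (or Theorem~\ref{prop-normal} with $\sigma_1=\sigma_2$), after $\tau$ rounds in expectation the KL divergence between the two laws is at most $\mathbb{E}[\tau]\cdot \frac{(2\Delta)^2}{2} = 2\Delta^2\,\mathbb{E}[\tau]$. Plugging this into Theorem~\ref{thm:BHineq} gives
\begin{align*}
1-2\delta \;\leq\; \dtv \;\leq\; 1-\tfrac{1}{2}\exp\!\big(-2\Delta^2\,\mathbb{E}[\tau]\big),
\end{align*}
which rearranges to $\exp(-2\Delta^2\,\mathbb{E}[\tau]) \leq 4\delta$, i.e., $\mathbb{E}[\tau] \geq \frac{1}{2\Delta^2}\log\frac{1}{4\delta}$. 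Up to the harmless constant inside the logarithm (and since $\tau$ in the statement is presumably a deterministic worst-case bound, or one absorbs the constant into $\delta \ll 1$), this yields the claimed $\tau \geq \frac{1}{2\Delta^2}\log\frac{1}{\delta}$.

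The main obstacle I anticipate is the bookkeeping around the stopping time $\tau$: one needs the version of the divergence decomposition that holds for data-dependent stopping times (a Wald-style identity), so that $\mathbb{E}[\tau]$ rather than a fixed horizon appears in the KL bound. This requires that $\tau$ have finite expectation and that the algorithm be well-defined as a stopping rule; handling this carefully — and making sure the single-coordinate perturbation genuinely flips the non-redundancy status while keeping all entries in $[-1,1]$ — is where the real care is needed. Everything else is a routine instantiation of the Bretagnolle--Huber lower-bound template.
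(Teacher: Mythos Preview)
Your proposal is correct and follows essentially the same route as the paper's proof: both construct the two instances by flipping the sign of the entry realizing $\Delta$, make all other duels deterministic, compute the per-round KL as $2\Delta^2$ from Theorem~\ref{prop-normal}, and finish with the Bretagnolle--Huber inequality. If anything, you are slightly more careful than the paper: you apply Theorem~\ref{thm:BHineq} with the stated factor $\tfrac{1}{2}$ (yielding $\log\frac{1}{4\delta}$), whereas the paper's displayed step uses a factor $2$ there to land exactly on $\log\frac{1}{\delta}$; and you flag the Wald-type stopping-time identity that the paper invokes only implicitly via the ``chain rule'' line $\dkl(\mathbb{P}_{\nu^+}\|\mathbb{P}_{\nu^-})=\tau\,\dkl(\nu^+\|\nu^-)$.
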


\begin{proof}
Let 
$Q^+=
\begin{pmatrix}
 0 & a &  -b \\
 -a & 0 & c \\
 b & -c & 0 \\
\end{pmatrix}
$ and 
$Q^-=
\begin{pmatrix}
 0 & -a &  -b \\
 a & 0 & c \\
 b & -c & 0 \\
\end{pmatrix}
$ 
where we may assume $a,b,c,>0$ and $\min\{a,b,c\} = a$ without loss of generality. 
 Note that $Q^+$ is non-redundant while $Q^-$ is redundant. 
 We are concerned with the following decision problem under the dueling bandit setting: 
   given an unknown matrix $A=(a_{ij})$ such that $A \in \{Q^+,Q^-\}$ and 
   decide whether $A$ is redundant or not 
   by observing the results of duels. 
 Here,  the result $X_{ij}$ ($i<j$) of duels is deterministic unless $(i,j) = (1,2)$
   so that $X_{23} = c$ and $X_{31} = b$  at any time, and  
 the result $X_{12}$ follows the normal distribution $\mathcal{N}(a, 1)$ with mean $a$ and variance $1$.  
 For convenience, 
  let $\nu^+ = \mathcal{N}(a, 1)$ and $\nu^- = \mathcal{N}(-a, 1)$,  
  thus $X_{12}$ follows $\nu^+$ if $A=Q^+$, otherwise $\nu^-$. 
 We note that the KL divergence between $\nu^+$ and $\nu^-$ 
   satisfies 
\begin{align}
   \dkl(\nu^+ \| \nu^-) = 2a^2 =2\Delta^2
 \label{eq:kl-normal3}
\end{align}
    (see Theorem~\ref{prop-normal}).

  Suppose we have a $\delta$-PAC algorithm for the problem for $0<\delta<1/3$:  
  Let $\mathcal{\mathcal{E}^+}$ (resp.\  $\mathcal{\mathcal{E}^-}$) 
   denote the event that algorithm determines ``$A$ is non-redundant (resp.\ redundant).'' 
  Let $\mathbb{P}_{\nu^+}(\mathcal{E}^+)$ (resp. $\mathbb{P}_{\nu^-}(\mathcal{E}^+)$) 
    denote the probability of $\mathcal{E}^+$ 
    under $\nu^+$ (resp.  $\nu^-$), 
    then the PAC algorithm must satisfy both of 
\begin{align*}
  \mathbb{P}_{\nu^+}(\mathcal{E}^+) \geq 1-\delta 
  \qquad\mbox{and}\qquad
  \mathbb{P}_{\nu^-}(\mathcal{E}^+) \leq \delta
\end{align*}
 which implies 
\begin{align}
  \mathbb{P}_{\nu^+}(\mathcal{E}^+)-  
  \mathbb{P}_{\nu^-}(\mathcal{E}^+) \geq 1-2\delta. 
\label{eq:pdiff3}
\end{align}
 Notice that 
\begin{align}
  \mathbb{P}_{\nu^+}(\mathcal{E}^+)-  
  \mathbb{P}_{\nu^-}(\mathcal{E}^+) 
  \leq \sup_{A \in E} | \mathbb{P}_{\nu^+} (A)-  \mathbb{P}_{\nu^-}(A) |
 = \dtv (\mathbb{P}_{\nu^+},\mathbb{P}_{\nu^-})
\label{eq:dtv3}
\end{align}
holds. 
By the Bretagnolle–Huber inequality Theorem~\ref{thm:BHineq}, 
\begin{align}
\dtv (\mathbb{P}_{\nu^+},\mathbb{P}_{\nu^-}) \leq 1-2\exp(-\dkl(\mathbb{P}_{\nu^+} \| \mathbb{P}_{\nu^-}))
\label{eq:BH3}
\end{align}
holds. 
By using the chain rule, we can prove that 
\begin{align}
\dkl(\mathbb{P}_{\nu^+} \| \mathbb{P}_{\nu^-}) = \tau \dkl(\nu^+ \| \nu^-) = 2 \tau \Delta^2
\label{eq:dklx3}
\end{align}
where the last equality follows \eqref{eq:kl-normal3}. 
Thus \eqref{eq:pdiff3}--\eqref{eq:dklx3} imply 
\begin{align*}
 1-2\delta \leq 1-2\exp(-2 \tau \Delta^2)
\end{align*}
and hence 
\begin{align*}
 \tau \geq \frac{-\ln \delta}{2\Delta^2}. 
\end{align*}
\end{proof}

\section{Supplemental Proofs of Section~\ref{sec:upper}}\label{apx:upper}
\begin{proof}[Proof of Lemma~\ref{lem:rankAj}]
Note that $\rank(A) = n-1$ implies that $\dim(\ker(A))=1$. 
 Let $\Vec{c} \in \ker(A) \setminus \{\Vec{0}\}$, i.e., 
$\Vec{c} \neq \Vec{0}$ and $\Vec{c}^{\top}A=\Vec{0}^{\top}$. 

($\Leftarrow$)
 Since $\sum_{i=1}^n x_i \neq 0$, 
 let $\Vec{c}' = \frac{\Vec{c}}{\sum_{i=1}^n c_i}$. 
 Then, $\Vec{c}'$ is the unique solution of $\Vec{x}^{\top}A=0$ and $x_1 + \cdots + x_n = 1$. 
 This implies that $\Vec{1}$ is independent of the column space of $A$.  
 Now, the claim is easy from a standard argument of linear algebra.

($\Rightarrow$) Consider the contraposition: 
 if $\Vec{x}^{\top}A \neq \Vec{0}$ or $\sum_{i=1}^n x_i = 0$ holds for all $\Vec{x}\neq\Vec{0}$ 
 then $\rank(A_j) \neq n$ for some $j \in \{1,\ldots,n\}$. 
 It is trivial from $\Vec{c}^{\top}(\Vec{1},A) = \Vec{0}^{\top}$. 
\end{proof}

\section{Proofs of Section~\ref{sec:low}}\label{apx:low}
\subsection{Preliminary}
 To begin with, we construct a bad example. 
Let $Q=Q(n,\kappa,\mys) = (q_{ij})$ 
  for $0< \kappa \ll 1$ and $|\mys| \ll \kappa$
  be an $n \times n$ matrix for an odd $n$ satisfying $n \geq 5$  given by 
\begin{align}
 q_{ij} = \begin{cases}
  0 & \mbox{if $i=j$} \\
  \kappa & \mbox{if $1 \leq  j - i  \leq \frac{n-1}{2} $} \\
 -\kappa & \mbox{if $\frac{n+1}{2} \leq j - i \leq n-2$} \\
 -2\kappa+\mys & \mbox{if $(i,j)=(1,n)$} \\
 -q_{ji} & \mbox{if $i > j$} \\
 \end{cases}
\end{align}
for $(i,j) \in n^2$.  
For instance, $Q=Q(n,\kappa,\mys)$ is given by 
\begin{align*}
Q=\kappa\begin{pmatrix}
0 & 1 & 1 &1&-1& -1 & -2+\frac{\mys}{\kappa} \\
-1 & 0 & 1 & 1 &1&-1& -1 \\
-1 & -1 & 0 & 1 & 1 &1&-1\\
-1 & -1 & -1 & 0 & 1 &1&1\\
1&-1 & -1 & -1 & 0 & 1 &1\\
1&1&-1 & -1 & -1 & 0 & 1 \\
2-\frac{\mys}{\kappa} & 1 &1&-1& -1 & -1 & 0
\end{pmatrix}
\end{align*}
for $n=7$. 
As we will see in Section~\ref{apx:low2}, 
 $\det(Q_k) = \kappa^n \left( (n-4)\left(\frac{\mys}{\kappa}\right)^2  + 4\frac{|\mys|}{\kappa} \right)$, and 
 $\varphi(Q) \simeq \frac{2n-8}{|\mys|}$ asymptotic to $|\mys| \to 0$ which is almost independent of $\kappa$, interestingly. 

This section establishes the following lemma. 
\begin{lemma}\label{lem:Q-red}
 For any odd $n \geq 5$, 
  $Q$ is non-redundant if $0 < \mys < 2\kappa$, otherwise redundant. 
\end{lemma}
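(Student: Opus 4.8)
The plan is to verify the two conditions in Theorem~\ref{thm:kap1} for the skew-symmetric matrix $Q=Q(n,\kappa,\mys)$: namely that $\rank(Q)=n-1$ and that $\Vec{x}^{\top}Q=\Vec{0}^{\top}$ has a strictly positive solution exactly in the stated parameter range. Since $Q$ is $\kappa$ times a matrix that depends only on $\mys/\kappa$ (plus the trivial diagonal), without loss of generality one can work with $\tilde{Q}=Q/\kappa$ and let $\rho=\mys/\kappa$, so the claim becomes: $\tilde{Q}$ is non-redundant iff $0<\rho<2$. I would first establish $\rank(Q)=n-1$: for odd $n$ the determinant of a skew-symmetric matrix is automatically $0$, so it suffices to exhibit one nonzero $(n-1)\times(n-1)$ minor, or equivalently to show $\ker(Q)$ is one-dimensional. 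The cleanest route is to compute the cofactors $\det(Q_k)$ (the principal $(n-1)\times(n-1)$ submatrices obtained by deleting row and column $k$) and show at least one is nonzero; the excerpt already announces the value $\det(Q_k)=\kappa^n\bigl((n-4)\rho^2+4|\rho|\bigr)$, which is nonzero for $\rho\neq 0$, giving $\rank(Q)=n-1$. (The case $\mys=0$ must be handled separately, but there $Q$ is visibly the "extended Jan-ken" circulant-type matrix with a known useless move, hence redundant, consistent with the lemma.)

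Next I would find the kernel vector explicitly. Because $Q$ has an almost-circulant structure (each row is a cyclic shift with a single perturbed entry at position $(1,n)$), I expect the solution $\Vec{\pi}$ of $\Vec{\pi}^{\top}Q=\Vec{0}^{\top}$ to be close to the uniform vector $\frac{1}{n}\Vec{1}$ when $\mys$ is small, with a correction of order $|\mys|/\kappa$ concentrated on the two coordinates $1$ and $n$ involved in the perturbation. Concretely I would solve the linear system $\Vec{\pi}^{\top}Q=\Vec{0}^{\top}$, using that for $\mys=0$ the uniform vector is in the kernel of the unperturbed circulant part, and then track how the $-2\kappa+\mys$ entry (versus the $-\kappa$ that a pure circulant pattern would have) forces the correction. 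Using the relation $\Vec{\pi}^{\top}=\Vec{e}_j^{\top}Q_j^{-1}$ from Lemma~\ref{lem:xA_i}, the coordinates of $\Vec{\pi}$ are ratios of cofactors, so the sign analysis reduces to comparing signs of the various $(n-1)\times(n-1)$ minors. I would then show that every coordinate $\pi_i$ is strictly positive precisely when $0<\rho<2$, and that at the endpoints $\rho=0$ or $\rho=2$ (and outside) some coordinate vanishes or goes negative, so that $Q$ fails to be completely mixed there. By Theorem~\ref{thm:kap1}, positivity of the kernel solution together with $\rank(Q)=n-1$ is exactly equivalent to non-redundancy, which closes the argument.

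The main obstacle I anticipate is the explicit determinant/cofactor computation for general odd $n$: the matrix is not a genuine circulant (the single entry at $(1,n)$ breaks the symmetry), so one cannot simply read off eigenvalues from roots of unity. I would handle this by writing $Q$ as a clean circulant-type skew-symmetric matrix $C$ (with entry pattern $\kappa,\ldots,\kappa,-\kappa,\ldots,-\kappa$ and $-\kappa$ in the corner) plus a rank-two perturbation $(\mys-\kappa)(\Vec{e}_1\Vec{e}_n^{\top}-\Vec{e}_n\Vec{e}_1^{\top})$ supported on coordinates $1$ and $n$, then applying the matrix determinant lemma / cofactor expansion along row $1$ and column $n$ to reduce everything to determinants of the structured base matrix and its small submatrices, which can be evaluated by recursion on $n$. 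This is the computational heart referred to as "Section~\ref{apx:low2}"; once the formula $\det(Q_k)=\kappa^n\bigl((n-4)\rho^2+4|\rho|\bigr)$ and the analogous expressions for the off-diagonal cofactors are in hand, the sign bookkeeping that determines the range $0<\mys<2\kappa$ is routine. A secondary subtlety is the absolute value $|\mys|$ appearing in the formulas, which signals a genuine case split at $\mys=0$; I would treat $\mys>0$ and $\mys<0$ separately, confirming that for $\mys\le 0$ or $\mys\ge 2\kappa$ the would-be Nash vector leaves the open simplex, hence redundancy.
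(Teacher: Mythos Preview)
Your high-level strategy---invoke Theorem~\ref{thm:kap1}, establish $\rank(Q)=n-1$, locate the kernel vector, and test positivity---is exactly the paper's. The execution differs. You propose to extract the kernel vector via cofactor ratios $\Vec{\pi}^{\top}=\Vec{e}_j^{\top}Q_j^{-1}$ together with a rank-two perturbation / matrix-determinant-lemma analysis of the underlying circulant. The paper bypasses all of that: Lemma~\ref{lem:Q-sol} simply writes down the answer
\[
x_i=\begin{cases}\kappa & i\in\{1,n\},\\ 2\kappa-\mys & i=\tfrac{n+1}{2},\\ \mys & \text{otherwise},\end{cases}
\]
and verifies $\Vec{x}^{\top}Q=\Vec{0}^{\top}$ by a few lines of direct multiplication. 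The positivity window $0<\mys<2\kappa$ is then read off instantly from these three values (and for $\mys\le 0$ or $\mys\ge 2\kappa$ the vector has entries of both signs, so no scalar multiple lies in $\mathcal{S}_n^{++}$). Your route would eventually arrive at the same vector, but guess-and-verify is dramatically shorter than deriving it through cofactors; the determinant formulas you plan to compute are in fact only used later in the paper for bounding $\varphi(Q)$, not for this lemma.

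Two smaller points. First, the $Q_k$ whose determinant you quote is not a principal $(n-1)\times(n-1)$ submatrix; in the paper's notation $Q_k$ is the $n\times n$ matrix with column $k$ replaced by $\Vec{1}$ (hence the factor $\kappa^n$, not $\kappa^{n-1}$). Nonvanishing of $\det(Q_k)$ still gives $\rank(Q)\ge n-1$, since replacing one column raises rank by at most one, so your rank step survives once the notation is fixed. Second, your heuristic ``$\Vec{\pi}$ close to uniform for small $\mys$'' is misleading: the perturbation off the circulant has coefficient $\mys-\kappa$, which is \emph{not} small when $\mys$ is, and indeed the exact kernel vector at $\mys=0$ is supported on only the three coordinates $\{1,\tfrac{n+1}{2},n\}$, far from uniform. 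If you pursued the perturbative ansatz you would be led astray; the explicit formula above is the right starting point.
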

 Lemma~\ref{lem:Q-red} immediately follows from Lemmas~\ref{lem:Q-rank} and \ref{lem:Q-sol} appearing below. 
\begin{lemma}\label{lem:Q-rank}
 $\rank(Q) = n-1$ unless $\mys \in \{0,2\kappa\}$. 
\end{lemma}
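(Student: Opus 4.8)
\textbf{Proof proposal for Lemma~\ref{lem:Q-rank}.}

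The plan is to compute the characteristic-polynomial-free rank of $Q$ by exploiting the circulant-like structure of $Q$ away from the corner perturbation. First I would write $Q = \kappa(C + \mys/\kappa \cdot E)$, where $C = C(n)$ is the skew-symmetric matrix obtained by setting $\mys = 0$ (so $q_{1n} = -2\kappa$), and $E = \Vec{e}_1 \Vec{e}_n^\top - \Vec{e}_n \Vec{e}_1^\top$ is the rank-$2$ skew-symmetric matrix supported on the $(1,n)$ and $(n,1)$ entries. Scaling by $\kappa \neq 0$ does not change the rank, so it suffices to analyze $C + (\mys/\kappa) E$. The matrix $C$ is \emph{almost} a skew-symmetric circulant: if we replaced the $(1,n)$ entry $-2\kappa$ by $-\kappa$ and the $(n,1)$ entry by $+\kappa$, we would get exactly the circulant $R$ with first row $(0,1,1,\ldots,1,-1,\ldots,-1)$ (that is, $+1$ in positions $2,\ldots,\frac{n+1}{2}$ and $-1$ in positions $\frac{n+3}{2},\ldots,n$, up to the $\kappa$ factor). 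So $C = \kappa(R - E)$ — wait, more carefully, $C/\kappa = R - E$ since the circulant would put $-1$ in the $(1,n)$ slot and $C/\kappa$ puts $-2$ there, a difference of $-1$ in entry $(1,n)$ and $+1$ in entry $(n,1)$, which is exactly $-E$. Hence $Q/\kappa = R - E + (\mys/\kappa)E = R - (1 - \mys/\kappa)E$.

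Next I would diagonalize the circulant $R$ via the discrete Fourier transform: its eigenvalues are $\lambda_\ell = \sum_{k=1}^{\frac{n-1}{2}} \omega^{k\ell} - \sum_{k=\frac{n+1}{2}}^{n-1}\omega^{k\ell}$ for $\ell = 0,\ldots,n-1$, where $\omega = e^{2\pi i/n}$. One checks $\lambda_0 = 0$ (the row sum of $R$ is zero since there are $\frac{n-1}{2}$ ones and $\frac{n-1}{2}$ minus-ones), and for $\ell \neq 0$ the geometric sums give a closed form; because $R$ is real skew-symmetric these come in conjugate pairs $\lambda_\ell = -\overline{\lambda_\ell} = \pm i \mu_\ell$ with $\mu_\ell > 0$ for all $\ell \neq 0$ — this nonvanishing is the point where I expect the only genuine computation, showing the trigonometric sum $\mu_\ell = \cot(\pi\ell/n)$ (or a similar Dirichlet-type expression) never vanishes for $\ell \in \{1,\ldots,n-1\}$. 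Thus $\rank(R) = n-1$ with $\ker(R) = \myspan\{\Vec{1}\}$. Now $Q/\kappa = R - (1-\mys/\kappa)E$ is a rank-$\le 2$ perturbation of $R$; I would show $\rank(Q) = n-1$ by exhibiting that $\Vec{1} \notin \ker(Q)$ when $\mys \notin \{0, 2\kappa\}$ and that $\ker(Q)$ is at most one-dimensional. The second fact follows because $Q$ is skew-symmetric of odd order, so $\det(Q) = 0$ and $\rank(Q) \le n-1$. For the first, since $R\Vec{1} = \Vec{0}$, we get $Q\Vec{1}/\kappa = -(1-\mys/\kappa)E\Vec{1} = -(1-\mys/\kappa)(\Vec{e}_1 - \Vec{e}_n)$, which is nonzero precisely when $\mys \neq \kappa$; that only rules out the wrong value, so instead I would compute $\ker(Q)$ directly.

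A cleaner route for the lower bound on the rank: the kernel of a skew-symmetric matrix equals the kernel of the quadratic form, and by Lemma~\ref{lem:rankAj}-type reasoning, $\rank(Q) = n-1$ iff $Q$ has a one-dimensional kernel, equivalently iff some $(n-1)\times(n-1)$ principal submatrix (obtained by deleting row and column $k$) is nonsingular, i.e.\ some principal Pfaffian $p_k \neq 0$. So the plan reduces to: show at least one principal Pfaffian of $Q$ is nonzero whenever $\mys \notin \{0,2\kappa\}$. The subsequent section (as the excerpt announces) computes $\det(Q_k) = \kappa^n\big((n-4)(\mys/\kappa)^2 + 4|\mys|/\kappa\big)$; note $Q_k$ there is the matrix with the $k$-th column replaced by $\Vec{1}$, and by Lemma~\ref{lem:rankAj} the non-singularity of all $Q_j$ is equivalent to $\rank(Q)=n-1$ together with the existence of a solution of $\Vec x^\top Q = \Vec 0^\top$ with nonzero coordinate sum. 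Since $(n-4)(\mys/\kappa)^2 + 4|\mys|/\kappa > 0$ for all $\mys \neq 0$ (using $n \geq 5$ so $n - 4 \geq 1$), $\det(Q_k) \neq 0$, which forces $\rank(Q) = n-1$; the excluded value $\mys = 2\kappa$ enters not here but in Lemma~\ref{lem:Q-sol} where the sign of the kernel vector's entries flips. So the honest dependency is: Lemma~\ref{lem:Q-rank} follows from the determinant computation of Section~\ref{apx:low2} plus Lemma~\ref{lem:rankAj}, and the main obstacle is simply carrying out that determinant computation cleanly — most naturally by row/column reduction exploiting the banded circulant pattern, or by the Pfaffian formula of Theorem~\ref{thm:kap2}. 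I would present it by first stating the determinant formula as the key computational claim, then deducing $\rank(Q) = n-1$ in two lines.
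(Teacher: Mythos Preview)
The paper does not actually prove this lemma: it says the proof is by ``artificial and systematic elementary row operations'', omits the details, and points to a Python script for small $n$. So there is little to compare against in detail.

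Your final route---deducing $\rank(Q)=n-1$ from the nonvanishing of some $\det(Q_k)$---has a genuine gap. First a minor point: Lemma~\ref{lem:rankAj} is stated \emph{assuming} $\rank(A)=n-1$, so it cannot be invoked to conclude that; the implication you actually need is more elementary (if $\det(Q_k)\neq 0$ for some $k$ then the $n-1$ columns of $Q$ other than the $k$-th are independent, so $\rank(Q)\ge n-1$, and odd skew-symmetry gives equality). The real problem is the determinant formula you quote. Determinants are polynomials in $\mys$ and cannot contain $|\mys|$; the line you cite is an informal teaser. The formula in the proof of Proposition~\ref{prop:Qdet} for generic $k$ is $(n-4)(\mys/\kappa)^2+4(\mys/\kappa)$, which vanishes at $\mys=-4\kappa/(n-4)$, and so does $\det(Q^{\mathrm o}_1)=(n-4)(\mys/\kappa)+4$. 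Worse, at $\mys=-4\kappa/(n-4)$ the kernel vector $\Vec{x}$ of Lemma~\ref{lem:Q-sol} has coordinate sum $2\kappa+(2\kappa-\mys)+(n-3)\mys=4\kappa+(n-4)\mys=0$, so $\Vec{1}$ together with every column of $Q$ lies in the hyperplane $\Vec{x}^\perp$; hence $\det(Q_k)=0$ for \emph{every} $k$ simultaneously. The $Q_k$-determinant route therefore cannot separate this value of $\mys$ from a genuine rank drop, even though the lemma asserts $\rank(Q)=n-1$ there (since $-4\kappa/(n-4)\notin\{0,2\kappa\}$). You would still need a direct rank computation---precisely the row reduction the paper gestures at---to handle this value. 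In the regime $|\mys|\ll\kappa$ actually used downstream this bad value is avoided for each fixed $n$, so your argument does suffice for the applications (Lemma~\ref{lem:Q-red} with small $|\mys|$); but it does not prove Lemma~\ref{lem:Q-rank} as stated.
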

 The lemma is proved by some artificial and systematic elementary row operations, 
  but it is quite lengthy and we omit the detail. 
 Instead, 
  the readers can be confirmed with a supplemental python program proof\_sol.py Lemma~\ref{lem:Q-rank} for small $n$. 

Next, we are concerned with the solution of $\Vec{x}^{\top}Q = \Vec{0}$. 
\begin{lemma}\label{lem:Q-sol}
Let $\Vec{x}$ be 
\begin{align}
 x_i = \begin{cases}
  \kappa & (\mbox{if $i\in \{1,n\}$})\\
  2\kappa-\mys & (\mbox{if $i=\frac{n+1}{2}$}) \\
 \mys & (\mbox{otherwise}).
 \end{cases}
\label{eq:Q-sol}
\end{align}
Then $\Vec{x}^{\top} Q = \Vec{0}^{\top}$. 
\end{lemma}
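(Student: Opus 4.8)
The plan is to verify the claimed identity $\Vec{x}^{\top} Q = \Vec{0}^{\top}$ by direct computation, column by column, exploiting the cyclic structure of the matrix $Q = Q(n,\kappa,\mys)$. For a fixed column index $j \in [n]$, the entry $(\Vec{x}^{\top}Q)_j = \sum_{i=1}^n x_i q_{ij}$, and by the definition of $Q$ the sign of $q_{ij}$ depends only on $j-i \bmod n$: it is $+\kappa$ when $j-i$ lies (cyclically) in the ``first half'' window $\{1,\ldots,\frac{n-1}{2}\}$, it is $-\kappa$ in the ``second half,'' with the single exceptional entry $q_{1n} = -2\kappa+\mys$ (and its skew-symmetric partner $q_{n1} = 2\kappa-\mys$). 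The weights $x_i$ from \eqref{eq:Q-sol} are themselves ``almost constant'' equal to $\mys$, with three exceptional coordinates: $x_1 = x_n = \kappa$ and $x_{(n+1)/2} = 2\kappa-\mys$. So the strategy is to write $\Vec{x} = \mys\Vec{1} + \Vec{w}$ where $\Vec{w}$ is supported on $\{1, n, \tfrac{n+1}{2}\}$ with $w_1 = w_n = \kappa-\mys$ and $w_{(n+1)/2} = 2\kappa - 2\mys = 2(\kappa-\mys)$, and then compute $\mys\Vec{1}^{\top}Q$ and $\Vec{w}^{\top}Q$ separately.

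First I would handle the term $\mys \Vec{1}^{\top} Q$. The $j$-th component is $\mys \sum_{i=1}^n q_{ij} = -\mys \sum_{i=1}^n q_{ji}$ (skew-symmetry), i.e. $-\mys$ times the $j$-th row sum of $Q$. Each row of $Q$ (ignoring the exceptional entry) has exactly $\frac{n-1}{2}$ entries equal to $+\kappa$ and $\frac{n-1}{2}$ equal to $-\kappa$, so every row sum is $0$ except that row $1$ loses $\kappa$ from the exceptional slot and picks up $\mys$ there (net row sum $\mys - \kappa \cdot 0$... more precisely row $1$'s sum is $(\text{balanced part}) + (-2\kappa+\mys) - (-\kappa) = \mys - \kappa$), and row $n$ symmetrically has sum $-(\mys-\kappa) = \kappa - \mys$. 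Hence $\mys\Vec{1}^{\top}Q = -\mys(\kappa-\mys)\Vec{e}_1^{\top} - \mys(\mys-\kappa)\Vec{e}_n^{\top}$, a vector supported on coordinates $1$ and $n$. Then I would compute $\Vec{w}^{\top}Q = (\kappa-\mys)(q_{1\cdot} + q_{n\cdot}) + 2(\kappa-\mys) q_{(n+1)/2,\cdot}$ as a linear combination of three rows of $Q$; here the cyclic shift structure should make $q_{1\cdot}+q_{n\cdot}$ and $q_{(n+1)/2,\cdot}$ have a clean closed form, and I expect the combination to exactly cancel the residual $\mys\Vec{1}^{\top}Q$ term and vanish on the remaining coordinates.

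An alternative, possibly cleaner, route: since Lemma~\ref{lem:Q-rank} already asserts $\rank(Q) = n-1$ for $\mys \notin\{0,2\kappa\}$, the kernel of $Q^{\top}$ is one-dimensional, so it suffices to show the stated $\Vec{x}$ is nonzero and orthogonal to every column of $Q$; and by skew-symmetry $\Vec{x}^{\top}Q = \Vec{0}$ iff $Q\Vec{x} = \Vec{0}$, so I only need to check $n$ inner products (or, using the cyclic symmetry that maps row $i$ to row $i+1$ with indices read mod $n$ and simultaneously permutes the exceptional position, really only a couple of inequivalent cases). I would verify the $j=1$ and $j=n$ columns by hand (these involve the exceptional entry), and then verify a ``generic'' column $j$ with $2 \le j \le n-1$ using the half-window count of $+\kappa$'s versus $-\kappa$'s weighted by the three-valued $\Vec{x}$: the bulk contributes $\mys \cdot \kappa \cdot(\#\{+\}-\#\{-\})$ which is $0$ by the balanced window, and the corrections at $i \in \{1,n,\frac{n+1}{2}\}$ (each replacing a $\mys$ weight by $\kappa$, $\kappa$, or $2\kappa-\mys$) must sum to zero — this is where the specific placement of $x_{(n+1)/2}$ matters.

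The main obstacle I anticipate is purely bookkeeping: getting the cyclic index arithmetic right at the boundary, especially which of $i=1$, $i=n$, $i=\frac{n+1}{2}$ falls in the $+\kappa$ window versus the $-\kappa$ window relative to a given column $j$, and correctly accounting for the exceptional pair $(1,n)$ in both the $j=1$ and $j=n$ computations (and confirming it does \emph{not} intrude on any intermediate column). Since the paper already defers Lemma~\ref{lem:Q-rank} to a supplemental program, I would similarly present the verification for general $n$ in the clean ``$\mys\Vec{1} + \Vec{w}$'' decomposition form and note that the claim can be machine-checked for small $n$ via the supplemental script \texttt{proof\_sol.py}; the structural argument above then upgrades it to all odd $n \ge 5$.
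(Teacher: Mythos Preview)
Your approach is essentially the paper's: both verify $(\Vec{x}^{\top}Q)_j=0$ column by column by isolating the three special indices $i\in\{1,\tfrac{n+1}{2},n\}$ from the uniform-$\mys$ bulk, and your $\Vec{x}=\mys\Vec{1}+\Vec{w}$ decomposition is just a tidy repackaging of the same case split the paper carries out explicitly for $j\in\{1,\tfrac{n+1}{2},n\}$ versus $2\le j\le\tfrac{n-1}{2}$ versus $\tfrac{n+3}{2}\le j\le n-1$. (There is a small sign slip in your displayed $\mys\Vec{1}^{\top}Q$ --- the coefficients at coordinates $1$ and $n$ are swapped --- but this is exactly the bookkeeping you already flagged and would be caught on execution.)
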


 Before the proof of Lemma~\ref{lem:Q-sol}, 
  we see an example in the case of $n=7$. 
 The vector $\Vec{x}$ given by \eqref{eq:Q-sol} is described as 
\begin{align*}
\Vec{x}^{\top}=\begin{pmatrix}
\kappa & \mys & \mys & 2\kappa-\mys & \mys & \mys & \kappa
\end{pmatrix}
\end{align*}
then 
\begin{align*}
\Vec{x}^{\top} Q
&=\begin{pmatrix}
\kappa & \mys & \mys & 2\kappa-\mys & \mys & \mys & \kappa
\end{pmatrix}
\begin{pmatrix}
0 & 1 & 1 &1&-1& -1 & -2+\frac{\mys}{\kappa} \\
-1 & 0 & 1 & 1 &1&-1& -1 \\
-1 & -1 & 0 & 1 & 1 &1&-1\\
-1 & -1 & -1 & 0 & 1 &1&1\\
1&-1 & -1 & -1 & 0 & 1 &1\\
1&1&-1 & -1 & -1 & 0 & 1 \\
2-\frac{\mys}{\kappa} & 1 &1&-1& -1 & -1 & 0
\end{pmatrix} \\
&= \begin{pmatrix}
0 & 0 & 0 & 0 & 0 & 0 & 0
\end{pmatrix}
\end{align*}
 and we see  $\Vec{x}^{\top}Q = \Vec{0}$. 
 Now we prove Lemma~\ref{lem:Q-sol}. 
\begin{proof}[Proof of Lemma~\ref{lem:Q-sol}]
We prove $(\Vec{x}^{\top} Q)_j = 0$ for $j=1,\ldots,n$. 
Firstly,  we remark that 
\begin{align*}
\textstyle
 \sum_{i=2}^{\frac{n-1}{2}} q_{ij} + \sum_{\frac{n+3}{2}}^{n-1} q_{ij}  = 
 \begin{cases}
 0 & (\mbox{for $j=1,\frac{n+1}{2}$ and $n$})\\
-\kappa & (\mbox{for $2 \leq j \leq \frac{n-1}{2},n$})\\
 \kappa & (\mbox{for $\frac{n+3}{2} \leq j \leq n-1$})
 \end{cases}
\end{align*}
holds. 
Then 
\begin{align*}
 \textstyle  (\Vec{x}^{\top} A)_1 
  &\textstyle = - \kappa x_{\frac{n+1}{2}} + (2\kappa-\mys)x_n + \sum_{i=2}^{\frac{n-1}{2}} a_{ij}x_i  + \sum_{\frac{n+3}{2}}^{n-1} a_{ij}x_i  \\
  &\textstyle = -\kappa(2\kappa-\mys) + (2\kappa-\mys)\kappa + \mys \left(\sum_{i=2}^{\frac{n-1}{2}} a_{ij}  + \sum_{\frac{n+3}{2}}^{n-1} a_{ij} \right)  =0 
\end{align*}
hold since $x_1=x_n=\kappa$, $x_{\frac{n+1}{2}} = 2\kappa-\mys$ and $x_i = \mys$ for other $i$.
Similarly, we have 
\begin{align*}
 &\textstyle (\Vec{x}^{\top} A)_n 
  =  (-2\kappa+\mys)x_1 + \kappa x_{\frac{n+1}{2}}  + \mys \left(\sum_{i=2}^{\frac{n-1}{2}} a_{ij}  + \sum_{\frac{n+3}{2}}^{n-1} a_{ij} \right) =0 \\
 &\textstyle (\Vec{x}^{\top} A)_{\frac{n+1}{2}}
  = - \kappa x_1 + \kappa x_2 + \mys \left(\sum_{i=2}^{\frac{n-1}{2}} a_{ij}  + \sum_{\frac{n+3}{2}}^{n-1} a_{ij} \right) =0 \\
 &\textstyle (\Vec{x}^{\top} A)_j 
  =  \kappa  x_1 - \kappa x_{\frac{n+1}{2}} + \kappa x_n + \mys \left(\sum_{i=2}^{\frac{n-1}{2}} a_{ij}  + \sum_{\frac{n+3}{2}}^{n-1} a_{ij} \right) =0 
 && \mbox{for $2 \leq j \leq \frac{n-1}{2}$}  \\
 &\textstyle (\Vec{x}^{\top} A)_j 
  = -\kappa x_1 + \kappa x_{\frac{n+1}{2}} - \kappa x_n + \mys \left(\sum_{i=2}^{\frac{n-1}{2}} a_{ij}  + \sum_{\frac{n+3}{2}}^{n-1} a_{ij} \right) =0 
 && \mbox{for $\frac{n+1}{2} \leq j \leq n-1$}  
\end{align*}
hold. 
\end{proof}
You may be confirmed with a supplemental python program proof\_sol.py Proposition~\ref{lem:Q-sol} for small $n$.

\subsection{Proof of Theorem~\ref{thm:ETlow}}
 We will use the technique developed by \citet{KCG16} in proof of Theorem~\ref{thm:ETlow}, 
  where we use the following two theorems\footnote{
  We also give Theorem~\ref{thm:ETlow-weak} in the next section for an alternative of Theorem~\ref{thm:ETlow}, 
   where the proof of Theorem~\ref{thm:ETlow-weak} might be more familiar to some readers. 
  }.  
\begin{theorem}[cf.~Lem.~19 in \cite{KCG16}]\label{lem:Kaufmann19}
 Let $\nu$ and $\nu'$ be two of bandit models, 
  where observations are iid respectively according to density functions $f_{\nu}$ and $f_{\nu'}$.  
 Let $L(t)$ be the log-likelihood ratio of the observations up to time $t$ under algorithm $\mathcal{A}$ 
  which is given by 
\begin{align}
 L(t) = \sum_{s=1}^t \log\left(\frac{f_{\nu}(x_s)}{f_{\nu'}(x_s)}\right). 
\end{align}
 Let $T$ be an almost surely finite stopping time with respect to $\mathcal{F}_t$. 
 Then, 
\begin{align}
 E_{\nu}[L(T)] \geq d(\mathbb{P}_{\nu}(\mathcal{E}),\mathbb{P}_{\nu'}(\mathcal{E})) 
\end{align}
 holds 
  for every event $\mathcal{E} \in \mathcal{F}_T$,  
where $d(p,q)= p\log{\frac{p}{q}}+(1-p)\log{\frac{1-p}{1-q}}$.  
\end{theorem}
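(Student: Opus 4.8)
The plan is to derive the bound from two standard ingredients: a change-of-measure identity valid at the stopping time $T$, and the data-processing inequality for $\dkl$.

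First I would record the change of measure. Since the algorithm $\mathcal{A}$ chooses its actions as a (possibly randomized but model-independent) function of the past observations, the law of the observation sequence up to any fixed time $t$ under $\nu$ is absolutely continuous with respect to its law under $\nu'$, with Radon--Nikodym derivative $W(t) = \exp(L(t))$; equivalently $\mathbb{P}_\nu(\mathcal{E}) = E_{\nu'}\!\left[\mathbbm{1}_{\mathcal{E}}\, W(t)\right]$ for every $\mathcal{E} \in \mathcal{F}_t$. Because $T$ is an almost surely finite stopping time, I would extend this to $T$: apply the identity at $t = T \wedge k$, use the decomposition of $\mathcal{E} \in \mathcal{F}_T$ along the sets $\{T \le k\}$, and let $k \to \infty$, using $\{T \le k\} \uparrow \Omega$ together with monotone/dominated convergence. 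This yields $\mathbb{P}_\nu(\mathcal{E}) = E_{\nu'}[\mathbbm{1}_{\mathcal{E}} W(T)]$ for $\mathcal{E} \in \mathcal{F}_T$, and in particular shows that $E_\nu[L(T)]$ equals $\dkl$ of the law of the observed trajectory up to $T$ under $\nu$ from that under $\nu'$.

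Second, I would apply the data-processing inequality. For a fixed event $\mathcal{E} \in \mathcal{F}_T$, the indicator $\mathbbm{1}_{\mathcal{E}}$ is a deterministic function of the observed trajectory, so its law under $\nu$ is $\mathrm{Bernoulli}(\mathbb{P}_\nu(\mathcal{E}))$ and its law under $\nu'$ is $\mathrm{Bernoulli}(\mathbb{P}_{\nu'}(\mathcal{E}))$; data processing then gives $d\big(\mathbb{P}_\nu(\mathcal{E}),\mathbb{P}_{\nu'}(\mathcal{E})\big) = \dkl\big(\mathrm{Bernoulli}(\mathbb{P}_\nu(\mathcal{E})) \,\|\, \mathrm{Bernoulli}(\mathbb{P}_{\nu'}(\mathcal{E}))\big) \le E_\nu[L(T)]$, which is the claim. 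Alternatively, one can bypass naming data processing and argue directly by Jensen's inequality applied to $x \mapsto x \log x$ and the conditional expectation of $W(T)$ given $\mathbbm{1}_{\mathcal{E}}$, recovering the same bound.

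The step I expect to be delicate is the extension of the change of measure from deterministic horizons to the random time $T$: one must check integrability of $W(T)$ under $\mathbb{P}_{\nu'}$ and that the truncate-and-pass-to-the-limit argument is legitimate — this is precisely where the hypothesis that $T$ is almost surely finite (not merely a stopping time) is used, and where one must be careful that ``the law of the observations up to $T$'' is a well-defined probability measure on the relevant sequence space. The remainder is bookkeeping. This is the content of Lemma 19 of \cite{KCG16}, whose argument I would follow.
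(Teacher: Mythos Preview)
Your proof sketch is correct and follows the standard argument (change of measure at a stopping time, then data processing / Jensen on the indicator of $\mathcal{E}$), which is indeed how Lemma~19 of \cite{KCG16} is proved. However, the paper itself does not give a proof of this statement: it is quoted verbatim as an external tool with the citation ``cf.~Lem.~19 in \cite{KCG16}'' and then applied in the proof of Theorem~\ref{thm:ETlow}. So there is no in-paper proof to compare against; your write-up simply supplies what the authors chose to import.
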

\begin{theorem}[cf.~(3) in \cite{KCG16}]\label{lem:2.4}
Let $d(p,q)= p\log{\frac{p}{q}}+(1-p)\log{\frac{1-p}{1-q}}$ then 
\begin{align*}
d(p,1-p) \geq \log \frac{5}{12p}
\end{align*}
 holds for any $p \in[0,1]$. 
\end{theorem}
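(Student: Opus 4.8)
The plan is to turn the statement into a one–variable calculus inequality and then pin down the unique minimum. Write $h(p):=d(p,1-p)=p\log\frac{p}{1-p}+(1-p)\log\frac{1-p}{p}=(2p-1)\log\frac{p}{1-p}$; this is nonnegative on $[0,1]$ because its two factors share a sign, and it equals $+\infty$ at $p\in\{0,1\}$ (using $0\log0=0$). Hence the claim is trivial at the endpoints, and also whenever $\log\frac{5}{12p}\le 0$, i.e.\ for $p\ge 5/12$; it therefore remains to bound $f(p):=h(p)-\log\frac{5}{12p}$ from below on $(0,1)$ (the genuinely nontrivial range being $p<1/2$), and the argument below in fact handles all of $(0,1)$ at once.

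First I would differentiate. Using $\frac{d}{dp}\log\frac{p}{1-p}=\frac1{p(1-p)}$ and the cancellation $(2p-1)\frac1{p(1-p)}+\frac1p=\frac1{1-p}$, one obtains
\[
 f'(p)=2\log\tfrac{p}{1-p}+\tfrac1{1-p},\qquad
 f''(p)=\tfrac{2}{p(1-p)}+\tfrac1{(1-p)^2}>0\quad\text{on }(0,1).
\]
So $f$ is strictly convex; since $f'(0^{+})=-\infty$ and $f'(1^{-})=+\infty$, the derivative $f'$ has a unique zero $p^{*}\in(0,1)$, at which $f$ attains its global minimum, so the whole problem reduces to showing $f(p^{*})\ge 0$. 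At $p^{*}$ the equation $f'(p^{*})=0$ reads $\log\frac{p^{*}}{1-p^{*}}=-\frac1{2(1-p^{*})}$; substituting this into $f(p^{*})=(2p^{*}-1)\log\frac{p^{*}}{1-p^{*}}-\log\frac{5}{12p^{*}}$ yields the closed form
\[
 f(p^{*})=\frac{1-2p^{*}}{2(1-p^{*})}+\log p^{*}+\log\tfrac{12}{5}=g(p^{*})+\log\tfrac{12}{5},
 \qquad g(p):=\frac{1-2p}{2(1-p)}+\log p.
\]

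To finish I would combine monotonicity of $g$ with a crude localisation of $p^{*}$. A direct computation gives $g'(p)=\frac1p-\frac1{2(1-p)^2}=\frac{(1-2p)(2-p)}{2p(1-p)^2}>0$ for $p\in(0,1/2)$, so $g$ is strictly increasing on $(0,1/2)$; also $p^{*}<1/2$ since $f'(1/2)=2>0$ and $f'$ is increasing. Because $f'$ is increasing with root $p^{*}$, any rational $p_0$ with $f'(p_0)\le 0$ satisfies $p_0\le p^{*}$, whence $f(p^{*})=g(p^{*})+\log\frac{12}{5}\ge g(p_0)+\log\frac{12}{5}$. Taking $p_0=8/25$: the inequality $f'(8/25)\le 0$ is equivalent to $\log\frac{17}{8}\ge\frac{25}{34}$, and $g(8/25)+\log\frac{12}{5}=\frac9{34}-\log\frac{125}{96}$, which is $\ge 0$ exactly when $\log\frac{125}{96}\le\frac9{34}$; both of these are true rational bounds on logarithms, provable from standard estimates for $\log2,\log3,\log5$.

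The main obstacle is that the inequality is essentially sharp: $d(p,1-p)-\log\frac{5}{12p}$ is minimised near $p\approx 0.323$ with value only about $7\times10^{-3}$. Consequently $p_0$ has to be chosen inside a narrow window around $0.32$ so that $f'(p_0)\le 0$ and $g(p_0)\ge\log\frac5{12}$ hold simultaneously, and the closing comparison $\log\frac{125}{96}\le\frac9{34}$ (margin $\approx 7\times10^{-4}$) must be verified with genuinely accurate rational approximations — the crude bound $\log(1+x)\le x$ is far too weak here, and even a few terms of the Mercator series barely suffice. Everything else is routine convexity and algebraic bookkeeping.
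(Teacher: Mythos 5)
Your proof is correct, and it is worth noting that the paper itself offers no proof of this statement: it is imported verbatim from Kaufmann--Capp\'e--Garivier (their equation (3), with $1/2.4=5/12$), so your self-contained calculus argument is genuinely new relative to the text. I checked the key steps: the identity $d(p,1-p)=(2p-1)\log\frac{p}{1-p}$, the cancellation giving $f'(p)=2\log\frac{p}{1-p}+\frac{1}{1-p}$, the convexity $f''>0$, the closed form $f(p^*)=g(p^*)+\log\frac{12}{5}$ with $g(p)=\frac{1-2p}{2(1-p)}+\log p$, the factorization $g'(p)=\frac{(1-2p)(2-p)}{2p(1-p)^2}$, and the two numerical reductions at $p_0=8/25$ (namely $\log\frac{17}{8}\ge\frac{25}{34}$ and $\log\frac{125}{96}\le\frac{9}{34}$) all hold. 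Your caveat about sharpness is apt and slightly understated: for the second inequality the margin is about $7\times 10^{-4}$, and the three-term Mercator upper bound $\log(1+x)\le x-\frac{x^2}{2}+\frac{x^3}{3}$ with $x=\frac{29}{96}$ gives $\approx 0.2656>\frac{9}{34}\approx 0.2647$, so it does \emph{not} close the gap; you need the five-term bound (which gives $\approx 0.2641$) or an equivalent sharper estimate, and this should be stated explicitly rather than left as ``standard estimates.'' The first inequality is comfortable once one writes $\log\frac{17}{8}=\log 2+\log(1+\frac{1}{16})$ and uses $\log 2>0.693$. With that one elaboration, the argument is complete and could serve as a proof of the theorem where the paper currently only cites it.
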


Now, we prove Theorem~\ref{thm:ETlow}. 
\begin{proof} of Theorem~\ref{thm:ETlow}. 
 Let $Q^+ = (q^+_{ij}) = Q(n,\kappa,\alpha)$ and let $Q^- = (q^-_{ij}) = Q(n,\kappa,-\alpha)$, 
  where we remark that $q^+_{ij} = q^-_{ij}$ unless $(i,j) = (1,n)$ for any $i<j$. 
 Note that $Q^+$ is non-redundant while $Q^-$ is redundant by Lemma~\ref{lem:Q-red}. 
 We are concerned with the following decision problem under the dueling bandit setting: 
   given an unknown matrix $A=(a_{ij})$ such that $A \in \{Q^+,Q^-\}$ and 
   decide whether $A$ is redundant or not 
   by observing the results of duels. 
 Here,  the result $X_{ij}$ ($i<j$) of duels is deterministic unless $(i,j) = (1,n)$
   so that $X_{ij} = a_{ij}$  at any time, and  
 the result $X_{1n}$ follows the normal distribution $\mathcal{N}(a_{1n}, 1)$ with mean $a_{1n}$ and variance $1$.  
 For convenience, 
  let $\nu^+ = \mathcal{N}(q^+_{1n}, 1)$ and $\nu^- = \mathcal{N}(q^-_{1n}, 1)$,  
  thus $X_{1n}$ follows $\nu^+$ if $A=Q^+$, otherwise $\nu^-$. 
 We note that the KL divergence between $\nu^+$ and $\nu^-$ 
   satisfies 
\begin{align}
   \dkl(\nu^+ \| \nu^-) = 2\alpha^2
 \label{eq:kl-normal}
\end{align}
    by Theorem~\ref{prop-normal}. 

  Suppose we have a $(\alpha,\delta)$-PAC algorithm for the problem for $0<\delta<1/3$.   
  Let $\mathcal{\mathcal{E}^+}$ (resp.\  $\mathcal{\mathcal{E}^-}$) 
   denote the event that algorithm determines ``$A$ is non-redundant (resp.\ redundant).'' 
  Let $\mathbb{P}_{\nu^+}(\mathcal{E}^+)$ (resp. $\mathbb{P}_{\nu^-}(\mathcal{E}^+)$) 
    denote the probability of $\mathcal{E}^+$ 
    under $\nu^+$ (resp.  $\nu^-$), 
    then the PAC algorithm must satisfy both of 
\begin{align}
  \mathbb{P}_{\nu^+}(\mathcal{E}^+) \geq 1-\delta 
  \qquad\mbox{and}\qquad
  \mathbb{P}_{\nu^-}(\mathcal{E}^+) \leq \delta. 
\label{cond:PAC}
\end{align}
 The following arguments follows the technique of Kaufman et al.~\cite{KCG16}. 
 Firstly, let $d(p,q)= p\log{\frac{p}{q}}+(1-p)\log{\frac{1-p}{1-q}}$ then 
\begin{align}
d\left(\mathbb{P}_{\nu^+}(\mathcal{E}^+), \mathbb{P}_{\nu^-}(\mathcal{E}^+) \right) 
&\geq d\left(1-\delta,\delta \right) 
&&(\mbox{by \eqref{cond:PAC}})\nonumber \\
&\geq \log{\frac{5}{12\delta}} 
&&(\mbox{by Theorem~\ref{lem:2.4}})
\label{eq:dlog}
\end{align} 
holds. 
 Next, let $\tau$ be a positive integer valued random variable 
    denoting the stopping time of the PAC algorithm. 
 We are concerned with the expectation of $\tau$ under $\nu^+$, 
  which is denoted by $\mathbb{E}_{\nu^+}[\tau]$. 
By Theorem~\ref{lem:Kaufmann19} 
\begin{align}
\mathbb{E}_{\nu^+}[L(\tau)] 
\geq d\left(\mathbb{P}_{\nu^+}(\mathcal{E}^+), \mathbb{P}_{\nu^-}(\mathcal{E}^+) \right) 
\label{eq:elt1}
\end{align}
 holds where 
\begin{align*}
L(t) = \log\left(\frac{f_{\nu+}(x_1,\ldots,x_t)}{f_{\nu-}(x_1,\ldots,x_t)}\right) 
=\log\left(\left(\frac{f_{\nu+}(x)}{f_{\nu-}(x)}\right)^t\right)
= t \log\left(\frac{f_{\nu+}(x)}{f_{\nu-}(x)}\right)
\end{align*}
for $0 <t <1$. 
Notice that
\begin{align}
\mathbb{E}_{\nu^+}[L(\tau)] 
&= \mathbb{E}_{\nu^+}[\tau] \mathbb{E}_{\nu^+}\left[\log\left(\frac{f_{\nu+}(x)}{f_{\nu-}(x)}\right) \right] 
&& (\mbox{by Wald's Lemma})\nonumber\\
&= \mathbb{E}_{\nu^+}[\tau] \dkl(\nu^+ \| \nu^-) && (\mbox{by definition of $\dkl$}) \nonumber\\
&= \mathbb{E}_{\nu^+}[\tau] 2\alpha^2 && (\mbox{by \eqref{eq:kl-normal}})
\label{eq:elt2}
\end{align}
holds. 
 Then, 
\begin{align}
\mathbb{E}_{\nu^+}[\tau]
&=\frac{1}{2\alpha^2}\mathbb{E}_{\nu^+}[L(\tau)] 
&&(\mbox{by \eqref{eq:elt2}})  \nonumber \\
&\geq \frac{1}{2\alpha^2}d\left(\mathbb{P}_{\nu^+}(\mathcal{E}^+), \mathbb{P}_{\nu^-}(\mathcal{E}^+) \right) 
&&(\mbox{by \eqref{eq:elt1}})\nonumber \\
&\geq \frac{1}{2\alpha^2}\log{\frac{5}{12\delta}} 
&&(\mbox{by \eqref{eq:dlog}})
\end{align} 
and we obtain the claim. 
\end{proof}

\subsection{Alternative to Theorem~\ref{thm:ETlow}}\label{apx:ETlow-weak}
 This section gives Theorem~\ref{thm:ETlow-weak} as an alternative to Theorem~\ref{thm:ETlow}, 
   where the proof might be more familiar to some readers. 
\begin{theorem}\label{thm:ETlow-weak}
 Let $\alpha$ and $\delta$ be fixed parameters respectively satisfying $0 < \alpha \ll 1$ and $0 < \delta \ll 1$. 
 Let $\tau$ denote the running time of 
   an arbitrary $(\alpha,\delta)$-PAC algorithm 
   that identifies whether an arbitrarily given $A$ is non-redundant.  
 Then, 
  $\tau\geq \frac{1}{2\alpha^2}\log{\frac{1}{\delta}}$. 
\end{theorem}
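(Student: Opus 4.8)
The plan is to run exactly the same reduction as in the proof of Theorem~\ref{thm:ETlow}, but to bound the statistical error with the Bretagnolle--Huber inequality (Theorem~\ref{thm:BHineq}) together with tensorization of the KL divergence, rather than with the change-of-measure lemma of \citet{KCG16} --- this is precisely the route already taken for $n=3$ in the proof of Lemma~\ref{lem:low-3x3}. For $n=3$ the claim follows from that $3\times3$ construction (taking the gap $\Delta$ equal to $\alpha$), so fix an odd $n\ge5$ and reuse the bad instance $Q=Q(n,\kappa,\mys)$ built in Section~\ref{apx:low}. Choose $\kappa$ small, say $\kappa=\tfrac14$, so that every entry of $Q(n,\kappa,\pm\alpha)$ lies in $[-1,1]$ and $0<\alpha<2\kappa$, and set $Q^+=Q(n,\kappa,\alpha)$, $Q^-=Q(n,\kappa,-\alpha)$. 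By Lemma~\ref{lem:Q-red}, $Q^+$ is non-redundant (and satisfies Condition~\ref{cond:A}) while $Q^-$ is redundant, hence $\alpha$-redundant; so any correct $(\alpha,\delta)$-PAC algorithm must, with probability at least $1-\delta$, output ``non-redundant'' on input $Q^+$ and ``$\alpha$-redundant'' on input $Q^-$.

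The crucial structural fact is that $Q^+$ and $Q^-$ differ only in the entries $(1,n)$ and $(n,1)$, with $q^{\pm}_{1n}=-2\kappa\pm\alpha$ and all other off-diagonal entries equal. For a lower bound we may therefore assume that under Condition~\ref{cond:X} every duel except $\{1,n\}$ returns its deterministic payoff, while the duel $\{1,n\}$ returns a sample of $\nu^{\pm}=\mathcal{N}(q^{\pm}_{1n},1)$; after $\tau$ rounds the learner's entire view is then $\tau$ i.i.d.\ samples of the $\{1,n\}$ duel, of law $\mathbb{P}_{\nu^{\pm}}=\nu_{\pm}^{\otimes\tau}$, since all other observations (and any internal randomness) can be coupled across the two hypotheses. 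Writing $\mathcal{E}^+$ for the event that the algorithm outputs ``non-redundant'', the PAC requirements give $\mathbb{P}_{\nu^+}(\mathcal{E}^+)\ge 1-\delta$ and $\mathbb{P}_{\nu^-}(\mathcal{E}^+)\le\delta$, so
\begin{align*}
 1-2\delta \;\le\; \mathbb{P}_{\nu^+}(\mathcal{E}^+)-\mathbb{P}_{\nu^-}(\mathcal{E}^+) \;\le\; \dtv\!\left(\mathbb{P}_{\nu^+},\mathbb{P}_{\nu^-}\right).
\end{align*}

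To finish, bound this total variation distance. By the chain rule, $\dkl(\mathbb{P}_{\nu^+}\,\|\,\mathbb{P}_{\nu^-})=\tau\,\dkl(\nu^+\,\|\,\nu^-)$, and by Theorem~\ref{prop-normal} two unit-variance Gaussians whose means differ by $2\alpha$ satisfy $\dkl(\nu^+\,\|\,\nu^-)=2\alpha^2$; hence $\dkl(\mathbb{P}_{\nu^+}\,\|\,\mathbb{P}_{\nu^-})=2\tau\alpha^2$. Substituting into the Bretagnolle--Huber inequality (Theorem~\ref{thm:BHineq}) and combining with the displayed chain of inequalities gives $1-2\delta\le 1-2\exp(-2\tau\alpha^2)$, i.e.\ $\exp(-2\tau\alpha^2)\le\delta$, and rearranging yields $\tau\ge\frac{1}{2\alpha^2}\log\frac1\delta$.

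I expect no conceptual obstacle here: the real content is already packaged in Lemma~\ref{lem:Q-red} (via Lemmas~\ref{lem:Q-rank} and \ref{lem:Q-sol}), which produces a non-redundant and a redundant matrix differing in a single off-diagonal pair. What remains is routine bookkeeping --- checking that $Q^+$ meets Condition~\ref{cond:A} and $Q^-$ is $\alpha$-redundant (both immediate from Lemma~\ref{lem:Q-red}), keeping the entries of $Q(n,\kappa,\pm\alpha)$ inside $[-1,1]$, and, if one wishes to allow a data-dependent stopping time rather than a fixed horizon, replacing the chain rule by Wald's identity, which is exactly what upgrades this argument to the $\mathbb{E}[\tau]$ statement of Theorem~\ref{thm:ETlow}.
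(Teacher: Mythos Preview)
Your proposal is correct and follows essentially the same route as the paper's own proof of Theorem~\ref{thm:ETlow-weak}: the same pair of instances $Q^{\pm}=Q(n,\kappa,\pm\alpha)$ from Section~\ref{apx:low}, the same reduction to a single informative duel $\{1,n\}$ with Gaussian noise, and the same chain of inequalities (PAC gap $\to$ total variation $\to$ Bretagnolle--Huber $\to$ tensorized KL $=2\tau\alpha^2$). The only cosmetic additions are your explicit handling of $n=3$ via Lemma~\ref{lem:low-3x3} and the remark about Wald's identity for the $\mathbb{E}[\tau]$ upgrade, neither of which departs from the paper's argument.
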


\begin{proof}[Proof of Theorem~\ref{thm:ETlow-weak}]
 Let $Q^+ = (q^+_{ij}) = Q(n,\kappa,\alpha)$ and let $Q^- = (q^-_{ij}) = Q(n,\kappa,-\alpha)$, 
  where we remark that $q^+_{ij} = q^-_{ij}$ unless $(i,j) = (1,n)$ for any $i<j$. 
 Note that $Q^+$ is non-redundant while $Q^-$ is redundant by Lemma~\ref{lem:Q-red}. 
 We are concerned with the following decision problem under the dueling bandit setting; 
   given an unknown matrix $A=(a_{ij})$ such that $A \in \{Q^+,Q^-\}$ and 
   decide whether $A$ is redundant or not 
   by observing the results of duels. 
 Here,  the result $X_{ij}$ ($i<j$) of duels is deterministic unless $(i,j) = (1,n)$
   so that $X_{ij} = a_{ij}$  at any time, and  
 the result $X_{1n}$ follows the normal distribution $\mathcal{N}(a_{1n}, 1)$ with mean $a_{1n}$ and variance $1$.  
 For convenience, 
  let $\nu^+ = \mathcal{N}(q^+_{1n}, 1)$ and $\nu^- = \mathcal{N}(q^-_{1n}, 1)$,  
  thus $X_{1n}$ follows $\nu^+$ if $A=Q^+$, otherwise $\nu^-$. 
 We note that the KL divergence between $\nu^+$ and $\nu^-$ 
   satisfies 
\begin{align}
   \dkl(\nu^+ \| \nu^-) = 2\alpha^2
 \label{eq:kl-normal2}
\end{align}
    by Theorem~\ref{prop-normal}. 

  Suppose we have a $(\alpha,\delta)$-PAC algorithm for the problem for $0<\delta<1/3$:  
  Let $\mathcal{\mathcal{E}^+}$ (resp.\  $\mathcal{\mathcal{E}^-}$) 
   denote the event that algorithm determines ``$A$ is non-redundant (resp.\ redundant).'' 
  Let $\mathbb{P}_{\nu^+}(\mathcal{E}^+)$ (resp. $\mathbb{P}_{\nu^-}(\mathcal{E}^+)$) 
    denote the probability of $\mathcal{E}^+$ 
    under $\nu^+$ (resp.  $\nu^-$), 
    then the PAC algorithm must satisfy both of 
\begin{align*}
  \mathbb{P}_{\nu^+}(\mathcal{E}^+) \geq 1-\delta 
  \qquad\mbox{and}\qquad
  \mathbb{P}_{\nu^-}(\mathcal{E}^+) \leq \delta
\end{align*}
 which implies 
\begin{align}
  \mathbb{P}_{\nu^+}(\mathcal{E}^+)-  
  \mathbb{P}_{\nu^-}(\mathcal{E}^+) \geq 1-2\delta. 
\label{eq:pdiff}
\end{align}
 Notice that 
\begin{align}
  \mathbb{P}_{\nu^+}(\mathcal{E}^+)-  
  \mathbb{P}_{\nu^-}(\mathcal{E}^+) 
  \leq \sup_{A \in E} | \mathbb{P}_{\nu^+} (A)-  \mathbb{P}_{\nu^-}(A) |
 = \dtv (\mathbb{P}_{\nu^+},\mathbb{P}_{\nu^-})
\label{eq:dtv}
\end{align}
holds. 
By the Bretagnolle–Huber inequality Theorem~\ref{thm:BHineq}, 
\begin{align}
\dtv (\mathbb{P}_{\nu^+},\mathbb{P}_{\nu^-}) \leq 1-2\exp(-\dkl(\mathbb{P}_{\nu^+} \| \mathbb{P}_{\nu^-}))
\label{eq:BH}
\end{align}
holds. 
By using the chain rule, we can prove that 
\begin{align}
\dkl(\mathbb{P}_{\nu^+} \| \mathbb{P}_{\nu^-}) = \tau \dkl(\nu^+ \| \nu^-) = 2 \tau \alpha^2
\label{eq:dklx}
\end{align}
where the last equality follows \eqref{eq:kl-normal2}. 
Thus \eqref{eq:pdiff}--\eqref{eq:dklx} imply 
\begin{align*}
 1-2\delta \leq 1-2\exp(-2 \tau \alpha^2)
\end{align*}
and hence 
\begin{align*}
 \tau \geq \frac{-\ln \delta}{2\alpha^2}. 
\end{align*}
\end{proof}

\subsection{Proof of Theorem~\ref{thm:low-varphi}}\label{apx:low2}
 This section proves Theorem~\ref{thm:low-varphi} using Theorem~\ref{thm:ETlow}. 
 For the purpose, 
  we give an upper bound of  $\varphi(Q)
  = \underset{j \in [n]}{\max} \ \! \underset{i \in [n]}{\max} \ \! 
           \left| \left(  \Vec{1}^{\top} Q_j^{-1} -\Vec{\pi} \right)_i \right| $ given in Proposition~\ref{prop:up_varphiQ}
   considering a lower bound of $|\det(Q_k)|$ given in Proposition~\ref{prop:Qdet} and 
   an upper bound of the absolute values of cofacters of $Q_k$ given in Proposition~\ref{prop:Qcof}. 
For convenience, 
  let $Q^{\mathrm{o}} = \frac{1}{\kappa}Q$, i.e., $Q = \kappa Q^{\mathrm{o}} $ 
 in the following arguments. 
\begin{proposition}\label{prop:Qdet}
If $\frac{|\mys|}{\kappa} \leq \frac{1}{n}$ then 
$|\det(Q_k)| \geq 3 \kappa^n  \frac{|\mys|}{\kappa} $ for any $k=1,\ldots,n$. 
\end{proposition}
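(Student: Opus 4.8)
The plan is to write $\det(Q_k)$ in closed form using the left null vector $\Vec{x}$ of $Q$ exhibited in Lemma~\ref{lem:Q-sol}, and then to estimate the three factors that appear. Since $Q_k$ is $Q$ with its $k$-th column replaced by $\Vec{1}$, expanding the determinant along that column (equivalently, applying the matrix determinant lemma together with $\det(Q)=0$, valid for any skew-symmetric matrix of odd order) gives
\begin{align*}
 \det(Q_k)=\Vec{e}_k^{\top}\,\mathrm{adj}(Q)\,\Vec{1}.
\end{align*}
By Lemma~\ref{lem:Q-rank} we have $\rank(Q)=n-1$ (the hypothesis $|\mys|\le\kappa/n$ forces $\mys\neq2\kappa$, and the case $\mys=0$ is trivial because then the right-hand side of the claim is $0$), so $\mathrm{adj}(Q)$ has rank one; it is symmetric, since $\mathrm{adj}(Q)^{\top}=\mathrm{adj}(Q^{\top})=(-1)^{n-1}\mathrm{adj}(Q)$, and each of its rows lies in the one-dimensional left kernel of $Q$, which Lemma~\ref{lem:Q-sol} identifies as $\mathrm{span}(\Vec{x})$. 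Hence $\mathrm{adj}(Q)=\lambda\,\Vec{x}\,\Vec{x}^{\top}$ for a nonzero scalar $\lambda$, and therefore
\begin{align*}
 \det(Q_k)=\lambda\,x_k\Bigl(\textstyle\sum_{i=1}^{n}x_i\Bigr)=\lambda\,x_k\bigl(4\kappa+(n-4)\mys\bigr),
\end{align*}
where $\sum_i x_i=4\kappa+(n-4)\mys$ is read off from \eqref{eq:Q-sol}.

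To pin down $\lambda$ I would compare the $(1,1)$ entries: $\lambda\kappa^{2}=\lambda x_1^{2}=(\mathrm{adj}(Q))_{11}$, and $(\mathrm{adj}(Q))_{11}$ is the $(1,1)$ minor of $Q$, i.e.\ the determinant of the submatrix on rows and columns $\{2,\dots,n\}$. The key observation is that this submatrix omits the single perturbed entry $q_{1n}=-2\kappa+\mys$, so it equals $\kappa^{\,n-1}\det(R)$ for a \emph{fixed}, $\mys$-independent $(n-1)\times(n-1)$ skew-symmetric matrix $R$ with entries in $\{0,\pm1\}$ (the principal submatrix of the rotational-tournament pattern). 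Thus $\lambda=\kappa^{\,n-3}\det(R)$; since $R$ is an integer skew-symmetric matrix of even order, $\det(R)=\mathrm{Pf}(R)^{2}$ is a nonnegative integer, and it is nonzero because $\lambda\neq0$, so $\det(R)\ge1$ and $\lambda\ge\kappa^{\,n-3}$.

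It remains to bound the other two factors. From \eqref{eq:Q-sol}, $x_k\in\{\kappa,\,2\kappa-\mys,\,\mys\}$, so the hypothesis $|\mys|\le\kappa/n$ gives $|x_k|\ge|\mys|$ for every $k$, and also $4\kappa+(n-4)\mys\ge\kappa\bigl(4-(n-4)/n\bigr)>3\kappa$. Combining these with $\lambda\ge\kappa^{\,n-3}$ and $0<\kappa\le1$,
\begin{align*}
 |\det(Q_k)|=\lambda\,|x_k|\,\bigl(4\kappa+(n-4)\mys\bigr)\ \ge\ \kappa^{\,n-3}\cdot|\mys|\cdot3\kappa\ =\ 3\kappa^{\,n-2}|\mys|\ \ge\ 3\kappa^{\,n}\,\tfrac{|\mys|}{\kappa},
\end{align*}
which is the claim.

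The one step that needs care is the determination of $\lambda$: one must verify cleanly that the $(1,1)$ principal submatrix of $Q$ is genuinely independent of $\mys$ and equals $\kappa^{\,n-1}$ times a fixed tournament matrix, so that $\lambda$ is controlled uniformly in $\mys$; once that is set up, positivity of $\det(R)$ comes for free from $\rank(Q)=n-1$. The remaining ingredients — the column expansion, the rank-one symmetric form of $\mathrm{adj}(Q)$, and the closing arithmetic — are routine. (A more hands-on alternative avoids $\mathrm{adj}(Q)$ by using the identity $\Vec{x}^{\top}Q_k=(\sum_i x_i)\Vec{e}_k^{\top}$ to replace a row of $Q_k$ by a multiple of $\Vec{e}_k$ and then expand, but the resulting $(n-1)$-minor depends on $\mys$ for $k\neq1$ and is more awkward to estimate.)
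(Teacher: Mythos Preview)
Your argument is correct and takes a genuinely different route from the paper. The paper works by direct case analysis on $k$: it evaluates $\det(Q^{\mathrm o}_k)$ explicitly for $k\in\{1,n\}$, for $k=\tfrac{n+1}{2}$, and for the remaining $k$ (the latter via cofactor expansion along the all-ones column), obtaining closed forms such as $(n-4)\tfrac{\mys}{\kappa}+4$ and $(n-4)\bigl(\tfrac{\mys}{\kappa}\bigr)^{2}+4\tfrac{\mys}{\kappa}$, and then bounds each one separately. Your adjugate identity $\det(Q_k)=\lambda\,x_k\bigl(\sum_i x_i\bigr)$ collapses all of this to a single formula whose dependence on $k$ runs entirely through $x_k$; the paper's three cases are nothing more than the three values $\kappa$, $2\kappa-\mys$, $\mys$ that $x_k$ takes in \eqref{eq:Q-sol}. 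What your approach buys is uniformity in $k$, a clean link to the Kaplansky structure of Theorems~\ref{thm:kap1}--\ref{thm:kap2}, and the reduction of all the determinant work to the single $\mys$-free $(1,1)$ principal minor, where integrality of the Pfaffian does the estimating for you. What the paper's approach buys is self-containment: its computation does not need Lemma~\ref{lem:Q-rank} as an input, whereas you rely on $\rank(Q)=n-1$ to ensure $\mathrm{adj}(Q)\neq0$. Your intermediate bound $3\kappa^{\,n-2}|\mys|$ is one factor of $\kappa$ sharper than required, which is harmless since $0<\kappa\ll1$.
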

\begin{proof}[Proof sketch]
Notice that $\det(Q_k) = \kappa^n\det(Q^{\mathrm{o}}_k)$. Then, we are concerned with $\det(Q^{\mathrm{o}}_k)$. 
We can prove by elementary row operations that 
 $\det(Q^{\mathrm{o}}_1) = \det(Q^{\mathrm{o}}_n) = (n-4)\frac{\mys}{\kappa} + 4$ and 
 $\det(Q^{\mathrm{o}}_{\frac{n+1}{2}}) = (n-4)\left(\frac{\mys}{\kappa}\right)^2 + 2(n-6)\frac{\mys}{\kappa} + 8$. 
 On condition that $\frac{|\mys|}{\kappa} \leq \frac{1}{n}$, 
  it is not difficult to see that 
  $|\det(Q^{\mathrm{o}}_1)| = | \det(Q^{\mathrm{o}}_n)| \geq |-1+4| = 3$ and 
  $|\det(Q^{\mathrm{o}}_{\frac{n+1}{2}})| \geq |0 - 2 + 8| = 6$, 
  which implies the claim for $k = 1,\frac{n+1}{2}$ and~$n$. 
 Consider the other cases of $k$. 
 Let $m_{ij}$ denote $(i,j)$-cofactor of $Q^{\mathrm{o}}_k$, 
  then notice that $\det(Q^{\mathrm{o}}_k) = \sum_{i=1}^n (Q^{\mathrm{o}}_k)_{ik} m_{ik}= \sum_{i=1}^n m_{ik}$ 
  by the cofactor expansion along the $k$-th column 
  since $(Q^{\mathrm{o}}_k)_{ik} = 1$ for $i=1,\ldots,n$. 
 We can prove in any case of $k \in [n] \setminus \{1,2,\frac{n+3}{2}\}$ that 
\begin{align*}
m_{ik} = \begin{cases}
 \frac{\mys}{\kappa} & (\mbox{if $i\in\{1,n\}$}) \\
 -\left(\frac{\mys}{\kappa}\right)^2 +2\frac{\mys}{\kappa}  & (\mbox{if $i=\frac{n+1}{2}$}) \\
 \left(\frac{\mys}{\kappa}\right)^2 & (\mbox{otherwise})
 \end{cases}
\end{align*}
hold for $i=1,\ldots,n$. 
Then, 
$\det(Q^{\mathrm{o}}_k) = \sum_{i=1}^n m_{ik} =  (n-4)\left(\frac{\mys}{\kappa}\right)^2  + 4\frac{\mys}{\kappa}$ 
 for any of those $k$.
Since the assumption $\frac{|\mys|}{\kappa} \leq \frac{1}{n}$, 
we obtain $|\det(Q^{\mathrm{o}}_k)| \geq \left|-(n-4)\frac{1}{n} \frac{\mys}{\kappa} + 4\frac{\mys}{\kappa} \right|$. 
Now the claim is easy. 
\end{proof}
\begin{proposition}\label{prop:Qcof}
 Let $m_{ijk}$ denote the $(i,j)$-cofactor of $Q_k$. 
 If $\frac{|\mys|}{\kappa} \leq \frac{1}{n}$ 
 then $\max_{i,j,k}|m_{ij}^{k}| \leq 4\kappa^{n-1}n$ for $k =5,7,9,\ldots,19$. 
\end{proposition}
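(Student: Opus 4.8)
The plan is to reduce every cofactor of $Q_k$ to a low-degree polynomial in the single parameter $t=\mys/\kappa$ (recall that the hypothesis gives $|t|\le 1/n$), exploiting that $Q=\kappa Q^{\mathrm{o}}$ has entries that are affine in $t$ and that $t$ enters only through the two positions $(1,n)$ and $(n,1)$. The natural first split is on whether the deleted column is the all-ones column. Write $m^k_{ij}=(-1)^{i+j}\det\bigl(Q_k[\overline{\{i\}},\overline{\{j\}}]\bigr)$. If $j=k$, deleting the $\Vec{1}$-column leaves an $(n-1)\times(n-1)$ block every column of which carries a factor $\kappa$, so $m^k_{ik}=\kappa^{n-1}C_{ik}(Q^{\mathrm{o}})$, where $C_{ik}$ denotes the $(i,k)$-cofactor; moreover $C_{ik}(Q^{\mathrm{o}})=C_{ik}(Q^{\mathrm{o}}_k)$ since the cofactor ignores column $k$, and this quantity was already computed in the proof of Proposition~\ref{prop:Qdet} to be one of $t$, $-t^2+2t$, $t^2$. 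On $|t|\le 1/n$ it has absolute value at most $t^2+2|t|\le 3/n$, so $|m^k_{ik}|\le 3\kappa^{n-1}/n\le 4\kappa^{n-1}n$ with room to spare.

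The main case is $j\ne k$. Here deleting column $j$ still leaves the all-ones column (restricted to the surviving rows, hence still all ones) together with $n-2$ columns each carrying a factor $\kappa$, so $m^k_{ij}=\pm\kappa^{n-2}\det(N_{ijk})$, where $N_{ijk}$ is the $(n-1)\times(n-1)$ matrix with one all-ones column and $n-2$ columns of $Q^{\mathrm{o}}$. Expanding $\det(N_{ijk})$ along the all-ones column writes it as a signed sum of $(n-2)\times(n-2)$ minors of $Q^{\mathrm{o}}$; since each such minor involves the positions $(1,n)$ and $(n,1)$ at most once, $\det(N_{ijk})$ is a polynomial in $t$ of degree at most $2$. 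The structural input I would use is Lemma~\ref{lem:Q-rank}: at $\mys=0$ (i.e.\ $t=0$) the matrix $Q^{\mathrm{o}}$ has rank strictly below $n-1$, hence at most $n-3$ (a skew-symmetric matrix has even rank and $n-1$ is even), so every $(n-2)\times(n-2)$ minor of $Q^{\mathrm{o}}$ vanishes there; the same holds at $t=2$. Consequently $\det(N_{ijk})=c_{ijk}\,t(t-2)$ for some constant $c_{ijk}$, and on $|t|\le 1/n$ one gets $|m^k_{ij}|=\kappa^{n-2}|c_{ijk}|\,|t|\,|t-2|\le 3|c_{ijk}|\,\kappa^{n-2}/n$.

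It then remains only to control $|c_{ijk}|$, and this is where I would invoke symbolic computation: for each odd $n$ from $5$ to $19$ one builds the $Q_k$ symbolically, reads off the coefficient of $t(t-2)$ in each $\det(N_{ijk})$, and checks the uniform bound that makes $3|c_{ijk}|\kappa^{n-2}/n\le 4\kappa^{n-1}n$ in the admissible range of $\kappa$ (equivalently $|c_{ijk}|=\Order(\kappa n^2)$, a concrete finite check since $n\le 19$ and $\kappa$ is a fixed small constant). I expect this final step to be the real obstacle: a by-hand, $n$-uniform bound on $c_{ijk}$ would be long and case-heavy because the minors in play are as large as $18\times 18$ and carry the cyclic band pattern of $Q^{\mathrm{o}}$, which is exactly why the statement is flagged as computer-aided. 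A secondary point that needs care is the bookkeeping of the mixture of the unscaled all-ones column with the $\kappa$-scaled columns in the $j\ne k$ case, and confirming the degeneracy of $Q^{\mathrm{o}}$ at both $t=0$ and $t=2$ from Lemma~\ref{lem:Q-rank} and the boundary behaviour of the construction, since the whole argument rests on $\det(N_{ijk})$ being divisible by $t(t-2)$.
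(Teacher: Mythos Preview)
The key structural step is wrong. You argue that at $t=0$ (and $t=2$) the matrix $Q^{\mathrm{o}}$ has rank at most $n-3$, reading Lemma~\ref{lem:Q-rank} as an if-and-only-if and then invoking the even-rank property of skew-symmetric matrices. But Lemma~\ref{lem:Q-rank} only asserts $\rank(Q)=n-1$ for $s\notin\{0,2\kappa\}$; the rank does \emph{not} drop at the endpoints. For $n=5$ at $t=0$ one checks directly (using Lemma~\ref{lem:Q-sol}) that the null space of $Q^{\mathrm{o}}$ is exactly one-dimensional, spanned by $(1,0,2,0,1)$, so $\rank(Q^{\mathrm{o}})=4=n-1$; concretely, the $3\times 3$ minor on rows $\{1,2,3\}$ and columns $\{2,3,4\}$ equals $-1\ne 0$. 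Hence the $(n-2)\times(n-2)$ minors of $Q^{\mathrm{o}}$ do not all vanish at $t=0$, the factorisation $\det(N_{ijk})=c_{ijk}\,t(t-2)$ is false, and the whole reduction collapses. What actually happens at $s\in\{0,2\kappa\}$ is that the unique null vector acquires a zero coordinate (this is what makes the game redundant there), not that the kernel enlarges.

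There is a second, independent difficulty you half-notice: in the $j\ne k$ branch you correctly pull out $\kappa^{n-2}$, but $\det(N_{ijk})$ has $\kappa$-free coefficients, so the requirement ``$|c_{ijk}|=\Order(\kappa n^{2})$'' cannot hold for small $\kappa$ unless $c_{ijk}=0$. In fact for $n=5$, $k=2$, $i=5$, $j=1$ one computes $\det(N_{ijk})=-3$ at $t=0$, giving $|m^{2}_{51}|=3\kappa^{3}$, which exceeds $4\kappa^{4}\cdot 5$ as soon as $\kappa<3/20$. This indicates the exponent of $\kappa$ in the stated bound is off by one; since Propositions~\ref{prop:Qdet} and~\ref{prop:Qcof} enter Proposition~\ref{prop:up_varphiQ} only through their ratio, such a common slip is harmless downstream, but it cannot be repaired inside your argument. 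For comparison, the paper's own proof attempts no structural factorisation at all: it simply evaluates every cofactor of $Q^{\mathrm{o}}_k$ symbolically for each odd $n\le 19$ and reads off a coefficient-wise polynomial bound in $t$.
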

\begin{proof}
Let $m^{\mathrm{o}}_{ijk}$ denote the $(i,j)$-cofactor of $Q^{\mathrm{o}}_k$. 
Notice that $m_{ijk} = \kappa^{n-1}m^{\mathrm{o}}_{kij}$
By our calculation with proof\_cof.py, 
\begin{align*}\textstyle
\max_{i,j,k}|m^{\mathrm{o}}_{ijk}| \leq 
(2n-8)\left(\frac{\mys}{\kappa}\right)^2 + (4n-16)\frac{|\mys|}{\kappa} + (4n-16). 
\end{align*}
Since $\frac{|\mys|}{\kappa} \leq \frac{1}{n}$, the claim is easily confirmed. 
\end{proof}

As a consequence of Propositions~\ref{prop:Qdet} and \ref{prop:Qcof}, we obtain 
\begin{proposition}\label{prop:up_varphiQ}
If $\frac{|\mys|}{\kappa}  \leq \frac{1}{n}$ then 
 $\varphi(Q) \leq  \frac{4n^2+1}{3|\mys|} $
for $n=5,7,\ldots,19$.
\end{proposition}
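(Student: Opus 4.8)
The plan is to read $\varphi(Q)$ off the inverse-matrix/cofactor formula and then substitute the determinant lower bound of Proposition~\ref{prop:Qdet} and the cofactor upper bound of Proposition~\ref{prop:Qcof}. As a standing remark, for $\mys \neq 0$ with $|\mys|/\kappa \le 1/n$ we have $\mys \notin \{0, 2\kappa\}$, so $\rank(Q) = n-1$ by Lemma~\ref{lem:Q-rank}; since the vector of Lemma~\ref{lem:Q-sol} solves $\Vec{x}^{\top} Q = \Vec{0}^{\top}$ and sums to $4\kappa + (n-4)\mys \neq 0$, Lemma~\ref{lem:rankAj} gives that every $Q_j$ is invertible, so $\varphi(Q)$ is well defined and the $\Vec{\pi} \in \mathcal{S}_n$ with $\Vec{\pi}^{\top} Q = \Vec{0}^{\top}$ is unique (the case $\mys = 0$ is vacuous, the claimed bound being $+\infty$ then).

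First I would simplify the vector inside $\varphi(Q)$. By Lemma~\ref{lem:xA_i}, $\Vec{\pi}^{\top} = \Vec{e}_j^{\top} Q_j^{-1}$ for every $j$, hence $\Vec{1}^{\top} Q_j^{-1} - \Vec{\pi}^{\top} = (\Vec{1} - \Vec{e}_j)^{\top} Q_j^{-1}$, so that
\begin{align*}
\left( \Vec{1}^{\top} Q_j^{-1} - \Vec{\pi} \right)_i = \sum_{l \neq j} (Q_j^{-1})_{li} = \frac{1}{\det(Q_j)} \sum_{l \neq j} m_{ilj},
\end{align*}
where $m_{ilj}$ is the $(i,l)$-cofactor of $Q_j$ and I used the adjugate identity $(Q_j^{-1})_{li} = m_{ilj}/\det(Q_j)$. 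Taking absolute values and maximizing over $i,j$,
\begin{align*}
\varphi(Q) = \max_{i,j} \frac{\bigl| \sum_{l \neq j} m_{ilj} \bigr|}{|\det(Q_j)|} \le \frac{(n-1)\,\max_{i,l,j}|m_{ilj}|}{\min_j |\det(Q_j)|}.
\end{align*}

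Then I would plug in the two propositions under $|\mys|/\kappa \le 1/n$: Proposition~\ref{prop:Qcof} gives $\max_{i,l,j}|m_{ilj}| \le 4\kappa^{n-1} n$ (for the listed odd $n$) and Proposition~\ref{prop:Qdet} gives $|\det(Q_j)| \ge 3\kappa^n (|\mys|/\kappa) = 3\kappa^{n-1}|\mys|$, so the powers of $\kappa$ cancel and
\begin{align*}
\varphi(Q) \le \frac{(n-1) \cdot 4\kappa^{n-1} n}{3\kappa^{n-1}|\mys|} = \frac{4n(n-1)}{3|\mys|} \le \frac{4n^2+1}{3|\mys|},
\end{align*}
the last step because $4n(n-1) = 4n^2 - 4n \le 4n^2 + 1$, which is the claimed bound.

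I do not expect a genuine obstacle at this step: it is a clean assembly of the two preceding estimates, the only delicate point being the index and scaling bookkeeping — keeping straight which row and column of $Q_j$ is deleted, and that the $\kappa$-powers in the numerator and in $|\det(Q_j)|$ really cancel. The actual difficulty lies upstream, in Propositions~\ref{prop:Qdet} and~\ref{prop:Qcof}: the determinant lower bound rests on the explicit row-reduction evaluations of $\det(Q^{\mathrm o}_k)$ handled case-by-case in $k$, while the uniform cofactor bound is obtained via the symbolic computation in \texttt{proof\_cof.py}, which is precisely why the statement is restricted to the finite list $n = 5, 7, \ldots, 19$.
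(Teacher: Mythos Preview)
Your proof is correct and follows essentially the same route as the paper: bound $\varphi(Q)$ by the cofactor-over-determinant formula and then plug in Propositions~\ref{prop:Qdet} and~\ref{prop:Qcof}. The one small difference is in how the $\Vec{\pi}$ term is handled: the paper bounds $|\pi_i|\le 1$ separately and keeps all $n$ cofactor terms, arriving at $\frac{4n^2}{3|\mys|}+1$ before using $3|\mys|\le 1$; you instead invoke $\Vec{\pi}^{\top}=\Vec{e}_j^{\top}Q_j^{-1}$ from Lemma~\ref{lem:xA_i} to write $\Vec{1}^{\top}Q_j^{-1}-\Vec{\pi}^{\top}=(\Vec{1}-\Vec{e}_j)^{\top}Q_j^{-1}$, which leaves only $n-1$ cofactor terms and yields the slightly tighter $\frac{4n(n-1)}{3|\mys|}$ without the extra step. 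Both are fine, yours is a touch cleaner.
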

\begin{proof}[Proof of Proposition~\ref{prop:up_varphiQ}]
\begin{align*}
\varphi(Q)
  &= \underset{j \in [n]}{\max} \ \! \underset{i \in [n]}{\max} \ \! 
           \left| \left(  \Vec{1}^{\top} Q_j^{-1} -\Vec{\pi} \right)_i \right| \nonumber\\
   &\leq \underset{j \in [n]}{\max} \ \! \underset{i \in [n]}{\max} \ \! 
           \left| \left(  \Vec{1}^{\top} Q_j^{-1} \right)_i + 1 \right| \nonumber\\
   &= \underset{j \in [n]}{\max} \ \! \underset{i \in [n]}{\max} \ \! 
           \left| \left(  \sum_{l=1}^n \frac{m_{il}^j}{\det(Q_j)} \right)_i + 1 \right| \nonumber\\
   &\leq \underset{j \in [n]}{\max} \ \! \underset{i \in [n]}{\max} \ \! 
            \left(  \sum_{l=1}^n \frac{\left|m_{il}^j\right|}{\left|\det(Q_j)\right|} \right)_i + 1  \nonumber\\
   &\leq   \sum_{l=1}^n \frac{4r^{n-1}n}{3r^n\frac{|\mys|}{r}}  + 1 
       &&(\mbox{by Propositions~\ref{prop:Qdet} and \ref{prop:Qcof}})    \nonumber\\
   &=   \sum_{l=1}^n \frac{4n}{3|\mys|}  + 1   \nonumber\\
   &=   \frac{4n^2+3|\mys|}{3|\mys|}  \nonumber\\
   &\leq   \frac{4n^2+1}{3|\mys|} 
       &&(\mbox{since $|\mys| \leq \frac{r}{n} \leq \frac{1}{5}$})
\end{align*}
and we obtain the claim. 
\end{proof}
By our calculation, we can observe 
 $\varphi(Q)
           \simeq \frac{2n-8}{|\mys|}$ for $n \geq 7$ asymptotic to $|\mys| \to 0$. 

\begin{proof}[Proof of Theorem~\ref{thm:low-varphi}]
 Let $Q^+ = Q(n,\kappa,\alpha)$ and $Q^- = Q(n,\kappa,-\alpha)$, and let $A \in \{Q^+,Q^-\}$. 
 Since $\varphi(A) \leq  \frac{4n^2+1}{3\alpha} $ by Proposition \ref{prop:up_varphiQ}, 
  we have
\begin{align}
   \frac{1}{\alpha} 
    \geq \frac{3}{4n^2+1}\varphi(A). 
\label{20250503a}
\end{align}
By Theorem~\ref{thm:ETlow}, 
\begin{align*}
 \mathbb{E}_A{[\tau]}
  & \geq \frac{1}{2\alpha^2}\log{\frac{5}{12\delta}}\\
  & \geq \frac{1}{2} \left(\frac{3}{4n^2+1}\right)^2\varphi(A)^2\log{\frac{5}{12\delta}} 
  &&(\mbox{by \eqref{20250503a}})
  \\
  & \geq \frac{1}{2} \left(\frac{1}{2n^2}\right)^2\varphi(A)^2\log{\frac{5}{12\delta}} 
  &&(\mbox{since $\frac{3}{4n^2+1} \geq \frac{1}{2n^2}$}) \\
  & = \frac{1}{8n^4} \varphi(A)^2\log{\frac{5}{12\delta}}
\end{align*} 
and we obtain the claim. 
\end{proof}

\subsection{Remark on $\pi_{\min}$ of $Q(n,\kappa,\mys)$}
 One might think that $1/\pi_{\min}$ is the true leading term of (the lower bound of) the sample complexity. 
 The following fact claims that $\pi_{\min}$ for $Q(n,\kappa,\mys)$ is not very small. 
\begin{proposition}\label{prop:Q-pi}
 If $0 < \mys \leq \frac{\kappa}{n}$ then 
  $Q(n,\kappa,\mys)$ has the unique Nash equilibrium $\Vec{\pi}=(\pi_1,\ldots,\pi_n)$ and 
\begin{align*}
 \pi_{\min}  \geq \frac{\mys}{5\kappa}
\end{align*} 
 holds where $\pi_{\min} = \min_i \pi_i$. 
\end{proposition}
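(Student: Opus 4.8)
The plan is to read the Nash equilibrium off directly from the kernel vector already exhibited in Lemma~\ref{lem:Q-sol}, and then estimate its smallest entry. First I would note that the hypothesis $0 < \mys \leq \kappa/n$ entails $0 < \mys < 2\kappa$, so $Q$ is non-redundant by Lemma~\ref{lem:Q-red} (equivalently, by Theorem~\ref{thm:kap1}, using $\rank(Q)=n-1$ from Lemma~\ref{lem:Q-rank} together with the strictly positive solution $\Vec{x}$ of $\Vec{x}^{\top}Q=\Vec{0}^{\top}$ from Lemma~\ref{lem:Q-sol}); since $\rank(Q)=n-1$ makes $\ker(Q^{\top})$ one-dimensional, the Nash equilibrium is unique and equals $\Vec{\pi}=\frac{1}{\sum_{i=1}^n x_i}\,\Vec{x}$.

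Next I would compute the normalizing constant by counting multiplicities in \eqref{eq:Q-sol}. For odd $n \geq 5$ the index $\frac{n+1}{2}$ is distinct from $1$ and $n$, so $\Vec{x}$ takes the value $\kappa$ at the two indices $i\in\{1,n\}$, the value $2\kappa-\mys$ at the single index $i=\frac{n+1}{2}$, and the value $\mys$ at the remaining $n-3$ indices; hence
\[
\sum_{i=1}^n x_i \;=\; 2\kappa + (2\kappa-\mys) + (n-3)\mys \;=\; 4\kappa + (n-4)\mys .
\]
For the numerator I would identify $\min_i x_i$: from $\mys \leq \kappa/n \leq \kappa$ we get both $\mys \leq \kappa$ and $\mys \leq 2\kappa-\mys$, so $\min_i x_i = \mys$ and therefore $\pi_{\min} = \dfrac{\mys}{4\kappa + (n-4)\mys}$.

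Finally I would bound the denominator from above: since $n\geq 5$ and $\mys\leq \kappa/n$, we have $(n-4)\mys \leq \frac{n-4}{n}\kappa \leq \kappa$, hence $4\kappa + (n-4)\mys \leq 5\kappa$, which yields $\pi_{\min} \geq \frac{\mys}{5\kappa}$, as claimed. I do not expect a real obstacle here---the argument is elementary bookkeeping---but the step that deserves care is the appeal to Lemma~\ref{lem:Q-red}/Theorem~\ref{thm:kap1}: it is what certifies both that $Q$ is genuinely non-redundant in the stated range of $\mys$ and that \emph{every} Nash equilibrium coincides with the normalized kernel vector, so that $\Vec{\pi}$ can be taken verbatim from Lemma~\ref{lem:Q-sol} rather than found by solving a linear program.
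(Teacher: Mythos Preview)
Your proof is correct and follows essentially the same route as the paper: normalize the kernel vector from Lemma~\ref{lem:Q-sol} to obtain $\Vec{\pi}=\frac{1}{4\kappa+(n-4)\mys}\Vec{x}$, identify $\min_i x_i=\mys$, and bound the denominator by $5\kappa$. The paper states the last inequality without the intermediate step $(n-4)\mys\leq\kappa$, but your added justification is exactly what is needed.
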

\begin{proof}
 It is easy from Lemmas~\ref{lem:Q-sol} and \ref{lem:Q-red} to see that 
 $\Vec{\pi} = \frac{1}{4\kappa+(n-4)\mys} \Vec{x}$ 
 is the unique Nash equilibrium of $Q(n,\kappa,\mys)$ if $0 < \mys < 2\kappa $. 
 Recalling \eqref{eq:Q-sol}, 
  notice that $\min\{\kappa,2\kappa-\mys,\mys\} = \mys$ holds 
  if $0 < \mys \leq \frac{\kappa}{n}$.  
 Then, 
\begin{align*}
 \pi_{\min} = \frac{\mys}{4\kappa+(n-4)\mys} \geq \frac{\mys}{5\kappa}
\end{align*} 
 holds, and we obtain the claim. 
\end{proof}

 By Proposition~\ref{prop:Q-pi}, 
  we observe
  $\pi_{\min} \geq \frac{1}{5n}$ 
  holds for the Nash equilibrium of $A= Q(n,\kappa,\frac{\kappa}{n})$. 
 On the other hand $\varphi(A) \to \infty$ as $\alpha \to 0$ (by setting $\kappa \to 0$). 
 This means that $\varphi(A)^2$ term of the lower bound 
   is not replaced with a function of $1/\pi_{\min}^2$.

\end{document}